	\newtheorem{lemma}{Lemma}
	\newtheorem{theorem}{Theorem}
	\newtheorem{definition}{Definition}
	\newcommand{\var}{\textit{v}}
	\newcommand{\zero}{\mathtt{0}}
	\newcommand{\suc}{\mathtt{s}}
	\newcommand{\nil}{\lbrack \rbrack}
	\newcommand{\true}{\mathtt{tt}}
	\newcommand{\false}{\mathtt{ff}}
	\newcommand{\pzero}{0}
	\newcommand{\parr}{\mid}
	\newcommand{\vect}[1]{\tilde{#1}}
	\newcommandx{\pserv}[3][1 = a, 2 = \vect{\var},3 = P, usedefault=@]{!#1(#2).#3}
	\newcommandx{\pin}[3][1 = a, 2 = \vect{\var},3 = P, usedefault=@]{#1(#2).#3}
	\newcommandx{\pout}[2][1 = a, 2 = \vect{e}, usedefault=@]{\overline{#1} \langle #2 \rangle}
	\newcommandx{\pnu}[1][1 = a]{(\nu #1)}
	\newcommandx{\pifn}[4][1 = e, 2 = P, 3 = x, 4 = Q, usedefault = @]{\mathtt{match}~#1~\mathtt{with}~\{\zero \mapsto #2 \mid \suc(#3) \mapsto #4 \}}
	\newcommandx{\pif}[3][1 = e, 2 = P, 3 = Q, usedefault = @]{\mathtt{if}~#1~\mathtt{then}~#2~\mathtt{else}~#3}
	\newcommandx{\pifl}[5][1 = e, 2 = P, 3 = x, 4 = y, 5 = Q, usedefault = @]{\mathtt{match}~#1~\mathtt{with}~\{\nil \mapsto #2 \mid #3::#4 \mapsto #5 \}}
	\newcommand{\tick}{\mathtt{tick}} 
	\newcommand{\congr}{\equiv}
	\newcommand{\red}{\rightarrow}
	\newcommandx{\psub}[3][1 = P, 2 = \vect{\var}, 3 = \vect{e},usedefault=@]{#1 [#2 := #3]}
	\newcommand{\tred}{\Rightarrow}
	\newcommand{\gr}[1]{\lfloor #1 \rfloor}
	\newcommand{\tickred}{\overset{tick}{\rightarrow}}
	\newcommand{\nat}{\mathsf{Nat}}
	\newcommand{\lis}{\mathsf{List}}
	\newcommand{\bool}{\mathsf{Bool}}
	\newcommandx{\ch}[1][1 = \vect{T}]{\mathsf{ch}(#1)}
	\newcommandx{\ich}[1][1 = \vect{T}]{\mathsf{in}(#1)}
	\newcommandx{\och}[1][1 = \vect{T}]{\mathsf{out}(#1)}
	\newcommand{\subtype}{\sqsubseteq}
	\newcommand{\IB}{\mathcal{B}}
	\newcommandx{\Isum}[3][1 = i, 2 = I, 3 = J,usedefault = @]{\sum_{#1 \le #2} #3}
	\newcommand\sem[2][]{\llbracket #2 \rrbracket_{#1}}
	\newcommand\IV{\mathcal{V}}
	\newcommandx{\Isub}[3][1 = I, 2 = i, 3 = J, usedefault = @]{#1 \lbrace #3 / #2 \rbrace }
	\newcommand{\rel}{\bowtie}
	\newcommandx{\Inat}[2][1 = I, 2 = J, usedefault=@]{\nat \lbrack #1 , #2 \rbrack}
	\newcommand{\Ibool}{\bool}
	\newcommandx{\Ilis}[3][1 = I, 2 = J, 3 = \IB, usedefault=@]{\lis \lbrack #1 , #2 \rbrack(#3)}
	\newcommandx{\Ich}[2][1 = I, 2 = \vect{T}, usedefault=@]{\mathsf{ch}_{#1}(#2)}
	\newcommandx{\Iich}[2][1 = I, 2 = \vect{T}, usedefault=@]{\mathsf{in}_{#1}(#2)}
	\newcommandx{\Ioch}[2][1 = I, 2 = \vect{T}, usedefault=@]{\mathsf{out}_{#1}(#2)}
	\newcommandx{\Iserv}[4][1 = I, 2 = \vect{i}, 3 = K, 4 = \vect{T}, usedefault=@]{\forall_{#1} #2. \mathsf{serv}^{#3}(#4)}
	\newcommandx{\Iiserv}[4][1 = I, 2 = \vect{i}, 3 = K, 4 = \vect{T}, usedefault=@]{\forall_{#1} #2. \mathsf{iserv}^{#3}(#4)}
	\newcommandx{\Ioserv}[4][1 = I, 2 = \vect{i}, 3 = K, 4 = \vect{T}, usedefault=@]{\forall_{#1} #2. \mathsf{oserv}^{#3}(#4)}
	\newcommandx{\Idecr}[3][1 = \Gamma,2 = I, 3 , usedefault=@]{\langle #1 \rangle_{- #2}^{#3}}
	\newcommandx{\Itype}[5][1 = \phi, 2 = \Phi, 3 = \Gamma, usedefault = @]{#1;#2;#3 \vdash #4 \lhd #5}
	\newcommandx{\Ietype}[5][1 = \phi, 2 = \Phi, 3 = \Gamma, usedefault = @]{#1;#2;#3 \vdash #4 : #5}
	\newcommandx{\Iincr}[2][1 = I]{#2_{+ #1}}
	\newcommand{\IS}{\mathcal{S}}
	\newcommandx{\Idisc}[2][1 = \IS]{\mathtt{disc}_{#1}(#2)} 
	\newcommandx{\IT}[1][1]{\mathcal{T}_{out}^{#1}}
	\newcommand{\IU}{\mathcal{U}}
	\newcommandx{\Tch}[1][1 = \vect{T}, usedefault=@]{\mathsf{ch}(#1)}
	\newcommandx{\Tich}[1][1 = \vect{T}, usedefault=@]{\mathsf{in}(#1)}
	\newcommandx{\Toch}[1][1 = \vect{T}]{\mathsf{out}(#1)}
	\newcommandx{\Tserv}[3][1 = \vect{i}, 2 = K, 3 = \vect{T}, usedefault=@]{\forall #1. \mathsf{serv}^{#2}(#3)}
	\newcommandx{\Tiserv}[3][1 = \vect{i}, 2 = K, 3 = \vect{T}, usedefault=@]{\forall #1. \mathsf{iserv}^{#2}(#3)}
	\newcommandx{\Toserv}[3][1 = \vect{i}, 2 = K, 3 = \vect{T}, usedefault=@]{\forall #1. \mathsf{oserv}^{#2}(#3)}
	\newcommand{\NN}{\mathbb{N}}
	\newenvironment{framed}[0]{\begin{boxedminipage}{\linewidth}}{\end{boxedminipage}}
	\newcommand{\midd}{\; \; \mbox{\Large{$\mid$}}\;\;}
	\newcommand{\vvskip}{\vspace{3mm}}
	\newcommand{\leftshift}[1]{\mathmakebox[\textwidth][c]{#1}}
	\newcommand{\leftshiftt}[1]{\makebox[\textwidth][c]{#1}}
	\title{Types for Parallel Complexity in the Pi-calculus}
	\date{} 
	\author[1]{Patrick Baillot}
	\author[1]{Alexis Ghyselen}
	\affil[1]{Univ Lyon, CNRS, ENS de Lyon, Universite Claude-Bernard Lyon 1, LIP \\ F-69342, Lyon Cedex 07, France}
\begin{document}
	\maketitle
	
	\section{Introduction} 
	
\paragraph{Context} Certifying time complexity bounds for a program
is a challenging question as it deals with properties which are
important for predicting quantitative behaviour of software but which
are of course undecidable. In the setting of sequential functional 
programs this problem, as well as the related one of time complexity 
inference, have been addressed using type systems (see e.g. 
\cite{DBLP:conf/cav/0002AH12,DalLagoGaboardiLinearDependentTypes,AvanziniDalLagoAutomatingSizedTypeInference}).
These settings provide rich type systems such that if a program can
be assigned a type, then one can extract from the type derivation a
complexity bound for its execution on any input. The type system
itself thus provides a complexity certification procedure, and if a type inference
algorithm is also provided one obtains a complexity inference
procedure. It is then quite natural to wonder whether similar kinds of analysis could be
carried out for languages that can express parallel computation and
concurrent behaviours, such as process calculi and in particular the
$\pi$-calculus. In such a setting however sequential time complexity
is not sufficient, and one would be more naturally interested in
handling notions of parallel complexity, such as the \textit{span}
and the \emph{work} of the system. This is the problem we wish to
tackle in the present work.
	
\paragraph{Approach} 	
	We want to be able to choose for different examples of
        systems the cost model we are interested in, e.g. should we
        count the number of emissions of messages, receptions,
        comparisons etc.? For this reason it will be convenient to
        consider an instrumented language, with a $tick$ construction
        that we will use to mark the operations we want to count.

        A second requirement that we have is that we wish to derive
        complexity bounds which are parametric with respect to the
        size of inputs, for instance which depend on the length of a
        list. For that it will be useful to have a language of types
        that can carry information about sizes, and this is why we
        take inspiration from size types. So data-types will be
        annotated with an index (or parameter) which will provide
        some information on the size of values. Moreover, as we want
        to bound the execution time and as we are in a setting of
        communication through channels, a second ingredient that we
        will use is that the typing of a channel will carry
        information about \textit{when} communication will be
        performed on this channel. In order to be able to
        reason differently on bounds for emission and reception it
        will be convenient for us to use the approach of input/output
        types for $\pi$-calculus.
		
			\paragraph{Contributions}  In this paper we
                        define two type systems for the $\pi$-calculus
                        which  provide upper bounds respectively on
                        the span and on the work complexity of a
                        term. For that we first define a small-step operational
                        semantics on the  $\pi$-calculus with $tick$,
                        which allows to characterize the
                        span. Intuitively it performs reduction with
                        maximal parallelism. We then introduce a type
                        system of size types with temporal
                        information. Typing judgements assign a
                        complexity $K$ to the typed process. We
                        prove a soundness result, stating that if
                        a  process $P$ can be typed and assigned a
                        complexity $K$, then $K$ bounds its reduction
                        time in the operational semantics, hence its
                        span complexity. We also describe a second
                        small-step operational semantics corresponding
                        to the work, and a variant of the first
                        type system which provides a bound on the work
                        complexity. 
		
		\paragraph{Related Work} To the author's knowledge,
                the first work to capture parallel complexity by means
                of type was given by Kobayashi
                \cite{KobayashiTypeSystemLockFree}. In this work, only
                the parallel communication complexity is considered
                and the notion of time appears both in syntax and
                types. A type contains \emph{usages}, that is
                intuitively a detailed description of its
                behaviour. With this, there is no need for time
                linearity as in our work. Moreover, the use of
                dependent types to have an extension of the type
                system was also proposed but not detailed. Then, Das,
                Hoffmann and Pfenning proposed a type system with
                temporal session types
                \cite{DasHoffmannPfenningTemporalSessionTypes,DasHoffmannPfenningLICS2018}
                to capture several notions of complexity. In this
                work, time and complexity are captured in the type
                system by the use of temporal logic  time
                modalities.
However, the use of session-types imposes a strict linearity that we
believe restricts the expressiveness of their programs. 

		The methodology of our work is inspired by implicit
                computational complexity, which aims at characterizing
                complexity classes by means of dedicated programming
                languages or logics, for instance by providing
                sequential languages whose programs characterize exactly the
                class of FPTIME functions.  Some results have already been
                adapted to the concurrent case, but mainly for the
                work complexity and not for the
                span, e.g. \cite{MadetAmadioELL} for a lambda-calculus with
                multithreading,
                \cite{DiGiamberardinoDalLagoSessionTypePolynomial} for
                a language of session types,
                \cite{DemangeonYoshidaSafePiCalculus} for
                $\pi$-calculus and
                \cite{DalLagoMartiniSangiorgi16} for  a higher-order
                  $\pi$-calculus.  Contrarily to those works we do not
                  restrict to a particular complexity class (like FPTIME)  and we
                  handle the case of the span. 

Technically, the types we use are inspired from linear dependent
types, introduced by Dal Lago and Gaboardi
\cite{DalLagoGaboardiLinearDependentTypes}. Those are one of the many
variants and descendants of size types, which were first introduced by Hughes, Pareto and Sabry 
		\cite{HughesParetoSabrySizedTypes}. 

	\section{Pi-calculus with input/output types}
	\label{s:Simpletypes}
	
	We present here a classical $\pi$-calculus with input/output types. More detail about those types or $\pi$-calculus can be found in \cite{SangiorgiWalkerPi}. The set of \emph{variables}, \emph{expressions} 
	and \emph{pre-processes} are defined by the following grammar.  
	\begin{align*}
		\var & := x,y,z \midd a,b,c \\
		e &:= \var \midd \zero \midd \suc(e) \midd \nil \midd e::e' \midd \true \midd \false \\
		P,Q &:= \pzero \midd P \parr Q \midd \pserv \midd \pin \midd \pout \midd \pnu P \midd \pifn \\
		&\midd \pifl \midd \pif 
	\end{align*}
	
	Variables $x,y,z$ denote \emph{base type variables}, so they represent integers, lists or booleans. Variables $a,b,c$ denote \emph{channel variables}. The notation $\vect{\var}$ stands for a sequence of variables $\var_1,\var_2,\dots,\var_k$. In the same way, $\vect{e}$ is a sequence of expressions. We work up to $\alpha$-renaming, and we use $\psub$ to denote the substitution of the free variables $\vect{\var}$ in $P$ by $\vect{e}$. 
	
	We define on those pre-processes a congruence relation $\congr$, such that this relation is the least congruence relation closed under:
	\[ P \parr \pzero \congr P \qquad P \parr Q \congr Q \parr P \qquad P \parr (Q \parr R) \congr (P \parr Q) \parr R \qquad \pnu[a] \pnu[b] P \congr \pnu[b] \pnu[a]P \]
	\[ \pnu[a](P \parr Q) \congr \pnu[a] P \parr Q~(\text{when } a \text{ is not free in } Q) \] 
	
	Note that the last rule can always be made by renaming. Also, note that contrary to other congruence relation for the $\pi$-calculus,
	we do not consider the rule for replicated input ($!P \congr !P | P$), and $\alpha$-conversion is not an explicit rule in the congruence. 
	With this definition, we can give a canonical form for pre-processes, as in \cite{KobayashiLinearityPiCalculus}.
	
	\begin{definition}[Guarded Pre-processes and Canonical Form]
		A pre-process is \emph{guarded} if it is an input, a replicated input, an output or a conditional. We say that a pre-process is in \emph{canonical form} if it has the form 
		\[ \pnu[\vect{a}] (P_1 \parr \cdots \parr P_n) \]
		with $P_1,\dots,P_n$ that are guarded pre-processes. 
	\label{d:guardedcanonical}
	\end{definition}

	And we now show that all processes have a somewhat unique canonical form. 
	
	\begin{lemma}[Existence of Canonical Form]
		For any pre-process $P$, there is a $Q$ in canonical form such that $P \congr Q$.
	\label{l:existcanonical}
	\end{lemma}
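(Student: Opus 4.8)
The plan is to proceed by structural induction on the pre-process $P$, following the grammar. The base cases are immediate once one reads the canonical form $\pnu[\vect{a}](P_1 \parr \cdots \parr P_n)$ with the conventions that the restriction prefix $\vect{a}$ may be empty and that the empty parallel product ($n = 0$) denotes $\pzero$. Under these conventions $\pzero$ is in canonical form by taking $\vect{a}$ empty and $n = 0$, and each guarded pre-process---an input $\pin$, a replicated input $\pserv$, an output $\pout$, or a conditional $\pifn$, $\pifl$, $\pif$---is in canonical form by taking $\vect{a}$ empty, $n = 1$, and the single component to be the pre-process itself. In all these cases $P \congr P$ suffices.

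For the restriction case $\pnu[a] P$, the induction hypothesis gives a canonical $Q \congr P$ of the form $\pnu[\vect{b}](P_1 \parr \cdots \parr P_n)$; prepending the bound name yields $\pnu[a] P \congr \pnu[a] \pnu[\vect{b}](P_1 \parr \cdots \parr P_n)$, which is again canonical. The interesting case is parallel composition $P \parr Q$. The induction hypothesis provides canonical forms $P \congr \pnu[\vect{a}](P_1 \parr \cdots \parr P_n)$ and $Q \congr \pnu[\vect{b}](Q_1 \parr \cdots \parr Q_m)$. Working up to $\alpha$-renaming, I would first rename the bound names $\vect{a}$ and $\vect{b}$ so that they are pairwise distinct and none of them occurs free in the opposite component. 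Repeated use of the scope-extrusion rule $\pnu[a](R \parr S) \congr \pnu[a] R \parr S$, whose side condition is then satisfied, pulls every restriction to the front, while associativity and commutativity of $\parr$ concatenate the two lists of guarded components. This gives $P \parr Q \congr \pnu[\vect{a}]\pnu[\vect{b}](P_1 \parr \cdots \parr P_n \parr Q_1 \parr \cdots \parr Q_m)$, which is canonical since every listed component is guarded.

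The main obstacle is the bookkeeping in the parallel case: the scope-extrusion rule applies only when the extruded name is not free in the part left outside the restriction, so the preliminary $\alpha$-renaming must be justified precisely and one must verify that each application of the rule respects its side condition after renaming. The degenerate subcases where $n$ or $m$ is zero are absorbed by the unit law $R \parr \pzero \congr R$ together with commutativity of $\parr$; the commutativity rule for restrictions is not needed for existence, since the two restriction prefixes simply concatenate into a single prefix over $\vect{a}, \vect{b}$.
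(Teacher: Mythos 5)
Your proof is correct and follows essentially the same route as the paper: structural induction on $P$, with parallel composition as the only interesting case, handled by renaming bound names to ensure disjointness and freshness, then applying scope extrusion together with associativity and commutativity of $\parr$. The only cosmetic difference is that the paper performs the renaming globally once at the start (justified by the convention of working up to $\alpha$-renaming), whereas you rename locally inside the parallel case; your explicit treatment of the degenerate cases ($\pzero$, empty prefixes) is a detail the paper leaves implicit.
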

	
	\begin{proof}
		Le us suppose that, by renaming, all the introduction of new variables have different names and that they also differ from the free variables already in $P$. We can then proceed by induction on the structure of $P$. The only interesting case is for parallel composition. Suppose that \[ P \congr \pnu[\vect{a}] (P_1 \parr \cdots \parr P_n) \qquad Q \congr \pnu[\vect{b}] (Q_1 \parr \cdots \parr Q_m) \]
		With $P_1,\dots,P_n,Q_1,\dots,Q_m$ guarded pre-processes. Then, by hypothesis on the name of variables, we have $\vect{a}$ and $\vect{b}$ disjoint and $\vect{a}$ is not free in $Q$, as well as $\vect{b}$ is not free in $P$. So, we obtain 
		\[P \parr Q \congr \pnu[\vect{a}] \pnu[\vect{b}] (P_1 \parr \cdots \parr P_n \parr Q_1 \parr \cdots \parr Q_m) \]
	\end{proof}
	
	\begin{lemma}[Uniqueness of Canonical Form]
		If \[ \pnu[\vect{a}] (P_1 \parr \cdots \parr P_n) \congr \pnu[\vect{b}] (Q_1 \parr \cdots \parr Q_m)  \]
		with $P_1,\dots,P_n,Q_1,\dots,Q_m$ guarded pre-processes, then $m = n$ and $\vect{a}$ is a permutation of $\vect{b}$. Moreover, for some permutation 
		$Q_1',\dots,Q_n'$ of $Q_1,\dots,Q_n$, we have $P_i \congr Q_i'$ for all $i$. 
	\label{l:uniquecanonical}
	\end{lemma}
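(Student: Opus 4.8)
The plan is to recast $\congr$ as a rewriting system working modulo a purely permutative equivalence, and then to conclude by a confluence argument. First I would split the five generating axioms into two groups. The associativity and commutativity of $\parr$, together with the swapping rule $\pnu[a]\pnu[b]P \congr \pnu[b]\pnu[a]P$, are genuinely symmetric reorderings: I collect them, together with the unit law, into an equivalence $\sim$, under which a parallel composition becomes an unordered multiset of components and a leading block of restrictions becomes an unordered set of bound names. The remaining axiom, scope extrusion, I orient so as to push restrictions towards the top, namely as a rewrite $(\pnu[a]P) \parr Q \to \pnu[a](P \parr Q)$ with side condition $a \notin \mathrm{fn}(Q)$, alongside unit elimination $P \parr \pzero \to P$; both are closed under all process contexts and act on $\sim$-classes. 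The point of this setup is that the $\to$-normal forms are exactly the canonical forms of Definition~\ref{d:guardedcanonical}: a pre-process is irreducible precisely when it carries no $\pzero$ in parallel and no restriction below a $\parr$, which forces the shape $\pnu[\vect{a}](P_1 \parr \cdots \parr P_n)$ with each $P_i$ guarded.

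Next I would establish that $\to$ is terminating and confluent modulo $\sim$. For termination I would exhibit a strictly decreasing measure, for instance the number of $\pzero$ occurrences together with the sum, over all restriction binders, of the number of $\parr$-nodes lying above that binder: extrusion lowers this sum for the extruded binder while leaving the others untouched, and unit elimination removes a $\pzero$. For local confluence I would inspect the critical overlaps — extrusion against unit elimination, and the self-overlaps of extrusion where two restrictions compete to be pulled over the same parallel context — and check that each diverging pair reconverges up to $\sim$. Newman's lemma in its coherent, modulo-$\sim$ form then yields confluence modulo $\sim$, so every pre-process has a $\sim$-unique normal form.

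It then remains to read off the statement. Since $\congr$ is by definition the least congruence generated by exactly the permutative axioms and scope extrusion, it coincides with the equivalence generated by $\sim$ and $\to$; confluence modulo $\sim$ gives that $P \congr Q$ holds iff $P$ and $Q$ have $\sim$-equal normal forms. Both sides of the hypothesis are already canonical, hence are their own normal forms, so $\pnu[\vect{a}](P_1 \parr \cdots \parr P_n) \sim \pnu[\vect{b}](Q_1 \parr \cdots \parr Q_m)$. It then suffices to read off what $\sim$ means on canonical forms: since $\sim$ neither merges nor splits guarded components and only reorders the top restrictions, the quantity consisting of the multiset of top-level components taken up to $\sim$ together with the set of bound names is $\sim$-invariant, which forces $\vect{a}$ to be a permutation of $\vect{b}$ (in particular of equal length), $n = m$, and a permutation $Q_1',\dots,Q_n'$ with $P_i \sim Q_i'$; as $\sim$ is contained in $\congr$, this gives $P_i \congr Q_i'$ and closes the argument.

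The main obstacle is the confluence check, and specifically the handling of the side condition $a \notin \mathrm{fn}(Q)$ in scope extrusion: it is what makes the critical pairs genuinely subtle, since a naive overlap could require pulling a restriction over a context in which its name occurs free. I expect this to be resolved by carrying over the global freshness convention already used in the proof of Lemma~\ref{l:existcanonical}, namely that all bound names are chosen distinct and distinct from the free names. This guarantees the side condition is always met whenever a restriction sits below a $\parr$, so that every overlap closes modulo $\sim$ and the system is indeed confluent.
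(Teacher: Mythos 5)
Your strategy---orienting scope extrusion and unit elimination into a rewrite system working modulo the permutative equivalence $\sim$, and deducing the lemma from confluence modulo $\sim$---is genuinely different from the paper's proof, which involves no rewriting at all: the paper defines two structural invariants (the set $\mathtt{name}(P)$ of restricted names and the multiset $\mathtt{gp}(P)$ of top-level guarded subprocesses), shows by a direct induction on the derivation of $P \congr Q$ that they are preserved up to permutation and componentwise congruence, and reads the lemma off. Your route could in principle be made to work, but as written it has two concrete failures. First, you place the unit law both inside $\sim$ and as the oriented rule $P \parr \pzero \to P$; with that setup termination modulo $\sim$ is false outright, since $P \sim P \parr \pzero \to P \sim P \parr \pzero \to \cdots$ is an infinite reduction modulo $\sim$. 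The unit law must live on one side only (as a rule). Second, your termination measure---for each restriction binder, the number of $\parr$-nodes above it---is not invariant under $\sim$: associativity $(P \parr Q) \parr R \sim P \parr (Q \parr R)$ changes the number of $\parr$-nodes above every binder occurring in $P$ from two to one. A measure used for termination modulo an equivalence must be constant on equivalence classes, so this one proves nothing; you would need instead something like the number of parallel components lying outside each binder's scope.

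Beyond these fixable errors, note the imbalance of the approach: the heart of your argument (coherent local confluence modulo AC, with the extension pairs that rewriting modulo an equivalence requires, plus the Newman-modulo theorem) is exactly the part you defer to ``standard'' machinery, and it is far heavier than the lemma itself. Moreover, your final step---arguing that $\sim$ restricted to canonical forms only permutes binders and top-level guarded components, and hence yields the permutations in the statement---is essentially the paper's invariant argument, only applied to the small relation $\sim$ rather than to all of $\congr$; the paper's point is that this invariant argument already goes through for the full congruence, which makes the whole rewriting superstructure unnecessary. Your handling of the side condition on scope extrusion (assuming all bound names distinct from each other and from the free names, as in Lemma~\ref{l:existcanonical}) matches the convention the paper itself adopts and is consistent with the fact that $\alpha$-renaming is not a rule of $\congr$, so that part is fine.
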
 
	
	\begin{proof}
		Recall that $\alpha$-renaming is not a rule of $\congr$, otherwise this propriety would be false. As before, we suppose that all names are already  well-chosen. Then, let us define a set $\mathtt{name}$ of channel variable and a multiset $\mathtt{gp}$ of guarded pre-processes. 
		\begin{itemize}
			\item $\mathtt{name}(\pzero) = \emptyset$ and $\mathtt{gp}(\pzero) = \emptyset$.
			\item $\mathtt{name}(P \parr Q) = \mathtt{name}(P) \coprod \mathtt{name}(Q)$ and $\mathtt{gp}(P \parr Q) = \mathtt{gp}(P) + \mathtt{gp}(Q)$.
			\item $\mathtt{name}(P) = \emptyset$ and $\mathtt{gp}(P) = \lbrack P \rbrack $, when $P$ is guarded.
			\item $\mathtt{name}(\pnu P) = \mathtt{name(P)} \coprod \{ a \}$ and $\mathtt{gp}(\pnu P) = \mathtt{gp}(P)$.
		\end{itemize}
		Then, we can easily show the following lemma by definition of the congruence relation.
		\begin{lemma}
			If $P \congr Q$ then $\mathtt{name}(P) = \mathtt{name}(Q)$ and if $\mathtt{gp}(P) = \lbrack P_1, \dots, P_n \rbrack$ and $\mathtt{gp}(Q) = \lbrack Q_1, \dots, Q_m \rbrack$, then $m=n$ and for some permutation $Q_1',\dots,Q_n'$ of $Q_1,\dots,Q_n$, we have $P_i \congr Q_i'$ for all $i$. 
		\label{l:namegpcongr}
		\end{lemma}
		Finally, Lemma~\ref{l:uniquecanonical} is a direct consequence of Lemma~\ref{l:namegpcongr}. 
	\end{proof}
	
	We can now define the reduction relation $P \red Q$ for pre-processes. It is defined by the rules given in Figure~\ref{f:reduction}. Remark that substitution should be well-defined in order to do the reduction. For example, \hbox{$\psub[(\pin)][a][\true]$} is not a valid substitution, as a channel variable is replaced by a boolean. More formally, channel variables must be substituted by other channel variables and base type variables can be substituted by any expression except channel variables. However, when we have typed pre-processes, this yields well-typed substitutions.  
	
	\begin{figure}
		\centering
		\begin{framed}
			\small 
			\begin{center}
			 \AxiomC{}
			 \UnaryInfC{$\pserv \parr \pout \red \pserv \parr \psub$} 
			 \DisplayProof
			 \qquad 
			 \AxiomC{}
			 \UnaryInfC{$\pin \parr \pout \red \psub$}
			 \DisplayProof
			 \\
			 \vvskip
			 \AxiomC{}
			 \UnaryInfC{$\pifn[\zero] \red P$}
			 \DisplayProof 
			 \qquad 
			 \AxiomC{}
			 \UnaryInfC{$\pifn[\suc(e)] \red \psub[Q][x][e]$}
			 \DisplayProof
			 \\ 
			 \vvskip 
			 \AxiomC{}
			 \UnaryInfC{$\pifl[\nil] \red P$}
			 \DisplayProof 
			 \qquad 
			 \AxiomC{}
			 \UnaryInfC{$\pifl[e::e'] \red \psub[Q][x,y][e,e']$}
			 \DisplayProof
			 \\ 
			 \vvskip
			 \AxiomC{}
			 \UnaryInfC{$\pif[\true] \red P$}
			 \DisplayProof 
			 \qquad 
			 \AxiomC{}
			 \UnaryInfC{$\pif[\false] \red Q$}
			 \DisplayProof 
			 \qquad 
			 \AxiomC{$P \red Q$}
			 \UnaryInfC{$ P \parr R \red Q \parr R$}
			 \DisplayProof 
			 \\ 
			 \vvskip 
			 \AxiomC{$P \red Q$}
			 \UnaryInfC{$\pnu P \red \pnu Q$}
			 \DisplayProof 
			 \qquad 
			 \AxiomC{$P \congr P'$}
			 \AxiomC{$P' \red Q'$}
			 \AxiomC{$Q' \congr Q$}
			 \TrinaryInfC{$P \red Q$}
			 \DisplayProof  
		   \end{center}   
		\end{framed}
		\caption{Reduction Rules}
		\label{f:reduction}
	\end{figure}
	
	The set of \emph{base types} and types are given by the following grammar. 
	
	\[ \IB := \nat \midd \lis(\IB) \midd \bool \qquad T := \IB \midd \ch \midd \ich \midd \och \]
	
	When a type $T$ is not a base type, we call it a \emph{channel type}. Then, we define a subtyping relation on those types, expressed by the rules of Figure~\ref{f:subtyping}
	
	\begin{figure}
		\centering
		\begin{framed}
			\small 
			\begin{center}
				\AxiomC{}
				\UnaryInfC{$\IB \subtype \IB$}
				\DisplayProof
				\qquad 
				\AxiomC{$\vect{T} \subtype \vect{U}$}
				\AxiomC{$\vect{U} \subtype \vect{T}$}
				\BinaryInfC{$\ch \subtype \ch[\vect{U}]$}
				\DisplayProof
				\qquad 
				\AxiomC{}
				\UnaryInfC{$\ch \subtype \ich$}
				\DisplayProof 
				\qquad 
				\AxiomC{}
				\UnaryInfC{$\ch \subtype \och$}
				\DisplayProof 
				\\
				\vvskip 
				\AxiomC{$\vect{T} \subtype \vect{U}$}
				\UnaryInfC{$\ich \subtype \ich[\vect{U}]$}
				\DisplayProof 
				\qquad 
				\AxiomC{$\vect{U} \subtype \vect{T}$}
				\UnaryInfC{$\och \subtype \och[\vect{U}]$}
				\DisplayProof 
				\qquad 
				\AxiomC{$ T \subtype T'$}
				\AxiomC{$T' \subtype T''$}
				\BinaryInfC{$T \subtype T''$}
				\DisplayProof 
				
			\end{center}   
		\end{framed}
		\caption{Subtyping Rules}
		\label{f:subtyping}
	\end{figure}
	
	\begin{definition}[Typing Contexts]
		A \emph{typing context} $\Gamma$ is a sequence of hypotheses of the form $ x : \IB$ or $ a : T$ where $T$ is a channel type.  
	\label{d:typingcontexts}
	\end{definition}
	
	We can now define typing for expressions and pre-processes. This is expressed by the rules of Figure~\ref{f:typeexpression} and Figure~\ref{f:typeprocess}.
	
	\begin{figure}
		\centering
		\begin{framed}
			\small 
			\begin{center}
				\AxiomC{$\var : T \in \Gamma$}
				\UnaryInfC{$\Gamma \vdash \var : T$}
				\DisplayProof 
				\qquad
				\AxiomC{}
				\UnaryInfC{$\Gamma \vdash \zero : \nat$}
				\DisplayProof 
				\qquad
				\AxiomC{$\Gamma \vdash e : \nat$}
				\UnaryInfC{$\Gamma \vdash \suc(e) : \nat$}
				\DisplayProof 
				\qquad
				\AxiomC{}
				\UnaryInfC{$\Gamma \vdash \nil : \lis(\IB)$}
				\DisplayProof 
				\\
				\vvskip 
				\AxiomC{$\Gamma \vdash e : \IB$}
				\AxiomC{$\Gamma \vdash e' : \lis(\IB) $}
				\BinaryInfC{$\Gamma \vdash e::e' : \lis(\IB)$}
				\DisplayProof 
				\qquad
				\AxiomC{}
				\UnaryInfC{$\Gamma \vdash \true : \bool$}
				\DisplayProof 
				\qquad
				\AxiomC{}
				\UnaryInfC{$\Gamma \vdash \false : \bool$}
				\DisplayProof
				\\
				\vvskip   
				\AxiomC{$\Delta \vdash e : U$}
				\AxiomC{$\Gamma \subtype \Delta$}
				\AxiomC{$U \subtype T$}
				\TrinaryInfC{$\Gamma \vdash e : T$}
				\DisplayProof 
			\end{center}   
		\end{framed}
		\caption{Typing Rules for Expressions}
		\label{f:typeexpression}
	\end{figure} 
	
	\begin{figure}
		\centering
		\begin{framed}
			\small 
			\begin{center}
				\AxiomC{}
				\UnaryInfC{$\Gamma \vdash \pzero$}
				\DisplayProof 
				\qquad
				\AxiomC{$\Gamma \vdash P$}
				\AxiomC{$\Gamma \vdash Q$}
				\BinaryInfC{$\Gamma \vdash P \parr Q$}
				\DisplayProof 
				\qquad
				\AxiomC{$\Gamma \vdash a : \ich$}
				\AxiomC{$\Gamma, \vect{\var} : \vect{T} \vdash P$}
				\BinaryInfC{$\Gamma \vdash \pserv$}
				\DisplayProof 
				\\
				\vvskip 			
				\AxiomC{$\Gamma \vdash a : \ich$}
				\AxiomC{$\Gamma, \vect{\var} : \vect{T} \vdash P$}
				\BinaryInfC{$\Gamma \vdash \pin$}
				\DisplayProof 
				\qquad
				\AxiomC{$\Gamma \vdash a : \och$}
				\AxiomC{$\Gamma \vdash \vect{e} : \vect{T}$}
				\BinaryInfC{$\Gamma \vdash \pout$}
				\DisplayProof 
				\qquad
				\AxiomC{$\Gamma, a : T \vdash P$}
				\UnaryInfC{$\Gamma \vdash \pnu P$}
				\DisplayProof 
				\\
				\vvskip 
				\AxiomC{$\Gamma \vdash e : \nat$}
				\AxiomC{$\Gamma \vdash P$}
				\AxiomC{$\Gamma, x : \nat \vdash Q$}
				\TrinaryInfC{$\Gamma \vdash \pifn$}
				\DisplayProof 
				\qquad  
				\AxiomC{$\Gamma \vdash e : \lis(\IB)$}
				\AxiomC{$\Gamma \vdash P$}
				\AxiomC{$\Gamma, x : \IB, y : \lis(\IB) \vdash Q$}
				\TrinaryInfC{$\Gamma \vdash \pifl$}
				\DisplayProof 
				\\
				\vvskip 
				\AxiomC{$\Gamma \vdash e : \bool$}
				\AxiomC{$\Gamma \vdash P$}
				\AxiomC{$\Gamma \vdash Q$}
				\TrinaryInfC{$\Gamma \vdash \pif$}
				\DisplayProof 
				\qquad 
				\AxiomC{$\Delta \vdash P$}
				\AxiomC{$\Gamma \subtype \Delta$}
				\BinaryInfC{$\Gamma \vdash P$}
				\DisplayProof 
			\end{center}   
		\end{framed}
		\caption{Typing Rules for Pre-Processes}
		\label{f:typeprocess}
	\end{figure}
	
	Finally, we can now show the following lemma. 
	
	\begin{lemma}[Closed Typed Normal Forms]
		A pre-process $P$ such that $\vdash P$ is in normal form for $\red$ if and only if 
		\[P \congr \pnu[(a_1, \dots a_n)](\pserv[b_1][\vect{\var_1^0}][P_1] \parr \cdots \parr \pserv[b_k][\vect{\var_k^0}][P_k] \parr \pin[c_1][\vect{\var_1^1}][Q_1] \parr \cdots \parr \pin[c_{k'}][\vect{\var_{k'}^1}][Q_{k'}] \parr \pout[d_1][\vect{e_1}] \parr \cdots \parr \pout[d_{k''}][\vect{e_{k''}}]) \]
		with $(\{ b_i \mid 1 \le i \le k \} \cup \{ c_i \mid 1 \le i \le k' \}) \cap \{ d_i \mid 1 \le i \le k\} = \emptyset$. 
	\label{l:closedtypednormalform}
	\end{lemma}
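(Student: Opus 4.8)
The plan is to prove the two implications separately, using the canonical form of Lemmas~\ref{l:existcanonical} and~\ref{l:uniquecanonical} together with one key remark: a well-typed base-type expression whose typing context contains only channel-variable hypotheses must be a value. Indeed, the subtyping rules never relate a channel type to a base type, so a well-typed base-type expression can contain no channel variable; hence if such an expression is typed in a context consisting only of channel-variable hypotheses, it has no free variable at all, and an immediate inspection of the expression grammar shows a closed expression of type $\nat$, $\lis(\IB)$ or $\bool$ is a constructor value ($\zero$/$\suc(\cdot)$, $\nil$/$\cdot :: \cdot$, or $\true$/$\false$).

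For the implication from normal form to the displayed shape, I would first apply Lemma~\ref{l:existcanonical} to write $P \congr \pnu[\vect{a}](P_1 \parr \cdots \parr P_n)$ with each $P_i$ guarded. Since $\congr$ preserves typability --- a routine induction over the five congruence axioms, each of which only reshuffles parallel/$\nu$ structure and leaves free variables unchanged --- from $\vdash P$ we get $\vect{a} : \vect{T} \vdash P_i$ with all of $\vect{a}$ of channel type. No $P_i$ can then be a conditional: its scrutinee would be a base-type expression typed in the all-channel context $\vect{a} : \vect{T}$, hence by the key remark a value, so one of the conditional rules fires; propagating this through the parallel and $\nu$ contexts (after reordering $P_i$ to the front) would give a reduction of $P$, contradicting normality. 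Thus each $P_i$ is a replicated input, an input or an output, and associativity and commutativity regroup them into the displayed form. The subject-disjointness condition follows similarly: if some input subject coincided with some output subject, the two components would lie in parallel under the same $\nu$-scope and form a communication redex --- well-typedness assigns the shared channel a single channel type, and since subtyping preserves channel arities the number of bound variables of the input matches the number of emitted expressions, so the substitution is well defined and the communication rule applies, again contradicting normality.

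For the converse, assume $P$ has the displayed form and suppose $P \red Q$. Since no reduction rule acts under a prefix or inside a conditional branch, the step must instantiate a base rule at top-level parallel components, possibly after a $\congr$-rearrangement. By Lemma~\ref{l:uniquecanonical} the multiset of guarded components (up to $\congr$) and the set of restricted names are invariant under $\congr$, and $\congr$ preserves the head constructor of a guarded process; hence every $P' \congr P$ still has no conditional component and still satisfies subject-disjointness. Consequently neither a conditional rule nor a communication rule can be instantiated on any such $P'$, which is a contradiction, so $P$ is in normal form.

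The main obstacle I expect is the bookkeeping around $\congr$: making precise that normality and the \emph{redex structure} (presence of a conditional component, or of a matching input/output pair) are congruence-invariant, so that it is legitimate both to pass to the canonical form in the first direction and to exclude all redexes across every congruent rearrangement in the second; this is exactly where Lemma~\ref{l:uniquecanonical} is used. The only other delicate point is the arity check ensuring that a coincident input/output pair really reduces, which I would discharge from the fact that subtyping preserves the arity of channel types.
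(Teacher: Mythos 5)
Your proof is correct and takes essentially the same route as the paper's: typing forces conditional scrutinees to be closed constructor values (hence no conditional components in the canonical form), typing guarantees kind and arity agreement so a shared input/output subject would yield a well-defined communication redex, and the converse direction rests on the congruence-invariance of the canonical form (Lemma~\ref{l:uniquecanonical}). The only presentational difference is that the paper packages your claim that any reduction must instantiate a base rule on top-level components up to $\congr$ as an explicit exhaustive-shapes lemma (Lemma~\ref{l:exhaustivereductions}), proved by induction on the derivation of $P \red Q$ --- exactly the bookkeeping you flagged as the main obstacle.
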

	
	\begin{proof}
		In order to show that, we first give an exhaustive list of possibilities for a reduction, as in \cite{KobayashiLinearityPiCalculus}. 
		\begin{lemma}
			If $R \red R'$ then one of the following statements is true (where $R_1,\dots,R_n$ are guarded pre-processes)
			\begin{itemize}
				\item  \[ R \congr \pnu[\vect{b}] (R_1 \parr \cdots \parr R_n \parr \pserv \parr \pout) \qquad R' \congr \pnu[\vect{b}] (R_1 \parr \cdots \parr R_n \parr \pserv \parr \psub)  \]
				\item  \[ R \congr \pnu[\vect{b}] (R_1 \parr \cdots \parr R_n \parr \pin \parr \pout) \qquad R' \congr \pnu[\vect{b}] (R_1 \parr \cdots \parr R_n \parr \psub)  \]
				\item  \[ R \congr \pnu[\vect{b}] (R_1 \parr \cdots \parr R_n \parr \pifn[\zero]) \qquad R' \congr \pnu[\vect{b}] (R_1 \parr \cdots \parr R_n \parr P)  \]
				\item  \[ R \congr \pnu[\vect{b}] (R_1 \parr \cdots \parr R_n \parr \pifn[\suc(e)]) \qquad R' \congr \pnu[\vect{b}] (R_1 \parr \cdots \parr R_n \parr \psub[Q][x][e])  \]
				\item  \[ R \congr \pnu[\vect{b}] (R_1 \parr \cdots \parr R_n \parr \pifl[\nil]) \qquad R' \congr \pnu[\vect{b}] (R_1 \parr \cdots \parr R_n \parr P)  \]
				\item  \[ R \congr \pnu[\vect{b}] (R_1 \parr \cdots \parr R_n \parr \pifl[e::e']) \qquad R' \congr \pnu[\vect{b}] (R_1 \parr \cdots \parr R_n \parr \psub[Q][x,y][e,e'])  \]
				\item  \[ R \congr \pnu[\vect{b}] (R_1 \parr \cdots \parr R_n \parr \pif[\true]) \qquad R' \congr \pnu[\vect{b}] (R_1 \parr \cdots \parr R_n \parr P)  \]
				\item  \[ R \congr \pnu[\vect{b}] (R_1 \parr \cdots \parr R_n \parr \pif[\false]) \qquad R' \congr \pnu[\vect{b}] (R_1 \parr \cdots \parr R_n \parr Q)  \]
			\end{itemize}
		\label{l:exhaustivereductions}
		\end{lemma}
		
		\begin{proof}
			By induction on $R \red R'$. All base cases are straightforward. Then, in parallel composition, we use Lemma~\ref{l:existcanonical} to obtain the correct form. The contextual rule for $\nu$ is straightforward, and finally, the reduction up to congruence is straightforward using the transitivity of $\congr$.
		\end{proof}
	
		We can now show Lemma~\ref{l:closedtypednormalform}. Suppose that 
		\[P \congr \pnu[(a_1, \dots a_n)](\pserv[b_1][\vect{\var_1^0}][P_1] \parr \cdots \parr \pserv[b_k][\vect{\var_k^0}][P_k] \parr \pin[c_1][\vect{\var_1^1}][Q_1] \parr \cdots \parr \pin[c_{k'}][\vect{\var_{k'}^1}][Q_{k'}] \parr \pout[d_1][\vect{e_1}] \parr \cdots \parr \pout[d_{k''}][\vect{e_{k''}}]) \] 
		
		with $(\{ b_i \mid 1 \le i \le k \} \cup \{ c_i \mid 1 \le i \le k' \}) \cap \{ d_i \mid 1 \le i \le k\} = \emptyset$.
		By Lemma~\ref{l:uniquecanonical}, this canonical form for $P$ is unique. As $P$ cannot have any of the possible form described in Lemma~\ref{l:exhaustivereductions}, $P$ cannot be reduced by $\red$ thus $P$ is indeed in normal form for $\red$. 
		
		Conversely, suppose that $P$ is in normal form for $\red$, with $\vdash P$. Let us write the canonical form:
		
		\[P \congr \pnu[\vect{a}] (P_1 \parr \cdots \parr P_n) \]
		
		First, let us show that there is no conditional in $P_1, \dots, P_n$. Indeed, if $P_i$ has the form $\pif[@][R][R']$, then by typing, we have $ \vect{a} : \vect{T} \vdash \pif[@][R][R']$. Thus, we obtain $ \vect{a} : \vect{T} \vdash e : \bool$. By definition of expressions, as all type in $\vect{T}$ must be channel types, we have $e = \true$ or $e = \false$, thus $P$ would not be in normal form for $\red$. Then, the canonical form can be written: 
		
		\[ \pnu[(a_1, \dots a_n)](\pserv[b_1][\vect{\var_1^0}][P_1] \parr \cdots \parr \pserv[b_k][\vect{\var_k^0}][P_k] \parr \pin[c_1][\vect{\var_1^1}][Q_1] \parr \cdots \parr \pin[c_{k'}][\vect{\var_{k'}^1}][Q_{k'}] \parr \pout[d_1][\vect{e_1}] \parr \cdots \parr \pout[d_{k''}][\vect{e_{k''}}]) \]
		
		Now suppose, that $a \in ((\{ b_i \mid 1 \le i \le k \} \cup \{ c_i \mid 1 \le i \le k' \}) \cap \{ d_i \mid 1 \le i \le k\})$. In the type derivation $\vdash P$, $a$ is given a channel type $T$. As a consequence, in the (replicated) input and in the output, the type of $\vect{\var}$ in the input and $\vect{e}$ in the output corresponds, thus the substitution is well-defined and so $P$ is reducible. This is absurd, thus $(\{ b_i \mid 1 \le i \le k \} \cup \{ c_i \mid 1 \le i \le k' \}) \cap \{ d_i \mid 1 \le i \le k\}) = \emptyset$, and we obtain the desired form.
		
		This concludes the proof of Lemma~\ref{l:closedtypednormalform}. 
	\end{proof}
		
		In the following, we will use a generalized version of Lemma~\ref{l:closedtypednormalform}, with exactly the same proof.

	\section{Size Types and Complexity}
	\label{s:sizetypes}
	
	We enrich the previous set of pre-processes with a new constructor: $\tick.P$. This new set of terms is called the set of \emph{processes}. Intuitively, $\tick.P$ stands for ''after one unit of time, the process continues as P''. We extend the congruence relation and typing to this new constructor, thus we add the following rule for congruence and for typing.
	\begin{center}
		\AxiomC{$ P \congr Q$}
		\UnaryInfC{$\tick.P \congr \tick.Q$}
		\DisplayProof
		\qquad 
		\AxiomC{$ \Gamma \vdash P$}
		\UnaryInfC{$\Gamma \vdash \tick.P$}
		\DisplayProof 
	\end{center}
	
	A process of the form $\tick.P$ is considered a guarded process. Moreover, $\tick.P$ should be considered as a stuck process for the reduction $\red$. For example, the process $ \pin \parr \tick. \pout$ is not reducible for the relation $\red$. In order to reduce the tick, we define another reduction relation that stands for ''one unit of time'', thus, this new relation will be linked with our notion of complexity for our calculus. 
	
	\subsection{Time Reduction}
	\label{ss:timereduction}
	
	The time reduction $\tred$ is defined by the rules of Figure~\ref{f:timereduction}. Note that some processes have implicitly a ''wait'' instruction, for example a server or a process waiting for an input will always wait for its input according to this relation.   
	
	\begin{figure}
		\centering
		\begin{framed}
			\small 
			\begin{center}
				\AxiomC{}
				\UnaryInfC{$\pzero \tred \pzero$}
				\DisplayProof 
				\qquad 
				\AxiomC{}
				\UnaryInfC{$\pserv \tred \pserv$}
				\DisplayProof 
				\qquad 
				\AxiomC{}
				\UnaryInfC{$\pin \tred \pin$}
				\DisplayProof 
				\qquad 
				\AxiomC{}
				\UnaryInfC{$\pout \tred \pout$}
				\DisplayProof 
				\qquad  
				\AxiomC{$P \tred P'$}
				\UnaryInfC{$\pnu P \tred \pnu P'$}
				\DisplayProof
				\\ 
				\vvskip 
				\AxiomC{}
				\UnaryInfC{$\pifn \tred \pifn$}
				\DisplayProof 
				\\
				\vvskip 
				\AxiomC{}
				\UnaryInfC{$\pifl  \tred \pifl $}
				\DisplayProof 
				\\
				\vvskip 
				\AxiomC{}
				\UnaryInfC{$\pif \tred \pif$}
				\DisplayProof 
				\\
				\vvskip 
				\AxiomC{}
				\UnaryInfC{$\tick. P \tred P$}
				\DisplayProof 
				\qquad 
				\AxiomC{$P \tred P'$}
				\AxiomC{$Q \tred Q'$}
				\BinaryInfC{$P \parr Q \tred P' \parr Q'$}
				\DisplayProof 
			\end{center}   
		\end{framed}
		\caption{Time Reduction Rules}
		\label{f:timereduction}
	\end{figure}
	
	Note than for any process $P$, there is a unique $Q$ such that $P \tred Q$. Note also that $\tred$ allows the reduction of multiple ticks in parallel in one step. Indeed, we are here interested with the notion of \emph{span} for the complexity, that is to say the complexity of a process under maximal parallelism. Let us first show that this relation $\tred$ behaves well with the congruence relation. 
	
	\begin{lemma}[Time Reduction and Congruence]
		If $P \congr Q$, $P \tred P'$ and $Q \tred Q'$ then $P' \congr Q'$. 
	\label{l:timereductioncongr}
	\end{lemma}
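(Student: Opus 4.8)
The plan is to exploit the fact, noted just above the statement, that $\tred$ is a deterministic and total function on processes: every process has exactly one time reduct, so I may write $\mathsf{tr}(P)$ for the unique $P'$ with $P \tred P'$. The three hypotheses of the statement then amount to $P' = \mathsf{tr}(P)$ and $Q' = \mathsf{tr}(Q)$, so the lemma is exactly the assertion that $\mathsf{tr}$ preserves $\congr$, i.e. $P \congr Q$ implies $\mathsf{tr}(P) \congr \mathsf{tr}(Q)$. Since the rules of Figure~\ref{f:timereduction} are syntax-directed, $\mathsf{tr}$ is characterised by the structural equations $\mathsf{tr}(\pnu P) = \pnu \mathsf{tr}(P)$, $\mathsf{tr}(P \parr Q) = \mathsf{tr}(P) \parr \mathsf{tr}(Q)$, $\mathsf{tr}(\tick.P) = P$, and $\mathsf{tr}(R) = R$ when $R$ is $\pzero$, a server, an input, an output or a conditional; I would record these first.

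I would then prove $P \congr Q \Rightarrow \mathsf{tr}(P) \congr \mathsf{tr}(Q)$ by induction on the derivation of $P \congr Q$, regarding $\congr$ as the least congruence closed under the displayed axioms. For reflexivity the two reducts are literally equal by determinism; symmetry and transitivity are inherited from the corresponding closure properties of $\congr$, where for transitivity through an intermediate $R$ I apply the induction hypothesis to $P \congr R$ and $R \congr Q$ using the unique reduct $\mathsf{tr}(R)$. For the congruence-closure steps I distinguish the constructor: under $\parr$ and $\pnu$ the equations above let me push $\mathsf{tr}$ inside and conclude with the induction hypothesis together with closure of $\congr$ under the same constructor; under $\tick$, input, replicated input, output and the three conditionals the reduct leaves the surrounding guard in place (or, for $\tick$, strips it and exposes exactly the two bodies appearing in the premise), so the goal is literally the premise of that closure step. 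For the five axioms, applying $\mathsf{tr}$ to both sides and using that it distributes over $\parr$ and $\pnu$ yields two processes related by the very same axiom (unit, commutativity and associativity of $\parr$, and commutation of two $\pnu$), so each such case closes by one application of that axiom.

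The only case that is not pure bookkeeping is scope extrusion. Reducing the two sides of $\pnu[a](P \parr Q) \congr \pnu[a] P \parr Q$ gives $\pnu[a](\mathsf{tr}(P) \parr \mathsf{tr}(Q))$ and $\pnu[a] \mathsf{tr}(P) \parr \mathsf{tr}(Q)$, which are related by the same extrusion axiom provided $a$ is still not free in $\mathsf{tr}(Q)$. I therefore expect this to be the main (and essentially only) obstacle, and I would dispatch it with a small auxiliary fact proved by a routine induction: the free variables of a reduct are contained in those of the original process, $\mathrm{fv}(\mathsf{tr}(R)) \subseteq \mathrm{fv}(R)$, which holds because $\tred$ only erases $\tick$ prefixes and recurses, never performing a substitution nor introducing a fresh name. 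Since $\alpha$-conversion is deliberately not a rule of $\congr$, no renaming has to be tracked along the way, and this inclusion is exactly what turns the side condition $a \notin \mathrm{fv}(Q)$ into $a \notin \mathrm{fv}(\mathsf{tr}(Q))$, completing the argument.
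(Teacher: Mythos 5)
Your proposal is correct and follows essentially the same route as the paper: induction on the derivation of $P \congr Q$, with all cases routine except scope extrusion, which is handled by exactly the auxiliary fact the paper uses (free names cannot appear under $\tred$, i.e. if $a$ is not free in $Q$ and $Q \tred Q'$ then $a$ is not free in $Q'$). Your reformulation of $\tred$ as a total deterministic function $\mathsf{tr}$ is just a cleaner packaging of the same argument.
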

	
	\begin{proof}
		By induction on $P \congr Q$. All the base case are direct except the last one. For this one, we first need to show that if $a$ is not free in $Q$ and $Q \tred Q'$ then $a$ is not free in $Q'$, but this is an easy induction on the definition of $\tred$. Then, the case of reflexivity, symmetry and transitivity are straightforward, as well for contextual rules. 
	\end{proof}
	
	As explained before, the relation $\tred$ stands for one unit of time, and a reduction $\red$ will be considered to take zero unit of time. Following this intuition, we impose a strategy of reduction for terms, saying that ''before doing an expensive reduction ($\tred$), we first reduce as much as possible using the zero-cost reduction ($\red$)''. So the strategy we are interested in is the following one: 
	\begin{definition}[Reduction Strategy]
		We define the \emph{tick-last strategy} by the following steps:
	\begin{enumerate}
		\item We start from a process $P$.
		\item We reduce $P$ to $P'$ such that $P \red^* P'$ and such that $P'$ is in normal form for $\red$.
		\item 
			\begin{itemize} 
				\item If $P' \tred P'$, we stop the computation. 
				\item If $P' \tred Q$ with $Q \ne P'$, then we go back to $1.$ with $Q$ instead of $P$
			\end{itemize} 
	\end{enumerate} 
	\label{d:reductionstrategy}
	\end{definition}
	
	With this strategy comes a notion of complexity: the complexity of a reduction from $P$ to $Q$ with the tick-last strategy is the number of time reductions ($\tred$) in the reduction. This corresponds to the \emph{span} of a process, that is to say the complexity of a process with maximal parallelism. Indeed, in this strategy, a process first execute all zero-cost reduction, and then all guarded processes move forward one unit of time, that is to say all top guarded processes remove one tick or stay idle.  
	
	In the same way we showed previous lemmas for pre-processes, we obtain existence and uniqueness of the canonical form for processes, we can also give an exhaustive list for the shape of the reduction $\red$ for processes as in Lemma~\ref{l:exhaustivereductions} and we can deduce the following lemma.
	
	\begin{lemma}[Typed Closed Normal Form with Tick]
		A process $P$ with $\vdash P$ is in normal form for $\red$ if and only if 
		\[\leftshift{P \congr \pnu[(a_1, \dots a_n)](\pserv[b_1][\vect{\var_1^0}][P_1] \parr \cdots \parr \pserv[b_k][\vect{\var_k^0}][P_k] \parr \pin[c_1][\vect{\var_1^1}][Q_1] \parr \cdots \parr \pin[c_{k'}][\vect{\var_{k'}^1}][Q_{k'}] \parr \pout[d_1][\vect{e_1}] \parr \cdots \parr \pout[d_{k''}][\vect{e_{k''}}] \parr \tick.R_1 \parr \cdots \parr \tick.R_{k'''})} \]
		with $(\{ b_i \mid 1 \le i \le k \} \cup \{ c_i \mid 1 \le i \le k' \}) \cap \{ d_i \mid 1 \le i \le k\} = \emptyset$.
	\label{l:closedtypednormalform2}	
	\end{lemma}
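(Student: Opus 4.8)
The plan is to follow the proof of Lemma~\ref{l:closedtypednormalform} almost verbatim, the only genuinely new ingredient being the treatment of the tick-guarded components $\tick.R$. The crucial observation is that, although $\tick.R$ counts as a guarded process (so it legitimately occupies a slot $P_i$ in a canonical form $\pnu[\vect{a}](P_1 \parr \cdots \parr P_n)$), it is completely inert for $\red$: no reduction rule fires on a top-level $\tick$. Hence, after re-establishing for the extended process language the existence and uniqueness of canonical forms (the analogues of Lemmas~\ref{l:existcanonical} and~\ref{l:uniquecanonical}, whose proofs are unchanged since $\tick$ is handled as just another guarded constructor in the $\mathtt{gp}$/$\mathtt{name}$ bookkeeping, and since its congruence invariance is immediate), together with the exhaustive list of $\red$-reductions (the analogue of Lemma~\ref{l:exhaustivereductions}, to which the presence of $\tick$ adds no new case), the two implications follow exactly as before.

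For the right-to-left direction, I would suppose $P$ is congruent to the displayed form. By the generalized uniqueness lemma this is the canonical form of $P$, and its only guarded components are servers, inputs, outputs, and tick-prefixed processes. None of the reduction shapes listed in the analogue of Lemma~\ref{l:exhaustivereductions} can apply: the disjointness condition $(\{ b_i \} \cup \{ c_i \}) \cap \{ d_i \} = \emptyset$ rules out every input/output synchronisation, there are no conditionals to fire, and each $\tick.R_j$ is stuck. Therefore $P$ is in normal form for $\red$.

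For the left-to-right direction, I would assume $\vdash P$ and that $P$ is in normal form, and write its canonical form $\pnu[\vect{a}](P_1 \parr \cdots \parr P_n)$ with the $P_i$ guarded. First I would rule out conditionals exactly as in Lemma~\ref{l:closedtypednormalform}: if some $P_i$ were a conditional (an $\mathtt{if}$ or a $\mathtt{match}$), then typing forces its guard $e$ to have a base type in the context $\vect{a} : \vect{T}$ where every $T$ is a channel type; since base-type expressions cannot mention channel variables, $e$ must be a closed value ($\true/\false$, $\zero/\suc(\cdot)$, or $\nil/(\cdot :: \cdot)$), so the conditional would reduce, contradicting normality. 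Hence each $P_i$ is a server, an input, an output, or a $\tick.R$, and collecting them by kind yields the displayed shape. It then remains to establish disjointness: if some $d_i$ coincided with an input channel $b_j$ or $c_j$, then since $\vdash P$ assigns that name a single channel type, the argument types of the output match those expected by the (replicated) input, the substitution is well-defined, and a communication reduction would fire, again contradicting normality. This forces $(\{ b_i \} \cup \{ c_i \}) \cap \{ d_i \} = \emptyset$.

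The main obstacle, modest as it is, lies in carefully re-checking that $\tick$ is correctly absorbed everywhere it appears: that it behaves as a guarded constructor for the canonical-form machinery (so that the $\mathtt{gp}$/$\mathtt{name}$ accounting and its invariance under $\congr$ still go through), while simultaneously contributing no new case to the exhaustive reduction analysis. Once this single point is verified, every other step is a transcription of the proof of Lemma~\ref{l:closedtypednormalform}, and the tick-prefixed processes simply accumulate in the new block $\tick.R_1 \parr \cdots \parr \tick.R_{k'''}$ without affecting reducibility for $\red$.
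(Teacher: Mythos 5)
Your proposal is correct and takes essentially the same route as the paper: the paper gives no separate proof for this lemma, remarking only that existence/uniqueness of canonical forms and the exhaustive reduction list (Lemma~\ref{l:exhaustivereductions}) extend to processes with $\tick$ treated as a stuck guarded constructor, after which the argument of Lemma~\ref{l:closedtypednormalform} applies unchanged — exactly the structure you spell out, including the typing argument ruling out conditionals and forcing the disjointness condition.
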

	
	Moreover, as for Lemma~\ref{l:exhaustivereductions}, we can express a generic form for the $\tred$ relation. 
	
	\begin{lemma}
	If $R \tred R'$, then 
		\item  \[ R \congr \pnu[\vect{b}] (R_1 \parr \cdots \parr R_n \parr \tick.P_1 \parr \cdots \parr \tick.P_n) \]
		and 
		\[ R' \congr \pnu[\vect{b}] (R_1 \parr \cdots \parr R_n \parr P_1 \parr \cdots \parr P_n) \]
		where $R_1 \dots, R_n$ are guarded processes that do not start with a tick. 
	\label{l:exhaustivetimereductions}
	\end{lemma}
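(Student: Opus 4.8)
The plan is to argue by induction on the derivation of $R \tred R'$. The key preliminary observation is that, in contrast with the reduction $\red$ (see Lemma~\ref{l:exhaustivereductions}), the time reduction $\tred$ has \emph{no} closure rule under the congruence $\congr$: every rule of Figure~\ref{f:timereduction} is determined by the top-level constructor of the process on the left. Consequently the derivation of $R \tred R'$ is syntax-directed, matching the determinism already noted, so the induction is in effect a case analysis on whether $R$ is $\pzero$, a single guarded process that is not a tick (a server, input, output or conditional), a tick $\tick.P$, a restriction $\pnu P$, or a parallel composition $P \parr Q$. This absence of a congruence case makes the present lemma somewhat cleaner than Lemma~\ref{l:exhaustivereductions}, since I never have to chase congruences across the reduction itself.

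The base cases are immediate. If $R = \pzero$ then $R' = \pzero$, and I take the empty restriction prefix with no components at all, reading the empty parallel product as $\pzero$ (the degenerate case with no $R_i$ and no tick). If $R$ is a server, input, output or conditional, then $R \tred R$ by the corresponding rule, and $R$ is itself a single guarded process not starting with a tick, so it plays the role of $R_1$ with no tick components. If $R = \tick.P$ then $R' = P$, and I take no $R_i$ and the single tick component $\tick.P$, with the residual $P$ on the right.

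For the restriction case $\pnu[a] P \tred \pnu[a] P'$, obtained from $P \tred P'$, I apply the induction hypothesis to the premise to get the two congruences for $P$ and $P'$ with separated tick and non-tick components, and then prepend $a$ to the restriction prefix $\vect{b}$, using that $\congr$ is closed under the $\pnu$ context. For the parallel case $P \parr Q \tred P' \parr Q'$, obtained from $P \tred P'$ and $Q \tred Q'$, I apply the induction hypothesis to both premises, obtaining $P \congr \pnu[\vect{b}](\cdots)$ and $Q \congr \pnu[\vect{c}](\cdots)$ in the required shape (and likewise for $P'$ and $Q'$), and then merge the two restriction prefixes and concatenate the two lists of non-tick guarded processes and the two lists of ticks.

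The only real obstacle is this final merging step, and it is handled exactly as the parallel case of Lemma~\ref{l:existcanonical}: to lift $\vect{b}$ and $\vect{c}$ to a common prefix $\pnu[\vect{b}]\pnu[\vect{c}]$ of the combined parallel composition I need these bound names to be pairwise disjoint, with $\vect{b}$ not free in $Q$ and $\vect{c}$ not free in $P$. I arrange this by $\alpha$-renaming, which is legitimate since we work up to $\alpha$-equivalence; then repeated use of the scope-extrusion rule $\pnu[a](P \parr Q) \congr \pnu[a] P \parr Q$ (when $a$ is not free in $Q$) produces the common prefix, and associativity and commutativity of $\parr$ let me regroup so that all non-tick guarded components come first and all ticks come last. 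Applying the identical computation to $P'$ and $Q'$ yields the matching form for $R'$, which closes the induction. I note in passing that the two lists need not have the same length, so the shared index $n$ in the statement should be read as two independent counts.
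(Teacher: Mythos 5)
Your proof is correct and matches the paper's intended argument: the paper proves this lemma ``as for Lemma~\ref{l:exhaustivereductions}'', i.e.\ by induction on the reduction derivation with the parallel case handled by the same prefix-merging as in Lemma~\ref{l:existcanonical}, which is exactly what you do (and you rightly note the simplification that $\tred$ has no congruence-closure rule, as well as the slight abuse in the statement where $n$ counts two independent lists).
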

	
	From this lemma we can deduce that a process $P$ satisfies $P \tred P$ if and only if in the top guarded processes of its canonical form, none of them start with a tick. 
	
	Now, we show that the tick-last strategy is well-behaved according to the standard reduction in $\pi$-calculus. In order to do that, let us
	first introduce a notation for the erasure of ticks. 
	
	\begin{definition}[Eliminating Tick]
		For a process $P$, we define a pre-process $\gr{P}$ corresponding to $P$ in which all tick constructor have been erased. 
		\begin{itemize}
			\item $\gr{\pzero} = \pzero$ 
			\item $\gr{P \parr Q} = \gr{P} \parr \gr{Q}$
			\item $\gr{\pserv} = \pserv[@][@][\gr{P}]$
			\item $\gr{\pin} = \pin[@][@][\gr{P}]$
			\item $\gr{\pout} = \pout$
			\item $\gr{\pnu P} = \pnu \gr{P}$
			\item $\gr{\pifn} = \pifn[@][\gr{P}][@][\gr{Q}]$
			\item $\gr{\pifl} = \pifl[@][\gr{P}][@][@][\gr{Q}]$
			\item $\gr{\pif} = \pif[@][\gr{P}][\gr{Q}]$
			\item $\gr{\tick. P} = \gr{P}$
		\end{itemize}
	\label{d:eliminatingtick}
	\end{definition}
	
	And we can now show that the tick constructor and the tick-last strategy does not create new path of reduction in a term, and if the computation terminates, then we obtain a normal form for $\red$. 
	
	\begin{theorem}
		If a process $P$ reduces to $Q$ by the strategy of Definition~\ref{d:reductionstrategy}, then $Q$ is in normal form for $\red$ and invariant by $\tred$. Moreover, $\gr{P} \red^* \gr{Q}$ and $\gr{Q}$ is in normal form for $\red$. 
	\label{t:wellbehavedtimereduction}
	\end{theorem}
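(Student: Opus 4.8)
The plan is to split the statement into its four assertions: (a) $Q$ is in normal form for $\red$; (b) $Q$ is invariant by $\tred$; (c) $\gr{P} \red^* \gr{Q}$; and (d) $\gr{Q}$ is in normal form for $\red$. Assertions (a) and (b) are immediate from the shape of the tick-last strategy of Definition~\ref{d:reductionstrategy}: the computation only halts when it reaches a process $Q$ that is already in normal form for $\red$ (the output of step~2) and that additionally passes the stopping test $Q \tred Q$ of step~3. The real content is therefore in (c) and (d), and both hinge on how the erasure $\gr{\cdot}$ of Definition~\ref{d:eliminatingtick} interacts with the two reductions.

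First I would establish three commutation lemmas about erasure, each by a routine induction on the relevant derivation: (i) if $P \congr Q$ then $\gr{P} \congr \gr{Q}$; (ii) if $P \red P'$ then $\gr{P} \red \gr{P'}$, since every redex pattern of Figure~\ref{f:reduction} and its contractum are preserved by tick erasure; and (iii) if $P \tred P'$ then $\gr{P} \congr \gr{P'}$. Fact (iii) is the key one and follows from the generic shape of $\tred$ in Lemma~\ref{l:exhaustivetimereductions}: writing $P \congr \pnu[\vect{b}](R_1 \parr \cdots \parr R_n \parr \tick.P_1 \parr \cdots \parr \tick.P_m)$ and $P' \congr \pnu[\vect{b}](R_1 \parr \cdots \parr R_n \parr P_1 \parr \cdots \parr P_m)$, erasure sends both sides to the same pre-process up to $\congr$, because $\gr{\tick.P_i} = \gr{P_i}$ by definition, so a $\tred$-step is invisible after erasure. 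A run of the strategy alternates blocks of $\red$-steps (each reaching a $\red$-normal form) with single $\tred$-steps and ends on such a $\red$-block producing $Q$; applying (i)--(iii) blockwise turns every $\red^*$ block into a $\red^*$ block on the erasures and every $\tred$-step into a $\congr$, and since $\red$ absorbs congruence on both sides (the last rule of Figure~\ref{f:reduction}) these compose to $\gr{P} \red^* \gr{Q}$, which is (c).

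For (d) I would combine (a), (b) with the normal-form characterization. By (b) we have $Q \tred Q$, so the remark following Lemma~\ref{l:exhaustivetimereductions} guarantees that no top-level guarded process in the canonical form of $Q$ begins with a tick; together with the $\red$-normality of $Q$ (which forbids a top-level conditional on an actual value) the canonical form of $Q$ is $\pnu[\vect{a}]$ applied to a parallel composition of servers, plain inputs and plain outputs satisfying the disjointness condition of Lemma~\ref{l:closedtypednormalform2}. Erasing ticks leaves this top-level structure unchanged: servers, inputs and outputs keep their subject channels, the disjointness condition is untouched since erasure never alters channel names, and any surviving tick sits strictly under an input or server prefix where it cannot take part in a $\red$-redex. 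Hence $\gr{Q}$ matches exactly the canonical shape characterized by Lemma~\ref{l:closedtypednormalform}, and is in normal form for $\red$.

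The \emph{main obstacle} is precisely the last point: in general erasure can create new redexes, for instance $\gr{\pin \parr \tick.\pout} = \pin \parr \pout$ is a communication redex whereas $\pin \parr \tick.\pout$ is stuck for $\red$. The argument must therefore use invariance under $\tred$ in an essential way, and it is exactly the absence of top-level ticks granted by (b) that prevents an output hidden behind a tick from being exposed by erasure; using the process analogue of the exhaustive enumeration of Lemma~\ref{l:exhaustivereductions}, any top-level redex of $\gr{Q}$ would then pull back to a top-level redex of $Q$, contradicting (a). Getting this pullback right — rather than the bookkeeping of the commutation lemmas (i)--(iii), which is routine — is where the care is needed.
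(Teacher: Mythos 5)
Your proposal follows essentially the same route as the paper's proof: parts (a) and (b) read off from Definition~\ref{d:reductionstrategy}, the two erasure lemmas (your (i) and (ii) are the paper's Lemmas~\ref{l:eliminatingtickcongr} and~\ref{l:eliminatingtickred}), the observation (iii) that a $\tred$-step becomes a congruence after erasure via Lemma~\ref{l:exhaustivetimereductions}, an induction over the $\tred$-steps of the run for (c), and, for the $\red$-normality of $\gr{Q}$, the same pullback of top-level redexes of $\gr{Q}$ to top-level redexes of $Q$, enabled by the absence of top-level ticks. One caveat: the theorem carries no typing hypothesis, so your appeal in the third paragraph to the disjointness condition of Lemma~\ref{l:closedtypednormalform2} is unjustified (an untyped normal form may well have an input and an output on the same channel, stuck only because the substitution is ill-defined); this detour is however redundant, since your closing pullback argument --- which is exactly the paper's, resting only on Lemma~\ref{l:exhaustivereductions}, uniqueness of canonical forms (Lemma~\ref{l:uniquecanonical}), and the fact that erasure preserves top constructors, channel names, guard expressions and well-definedness of substitutions --- already yields (d) without any typing.
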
 

	To begin with, remark that here the notation $\red$ denote both the reduction relation on processes and the reduction relation on pre-processes. As those 
	two relations are defined by the same rules, we keep the same notation for both.
	
	\begin{proof}
		The first part of this theorem is a direct consequence of Definition~\ref{d:reductionstrategy}, as the computation stops only for processes invariant by $\tred$, and we only apply $\tred$ to processes in normal form for $\red$. 
		
		In order to show the second part of this theorem, we must first prove the following lemmas. 
		
		\begin{lemma}
			If $P \congr Q$ then $\gr{P} \congr \gr{Q}$.  
		\label{l:eliminatingtickcongr}	
		\end{lemma}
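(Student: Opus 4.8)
The plan is to proceed by induction on the derivation of $P \congr Q$. Since $\congr$ is the least congruence relation closed under the listed axioms, a derivation is built from those base axioms together with reflexivity, symmetry, transitivity, and the contextual (congruence) rules for each term constructor — including the new rule $\tick.P \congr \tick.Q$ for processes. For each rule I would verify that applying $\gr{\cdot}$ to both sides yields two pre-processes again related by $\congr$.

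First I would dispatch the base axioms by direct computation from Definition~\ref{d:eliminatingtick}. For the monoid laws on $\parr$ and the commutation $\pnu[a]\pnu[b]P \congr \pnu[b]\pnu[a]P$, erasure commutes with the relevant constructors (e.g.\ $\gr{P \parr Q} = \gr{P} \parr \gr{Q}$ and $\gr{\pnu[a] P} = \pnu[a]\gr{P}$), so the corresponding axiom on the erased terms is immediate; for instance $\gr{P \parr \pzero} = \gr{P} \parr \pzero \congr \gr{P}$. The symmetry and transitivity cases follow from the induction hypothesis together with the fact that $\congr$ is itself symmetric and transitive, and the contextual cases follow because $\congr$ is a congruence: e.g.\ from the hypothesis $\gr{P} \congr \gr{Q}$ we get $\gr{P} \parr \gr{R} \congr \gr{Q} \parr \gr{R}$, which is exactly $\gr{P \parr R} \congr \gr{Q \parr R}$, and similarly for $\pnu$, inputs, outputs and conditionals. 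The new tick-closure rule is the easiest: since $\gr{\tick.P} = \gr{P}$, the goal $\gr{\tick.P} \congr \gr{\tick.Q}$ is literally the induction hypothesis $\gr{P} \congr \gr{Q}$.

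The one case needing a small auxiliary fact is the scope-extrusion axiom $\pnu[a](P \parr Q) \congr \pnu[a]P \parr Q$, which carries the side condition that $a$ is not free in $Q$. After erasure the goal is $\pnu[a](\gr{P} \parr \gr{Q}) \congr \pnu[a]\gr{P} \parr \gr{Q}$, and to reapply the same axiom I must know that $a$ is not free in $\gr{Q}$. This is where the only (minor) obstacle lies: I would first establish, by a straightforward induction on the structure of $Q$, that tick erasure leaves the free variables unchanged, so that the free variables of $\gr{Q}$ coincide with those of $Q$. This holds because $\tick$ binds nothing and erasing it does not touch its continuation. Granting this, the side condition transfers from $Q$ to $\gr{Q}$, the axiom applies, and the induction is complete.
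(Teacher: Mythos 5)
Your proof is correct and takes essentially the same route as the paper: the paper's entire proof is the one-line remark that the lemma follows ``directly by induction on $\congr$'', and your case analysis simply fills in the details of that induction. Your identification of the only delicate point — that the side condition of the scope-extrusion axiom requires the auxiliary fact that tick erasure does not change free variables — is exactly the detail the paper leaves implicit, and your structural-induction argument for it is sound.
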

		This can be shown directly by induction on $\congr$. 
		\begin{lemma}
		  If $P \red Q$ then $\gr{P} \red \gr{Q}$.
		\label{l:eliminatingtickred}
		\end{lemma}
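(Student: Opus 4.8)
The plan is to proceed by induction on the derivation of $P \red Q$ using the rules of Figure~\ref{f:reduction}. There are the communication and conditional axioms as base cases, and three inductive cases: the contextual closure under parallel composition, the contextual closure under $\pnu$, and the closure under congruence. The whole argument rests on a single auxiliary fact, namely that tick-erasure commutes with substitution:
\[ \gr{\psub} = \psub[{\gr{P}}] \]
for every process $P$, variables $\vect{\var}$ and expressions $\vect{e}$. This is proved by a routine induction on the structure of $P$; the essential observation is that the $\tick$ constructor neither binds nor mentions any base-type or channel variable, so erasing a leading tick and applying a substitution are independent operations (for binders such as inputs we rely, as usual, on working up to $\alpha$-renaming).

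For the base cases I use that $\gr{\cdot}$ acts as a homomorphism on parallel composition, output, (replicated) input and the three conditionals, erasing only a leading $\tick$ of a subterm. Hence applying $\gr{\cdot}$ to the left-hand side of each reduction axiom yields a redex of exactly the same shape. For the server rule, $\gr{\pserv \parr \pout} = \gr{\pserv} \parr \gr{\pout}$ reduces to $\gr{\pserv} \parr \psub[{\gr{P}}]$, which equals $\gr{\pserv \parr \psub}$ by the substitution lemma; the input/output rule is identical, and the successor and cons rules for the $\mathtt{match}$ constructs are handled the same way, again invoking the substitution lemma. The zero, nil, true and false rules need no substitution and are immediate, since $\gr{\cdot}$ preserves the conditional and its branches.

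For the two contextual rules I apply the induction hypothesis together with $\gr{P \parr R} = \gr{P} \parr \gr{R}$ and $\gr{\pnu P} = \pnu \gr{P}$, then re-apply the corresponding contextual rule of $\red$. For the congruence rule, from $P \congr P'$, $P' \red Q'$ and $Q' \congr Q$ I obtain $\gr{P} \congr \gr{P'}$ and $\gr{Q'} \congr \gr{Q}$ by Lemma~\ref{l:eliminatingtickcongr}, and $\gr{P'} \red \gr{Q'}$ by the induction hypothesis, whence $\gr{P} \red \gr{Q}$ by the congruence reduction rule. The only point requiring care — and the closest thing to an obstacle — is checking that no redex is destroyed by erasure: since $\gr{\cdot}$ removes only leading ticks and a communication or conditional redex is never itself of the form $\tick.R$, every redex of $P$ survives in $\gr{P}$ with the same reduct, which is precisely what the base-case computations confirm.
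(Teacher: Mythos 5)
Your proof is correct and follows essentially the same route as the paper: induction on the derivation of $P \red Q$, using Lemma~\ref{l:eliminatingtickcongr} for the closure-under-congruence case. The paper simply declares this induction straightforward, whereas you usefully make explicit the commutation of tick-erasure with substitution, $\gr{\psub} = \psub[{\gr{P}}]$, which is indeed the only auxiliary fact the base cases need.
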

		Again, this is straightforward by induction on $\red$ using Lemma~\ref{l:eliminatingtickcongr}. 
		
		We can now prove Theorem~\ref{t:wellbehavedtimereduction} by recurrence on the number of time reduction ($\tred$) from $P$ to $Q$.
		\begin{itemize}
			\item If there are no time reduction from $P$ to $Q$, then by definition of the strategy, we have $P \red^* Q$, $Q$ is in normal form for $\red$ and $Q \tred Q$. By Lemma~\ref{l:eliminatingtickred}, we obtain $\gr{P} \red^* \gr{Q}$. Moreover, as $Q \tred Q$, by Lemma~\ref{l:exhaustivetimereductions}, the canonical form of $Q$ has the shape $\pnu[\vect{a}](Q_1 \parr \cdots \parr Q_n)$ where $Q_1, \dots Q_n$ are guarded processes that do not start with a tick. As a consequence, by Lemma~\ref{l:eliminatingtickcongr}, $\gr{Q} \congr \pnu[\vect{a}](\gr{Q_1} \parr \cdots \parr \gr{Q_n})$. As $Q_1,\dots,Q_n$ do not start with a tick, $\gr{Q_i}$ has the same top constructor as $Q_i$ for all $i$. Moreover, $Q$ is in normal form for $\red$. Using Lemma~\ref{l:exhaustivereductions}, we can deduce that $\gr{Q}$ is also in normal form for $\red$. Indeed, if the canonical form of $\gr{Q}$ was one of those expressed in Lemma~\ref{l:exhaustivereductions}, then the canonical form of $Q$ would have the same shape and so $Q$ would not be in normal form for $\red$. This concludes this case.
			\item If there are at least one time reduction from $P$ to $Q$, then by definition of the strategy, we have $P \red^* P'$, $P'$ in normal form for $\red$ and $P' \tred P''$ such that $P''$ can be reduced to $Q$ by the strategy. By Lemma~\ref{l:eliminatingtickred}, we have $\gr{P} \red^* \gr{P'}$. Moreover, with Lemma~\ref{l:exhaustivetimereductions}, we can see that if $P' \tred P''$ then $\gr{P'} \congr \gr{P''}$ by Lemma~\ref{l:eliminatingtickcongr}. Finally, by recurrence hypothesis, we have $\gr{P''} \red^* \gr{Q}$ and $\gr{Q}$ is in normal form for $\red$. Thus, we obtain $\gr{P} \red^* \gr{Q}$ and $\gr{Q}$ is indeed in normal form for $\red$. This concludes this case.
		\end{itemize}
		Finally, we obtain Theorem~\ref{t:wellbehavedtimereduction}.
	\end{proof}
	
	Remark that the tick-last strategy is not deterministic nor confluent, as $\red$ is not. However, the tick constructor can enforce some reduction in a term. For example, following the strategy, the process $\pin \parr \pout \parr \tick. \pout[@][\vect{e'}]$ can only reduce to $\psub \parr \tick. \pout[@][\vect{e'}]$ with $\red$, while without tick we have $\pin \parr \pout \parr \pout[@][\vect{e'}] \red \pout \parr \psub[@][@][\vect{e'}]$. A consequence of this is that sometimes, adding a tick can enforce an infinite sequence of reduction by forbidding the terminating run. In a sense, the tick constructor allows the concept of race in a process. For example, the process $\pin[@][@][\Omega] \parr \pout \parr \tick. \pin[@][@][\pzero]$, where $\Omega$ is non terminating for $\red$, will always have infinite reductions.  
	
	Remark that with this concept of race, the tick-last strategy may not be the fastest way to reach a precise normal form. Take for example this process $\tick. \pserv \parr \pserv[@][@][P'] \parr \pout$. If $P$ and $P'$ has the same behaviour (for example sorting a list given on input and sending it to another channel) but $P$ is faster than $P'$, then the tick-last strategy enforces to take the slower reduction. 
	
	\subsection{Size Types with Temporal Information}
	\label{ss:sizedtypesystem}
	
	We will now define a type system to bound the span of a process. The goal is to obtain a soundness result: if a process is typable then we can derive an integer $K$ such that the complexity of any reduction following the strategy of Definition~\ref{d:reductionstrategy} is bounded by $K$. 
	
	Our type system relies on the definition of indices that give more information about the type. Those indices were for example used in \cite{DalLagoGaboardiLinearDependentTypes} in the non-concurrent case. We also enrich type with temporal information, following the idea of \cite{DasHoffmannPfenningTemporalSessionTypes} to obtain complexity bound. 
	
	\begin{definition}
		The set of \emph{indices} for natural number is given by the following grammar. 
		\[ I,J,K := i,j,k \midd f(I_1,\dots,I_n) \]
		The variables $i,j,k$ are called index variables. The set of index variables is denoted $\IV$. We suppose given a set of function symbol containing for example the addition and the multiplication. We assume that each function symbol $f$ comes with an interpretation $\sem{f} : \NN^{\mathtt{ar}(f)} \rightarrow \NN $. 
	\label{d:indexes}
	\end{definition}
	
	Given an index valuation $\rho : \IV \rightarrow \NN$, we extend the interpretation of function symbols to indices, noted $\sem[\rho]{I}$ in the natural way. In an index $I$, the substitution of the occurences of $i$ in $I$ by $J$ is noted $\Isub$. We also assume that we have the subtraction as a function symbol, with $n - m = 0$ when $m \ge n$. 
	
	\begin{definition}[Constraints on Indices]
		Let $\phi \subset \IV$ be a set of index variables. A \emph{constraint} $C$ on $\phi$ is an expression with the shape $I \rel J$ where $I$ and $J$ are indices with free variables in $\phi$ and $\rel$ denotes a binary relation on integers. Usually, we take $\rel \in \{ \le, <, =, \ne \}$. Finite set of constraints are denoted $\Phi$. 
	\label{d:constraints}
	\end{definition}
	
	We say that a valuation $\rho : \IV \rightarrow \NN$ \emph{satisfies} a constraint $I \rel J$, noted $\rho \vDash I \rel J$ when $\sem[\rho]{I} \rel \sem[\rho]{J}$ holds. Similarly, $\rho \vDash \Phi$ holds when $\rho \vDash C$ for all $C \in \Phi$. Likewise, we note $\phi;\Phi \vDash C$ when for all valuation $\rho$ such that $\rho \vDash \Phi$ we have $\rho \vDash C$. 
	
	\begin{definition}
		The set of types and \emph{base types} are given by the following grammar.
		\begin{align*}
			\IB &:= \Inat \midd \Ilis \midd \Ibool \\
			T &:= \IB \midd \Ich \midd \Iich \midd \Ioch \midd \Iserv \midd \Iiserv \midd \Ioserv \\	
		\end{align*}
	\label{d:types}
	\end{definition}
	A type $\Ich$, $\Iich$ or $\Ioch$ is called a \emph{channel type} and a type $\Iserv$, $\Iiserv$ or $\Ioserv$ is called a \emph{server} type. 
	For a channel type or a server type, the index $I$ is called the \emph{time} of this type, and for a server type, the index $K$ is called the \emph{complexity} of this type.
	
	Intuitively, an integer $n$ of type $\Inat$ must be such that $I \le n \le J$. Likewise, a list of type $\Ilis$ have a size between $I$ and $J$. To give a channel variable the type $\Ich$ ensures that its communication should happen at time $I$. For example, a channel variable of type $\Ich[0][@]$ should do its communication before any tick occurs. Likewise, a name of type $\Iiserv$ must be used in a replicated input, and this replicated input must be ready to receive for any time greater than $I$. Typically, a process $\tick. \pserv$ enforces that the type of $a$ is $\Iiserv$ with $I$ greater than one, as the replicated input is not ready to receive at time $0$. 
	
	Moreover, a server type has a kind of polymorphism for indices, and the index $K$ stands for the complexity of the process spawned by this server. A typical example is a server taking as input a list and a channel, and send to this channel the sorted list, in time $k \cdot n$ where $n$ is the size of the list. 
	\[ P = \pserv[@][x,b][{\pout[b][\mathsf{sort}(x)]}] \]. Such a server name $a$ could be given the type $\Iserv[0][i][k \cdot i][{\Ilis[0][i][\IB], \Ioch[k \cdot i][{\Ilis[0][i][\IB]}]}]$. This means that this server is ready to receive an input and, for all integer $i$, if given a list of size at most $i$ and an output channel doing its communication at time $k \cdot i$ and waiting for a list of size at most $i$, the process spawned by this server will stop at time at most $k \cdot i$. 
	 
	We define a notion of subtyping for size types. The rules are given in Figure~\ref{f:sizesubtype}. 
	
	\begin{figure}
		\centering
		\begin{framed}
			\small 
			\begin{center}
				\AxiomC{$\phi;\Phi \vDash I' \le I$}
				\AxiomC{$\phi;\Phi \vDash J \le J'$}
				\BinaryInfC{$\phi;\Phi \vdash \Inat \subtype \Inat[I'][J']$}
				\DisplayProof 
				\qquad 
				\AxiomC{$\phi;\Phi \vDash I' \le I$}
				\AxiomC{$\phi;\Phi \vDash J \le J'$}
				\AxiomC{$\phi;\Phi \vdash \IB \subtype \IB' $}
				\TrinaryInfC{$\phi;\Phi \vdash \Ilis \subtype \Ilis[I'][J'][\IB']$}
				\DisplayProof 
				\\
				\vvskip 
				\AxiomC{}
				\UnaryInfC{$\phi; \Phi \vdash \Ibool \subtype \Ibool$}
				\DisplayProof
				\qquad  
				\AxiomC{$\phi;\Phi \vDash I = J$}
				\AxiomC{$\phi; \Phi \vdash \vect{T} \subtype \vect{U}$}
				\AxiomC{$\phi; \Phi \vdash \vect{U} \subtype \vect{T}$}
				\TrinaryInfC{$\phi; \Phi \vdash \Ich \subtype \Ich[J][\vect{U}]$}
				\DisplayProof
				\qquad 
				\AxiomC{}
				\UnaryInfC{$\phi; \Phi \vdash \Ich \subtype \Iich$}
				\DisplayProof 
				\\
				\vvskip 
				\AxiomC{}
				\UnaryInfC{$\phi;\Phi \vdash \Ich \subtype \Ioch$}
				\DisplayProof 
				\qquad 
				\AxiomC{$\phi;\Phi \vDash I = J$}
				\AxiomC{$\phi;\Phi \vdash \vect{T} \subtype \vect{U}$}
				\BinaryInfC{$\phi;\Phi \vdash \Iich \subtype \Iich[J][\vect{U}]$}
				\DisplayProof 
				\qquad 		
				\AxiomC{$\phi;\Phi \vDash I = J$}
				\AxiomC{$\phi; \Phi \vdash \vect{U} \subtype \vect{T}$}
				\BinaryInfC{$\phi;\Phi \vdash \Ioch \subtype \Ioch[J][\vect{U}]$}
				\DisplayProof 
				\\ 
				\vvskip 
				\AxiomC{$\phi;\Phi \vDash I = J$}
				\AxiomC{$(\phi,\vect{i}); \Phi \vdash \vect{T} \subtype \vect{U}$}
				\AxiomC{$(\phi,\vect{i}); \Phi \vdash \vect{U} \subtype \vect{T}$}
				\AxiomC{$(\phi,\vect{i}); \Phi \vDash K = K'$}
				\QuaternaryInfC{$\phi; \Phi \vdash \Iserv \subtype \Iserv[J][\vect{i}][K'][\vect{U}]$}
				\DisplayProof
				\\ 
				\vvskip 
				\AxiomC{}
				\UnaryInfC{$\phi; \Phi \vdash \Iserv \subtype \Iiserv$}
				\DisplayProof
				\qquad  
				\AxiomC{}
				\UnaryInfC{$\phi;\Phi \vdash \Iserv \subtype \Ioserv$}
				\DisplayProof 
				\\
				\vvskip 			
				\AxiomC{$\phi;\Phi \vDash I = J$}
				\AxiomC{$(\phi,\vect{i});\Phi \vdash \vect{T} \subtype \vect{U}$}
				\AxiomC{$(\phi,\vect{i});\Phi \vDash K' \le K$}
				\TrinaryInfC{$\phi;\Phi \vdash \Iiserv \subtype \Iiserv[J][\vect{i}][K'][\vect{U}]$}
				\DisplayProof 
				\\
				\vvskip 
				\AxiomC{$\phi;\Phi \vDash I = J$}
				\AxiomC{$(\phi,\vect{i});\Phi \vdash \vect{U} \subtype \vect{T}$}
				\AxiomC{$(\phi,\vect{i});\Phi \vDash K \le K'$}
				\TrinaryInfC{$\phi;\Phi \vdash \Ioserv \subtype \Ioserv[J][\vect{i}][K'][\vect{U}]$}
				\DisplayProof 
				\qquad 
				\AxiomC{$\phi;\Phi \vdash T \subtype T'$}
				\AxiomC{$\phi;\Phi \vdash T' \subtype T''$}
				\BinaryInfC{$\phi;\Phi \vdash T \subtype T''$}
				\DisplayProof 
				
			\end{center}   
		\end{framed}
		\caption{Subtyping Rules for Sized Types}
		\label{f:sizesubtype}
	\end{figure}
	
	The subtyping for channel type is standard, the only new thing is that we impose that the time of communication is invariant. Moreover, for servers, we can also change the complexity $K$ in subtyping: for input servers, we can always define something faster than announced, and for output, we can always consider that a server is slower than announced. 
	
	In order to present to type system of our calculus, let us first introduce some notation.
	
	\begin{definition}[Advancing Time in Types]
		Given a set of index variables $\phi$, a set of constraint $\Phi$, a type $T$ and an index $I$. We define $T$ after $I$ unit of time, denoted $\Idecr[T][@][\phi;\Phi]$ by: 
		\begin{itemize}
			\item $\Idecr[\IB][@][\phi;\Phi] = \IB$
			\item $\Idecr[{\Ich[J]}][@][\phi;\Phi] = \Ich[(J-I)]$ if $\phi;\Phi \vDash J \ge I$. It is undefined otherwise. Other channel types follow exactly the same pattern.
			\item $\Idecr[{\Iserv[J]}][@][\phi;\Phi] = \Iserv[(J-I)]$ if $\phi;\Phi \vDash J \ge I$. Otherwise, $\Idecr[{\Iserv[J]}][@][\phi;\Phi] = \Ioserv[(J-I)]$
			\item $\Idecr[{\Iiserv[J]}][@][\phi;\Phi] = \Iiserv[(J-I)]$ if $\phi;\Phi \vDash J \ge I$. It is undefined otherwise.
			\item $\Idecr[{\Ioserv[J]}][@][\phi;\Phi] = \Ioserv[(J-I)]$.  
		\end{itemize} 
		This definition can be extended to contexts, with $\Idecr[\var : T, \Gamma][@][\phi;\Phi] = \var : \Idecr[T][@][\phi;\Phi], \Idecr[@][@][\phi;\Phi]$ if $\Idecr[T][@][\phi;\Phi]$ is defined. Otherwise, $\Idecr[\var : T, \Gamma] = \Idecr[@][@][\phi;\Phi]$. We will often omit the $\phi;\Phi$ in the notation when it is clear from the context. 
	\label{d:advancetime}
	\end{definition} 
	Let us precise a bit the definition here. Intuitively, $T$ after $I$ unit of time is the type $T$ with a time decreased by $I$. For base types, there is no time thus nothing happens. Then, one can wonder what happens when the time of $T$ is smaller than $I$. For non-server channel types, we consider that their time is over, thus we erase them from the context. For servers this is a bit more complicated. Intuitively, when a server is defined, it should stay available until the end. Thus, an output to a server should always be possible, no matter the time. However, there are also some time limitation in servers in the sense that we must respect the time limit to define a server. As a consequence, when time advances too much, we should not be able to define a server anymore. That is why servers lose their input capability when time advances too much. Let us now show that this definition behaves well with subtyping.
	
	\begin{lemma}
		If $\phi;\Phi \vdash T \subtype U$ then for any $I$, either $\Idecr[U]$ is undefined, or both $\Idecr[U]$ and $\Idecr[T]$ are defined, and $\phi;\Phi \vdash \Idecr[T] \subtype \Idecr[U]$.  
	\label{l:advanceandsubtyping}
	\end{lemma}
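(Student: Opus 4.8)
The plan is to prove the statement by induction on the derivation of $\phi;\Phi \vdash T \subtype U$, with a case analysis on the last subtyping rule of Figure~\ref{f:sizesubtype}. Throughout, I write $I'$ for the advancing amount (the ``$I$'' of the statement), to keep it distinct from the internal time indices of the types. The guiding observation is that the operation $\Idecr[\cdot]$ of Definition~\ref{d:advancetime} never touches the carried types $\vect{T}$ nor the complexity annotation $K$ of a channel or server type: it only decreases the time index, and for the pure server constructor it may additionally downgrade $\mathsf{serv}$ to $\mathsf{oserv}$. Hence in every case the subtyping premises on carried types and on $K$ are inherited unchanged, and the only things to verify are (i) that $\Idecr[T]$ and $\Idecr[U]$ are either both defined, or else we fall under the escape clause ``$\Idecr[U]$ undefined'', and (ii) that the decremented times still satisfy the required constraint. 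For (ii) I use that all channel and server rules carry a premise $\phi;\Phi \vDash I = J$, whence $\phi;\Phi \vDash (I - I') = (J - I')$, and, crucially, $\phi;\Phi \vDash J \ge I' \iff \phi;\Phi \vDash I \ge I'$; this equivalence is exactly what makes the definedness of $\Idecr[T]$ and $\Idecr[U]$ coincide.

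The base-type cases ($\Inat$, $\Ilis$, $\Ibool$) are immediate, since $\Idecr[\cdot]$ acts as the identity there and the size constraints are untouched. For the channel cases ($\Ich \subtype \Ich[J][\vect{U}]$, its input and output variants, and the coercions $\Ich \subtype \Iich$ and $\Ich \subtype \Ioch$), the equivalence above shows $\Idecr[U]$ is defined iff $\Idecr[T]$ is, and when both are defined the same rule re-applies to the decremented times using the unchanged carried-type premises. The rule $\Iiserv \subtype \Iiserv[J][\vect{i}][K'][\vect{U}]$ is handled identically: if $\phi;\Phi \not\vDash J \ge I'$ then $\Idecr[U]$ is undefined and we are in the escape clause, and otherwise both are defined and the input-server rule re-applies. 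The output-server rule is even easier, since $\Idecr[\cdot]$ is total on $\mathsf{oserv}$ types.

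The delicate cases, which I expect to be the main obstacle, are the three rules with $\mathsf{serv}$ on the left, because $\Idecr[\cdot]$ is total on $\mathsf{serv}$ but may replace it by $\mathsf{oserv}$ when $\phi;\Phi \not\vDash J \ge I'$. For the $\mathsf{serv}$--$\mathsf{serv}$ rule, the premise $\phi;\Phi \vDash I = J$ forces $T$ and $U$ into the same branch of the definition: either both remain servers, where the $\mathsf{serv}$ rule re-applies from $\vect{T} \subtype \vect{U}$, $\vect{U} \subtype \vect{T}$ and $K = K'$; or both degrade to output servers, where I invoke the $\mathsf{oserv}$ rule, whose weaker premises ($\vect{U} \subtype \vect{T}$ and $K \le K'$) are implied by the stronger server premises. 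For $\Iserv \subtype \Ioserv$, the right side is always defined, and whether $\Idecr[T]$ stays a server or degrades, the result is a subtype of $\Idecr[U]$ (in the degraded subcase the two sides become syntactically identical, so we conclude by reflexivity of $\subtype$, which is readily admissible by induction on types). For $\Iserv \subtype \Iiserv$, the right-hand $\mathsf{iserv}$ becomes undefined precisely when $\phi;\Phi \not\vDash J \ge I'$, which is exactly the escape clause; when it is defined, $\Idecr[T]$ remains a server and the $\mathsf{serv} \subtype \mathsf{iserv}$ rule re-applies.

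It then remains to treat transitivity, where $\phi;\Phi \vdash T \subtype T'$ and $\phi;\Phi \vdash T' \subtype T''$ yield $T \subtype T''$, so $U = T''$. Applying the induction hypothesis to $T' \subtype T''$, either $\Idecr[T'']$ is undefined, and we are at once in the escape clause, or $\Idecr[T']$ and $\Idecr[T'']$ are defined with $\Idecr[T'] \subtype \Idecr[T'']$. In the latter case $\Idecr[T']$ is defined, so the induction hypothesis for $T \subtype T'$ cannot land in its escape clause and must give $\Idecr[T]$ defined with $\Idecr[T] \subtype \Idecr[T']$; transitivity of $\subtype$ then delivers $\Idecr[T] \subtype \Idecr[T'']$. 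This is exactly why the statement is phrased as the disjunction ``$\Idecr[U]$ undefined, or both defined with subtyping'': that shape is what lets the transitivity step compose, so once the server cases are settled I expect no further difficulty.
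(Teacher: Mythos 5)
Your proof is correct and follows exactly the approach the paper takes: the paper's proof is simply ``straightforward by induction on $\phi;\Phi \vdash T \subtype U$'', and your case analysis is the faithful expansion of that induction, correctly identifying the key points (the $\phi;\Phi \vDash I = J$ premise making definedness of $\Idecr[T]$ and $\Idecr[U]$ coincide, the possible degradation of $\mathsf{serv}$ to $\mathsf{oserv}$, admissible reflexivity, and the composition of escape clauses under transitivity).
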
 
	
	\begin{proof}
		The proof is pretty straightforward by induction on $\phi;\Phi \vdash T \subtype U$. 
	\end{proof}
	
	A corollary of Lemma~\ref{l:advanceandsubtyping} is that if $\phi;\Phi \vdash \Gamma \subtype \Delta$, then $\Idecr = \Gamma',\Delta'$ with $\phi;\Phi \vdash \Delta' \subtype \Idecr[\Delta]$. 
	
	\begin{definition}
		Given a set of index variables $\phi$ and a set of constraints $\Phi$, a context $\Gamma$ is said to be \emph{time invariant} when it contains only base type variables or output server types $\Ioserv$ with $\phi;\Phi \vDash I = 0$.  
	\label{d:timeinvariant}
	\end{definition}   
	
	Such a context is thus invariant by the operator $\Idecr[\cdot][I]$ for any $I$. This is typically the kind of context that we need to define a server. We can now present the type system. Rules are given in Figure~\ref{f:sizetypeexpression} and Figure~\ref{f:sizetypeprocess}. A typing $\Itype{P}{K}$ means intuitively that under the constraints $\Phi$, in the context $\Gamma$, a process $P$ is typable and its complexity is bounded by $K$. And the typing for expressions $\Ietype{e}{T}$ means that under the constraints $\Phi$, in the context $\Gamma$, the expression $e$ can be given the type $T$. 
	
	\begin{figure}
		\centering
		\begin{framed}
			\small 
			\begin{center}
				\AxiomC{$\var : T \in \Gamma$}
				\UnaryInfC{$\Ietype{\var}{T}$}
				\DisplayProof 
				\qquad
				\AxiomC{}
				\UnaryInfC{$\Ietype{\zero}{\Inat[0][0]}$}
				\DisplayProof 
				\qquad
				\AxiomC{$\Ietype{e}{\Inat}$}
				\UnaryInfC{$\Ietype{\suc(e)}{\Inat[I+1][J+1]}$}
				\DisplayProof 
				\\ 
				\vvskip 
				\AxiomC{}
				\UnaryInfC{$\Ietype{\nil}{\Ilis[0][0]}$}
				\DisplayProof 
				\qquad 
				\AxiomC{$\Ietype{e}{\IB}$}
				\AxiomC{$\Ietype{e'}{\Ilis} $}
				\BinaryInfC{$\Ietype{e::e'}{\Ilis[I+1][J+1]}$}
				\DisplayProof 
				\qquad
				\AxiomC{}
				\UnaryInfC{$\Ietype{\true}{\Ibool}$}
				\DisplayProof 
				\\ 
				\vvskip 
				\AxiomC{}
				\UnaryInfC{$\Ietype{\false}{\Ibool}$}
				\DisplayProof
				\qquad 
				\AxiomC{$\Ietype[@][@][\Delta]{e}{U}$}
				\AxiomC{$\phi;\Phi \vdash \Gamma \subtype \Delta$}
				\AxiomC{$\phi; \Phi \vdash U \subtype T$}
				\TrinaryInfC{$\Ietype{e}{T}$}
				\DisplayProof 
			\end{center}   
		\end{framed}
		\caption{Typing Rules for Expressions}
		\label{f:sizetypeexpression}
	\end{figure} 
	
	\begin{figure}
		\centering
		\begin{framed}
			\footnotesize 
			\begin{center}
				\AxiomC{}
				\UnaryInfC{$\Itype{\pzero}{0}$}
				\DisplayProof 
				\qquad
				\AxiomC{$\Itype{P}{K}$}
				\AxiomC{$\Itype{Q}{K}$}
				\BinaryInfC{$\Itype{P \parr Q}{K}$}
				\DisplayProof 
				\\ 
				\vvskip 
				\AxiomC{$\Ietype[@][@][\Gamma,\Delta]{a}{\Iiserv}$}
				\AxiomC{$\Itype[(\phi,\vect{i})][@][\Gamma', \vect{\var} : \vect{T}]{P}{K}$}
				\AxiomC{$\Gamma'$ time invariant}
				\AxiomC{$\phi;\Phi \vdash \Idecr[@][@][\phi;\Phi] \subtype \Gamma'$}
				\QuaternaryInfC{$\Itype[@][@][\Gamma,\Delta]{\pserv}{I}$}
				\DisplayProof 
				\\ 
				\vvskip  			
				\AxiomC{$\Ietype{a}{\Iich}$}
				\AxiomC{$\Itype[@][@][\Idecr, \vect{\var} : \vect{T}]{P}{K}$}
				\BinaryInfC{$\Itype{\pin}{K+I}$}
				\DisplayProof 
				\qquad
				\AxiomC{$\Ietype{a}{\Ioch}$}
				\AxiomC{$\Ietype[@][@][\Idecr]{\vect{e}}{\vect{T}}$}
				\BinaryInfC{$\Itype{\pout}{I}$}
				\DisplayProof 
				\\ 
				\vvskip 
				\AxiomC{$\Ietype{a}{\Ioserv}$}
				\AxiomC{$\Ietype[@][@][\Idecr]{\vect{e}}{\Isub[\vect{T}][\vect{i}][\vect{J}]}$}
				\BinaryInfC{$\Itype{\pout}{I + \Isub[K][\vect{i}][\vect{J}]}$}
				\DisplayProof 
				\qquad
				\AxiomC{$\Itype[@][@][\Gamma, a : T]{P}{K}$}
				\UnaryInfC{$\Itype{\pnu P}{K}$}
				\DisplayProof 
				\\ 
				\vvskip 
				\AxiomC{$\Ietype{e}{\Inat}$}
				\AxiomC{$\Itype[@][(\Phi,I \le 0)][@]{P}{K}$}
				\AxiomC{$\Itype[@][(\Phi,J \ge 1)][{\Gamma, x : \Inat[I-1][J-1]}]{Q}{K}$}
				\TrinaryInfC{$\Itype{\pifn}{K}$}
				\DisplayProof 
				\\
				\vvskip   
				\AxiomC{$\Ietype{e}{\Ilis}$}
				\AxiomC{$\Itype[@][(\Phi,I \le 0)][@]{P}{K}$}
				\AxiomC{$\Itype[@][(\Phi,J \ge 1)][{\Gamma, x : \IB, y : \Ilis[I-1][J-1]}]{Q}{K}$}
				\TrinaryInfC{$\Itype{\pifl}{K}$}
				\DisplayProof 
				\\
				\vvskip 
				\AxiomC{$\Ietype{e}{\Ibool}$}
				\AxiomC{$\Itype{P}{K}$}
				\AxiomC{$\Itype{Q}{K}$}
				\TrinaryInfC{$\Itype{\pif}{K}$}
				\DisplayProof 
				\qquad 
				\AxiomC{$\Itype[@][@][{\Idecr[@][1]}]{P}{K}$}
				\UnaryInfC{$\Itype{\tick.P}{K+1}$}
				\DisplayProof 
				\\
				\vvskip 
				\AxiomC{$\Itype[@][@][\Delta]{P}{K}$}
				\AxiomC{$\phi;\Phi \vdash \Gamma \subtype \Delta$}
				\AxiomC{$\phi;\Phi \vDash K \le K'$}
				\TrinaryInfC{$\Itype{P}{K'}$}
				\DisplayProof
			\end{center}   
		\end{framed}
		\caption{Typing Rules for Processes}
		\label{f:sizetypeprocess}
	\end{figure}
	
	The type system for expressions should not be very surprising. Still, remark that to type a channel name, the only possible rule is the subtyping one. In Figure~\ref{f:sizetypeprocess}, subtyping allows to increase the bound on the complexity. Then, the rule for parallel composition shows that we consider parallel complexity as we take the maximum between the two processes instead of the sum. In the typing for input server, we integrate some weakening on context, and we want a time invariant process to type the server, as a server should not depend on the time. Note also that a server alone has no complexity, it is a call on this server that generates complexity, as we can see in the rule for output with server types. Some rules make the time advance in their continuation, for example the tick rule or input rule. This is expressed by the advance time operator on contexts. 
	
	Finally, remark that if we remove all size annotation and merge server types and channel types together to obtain back the types of Section~\ref{s:Simpletypes}, then all the rules of Figure~\ref{f:sizetypeexpression} and Figure~\ref{f:sizetypeprocess} are admissible in the type system of Figure~\ref{f:typeexpression} and Figure~\ref{f:typeprocess}.
	
	\begin{definition}[Forgetting Sizes]
	 Formally, given a sized type $T$, we define $\IU(T)$ the usual input/output type ($\IU$ is for forgetful) by:
	 \[ \IU(\Inat) := \nat \qquad \IU(\Ilis) := \lis(\IU(\IB)) \qquad \IU(\Ibool) := \bool \]
	 \[ \IU(\Ich) := \ch[\IU(\vect{T})] \qquad \IU(\Iich) := \ich[\IU(\vect{T})] \qquad \IU(\Ioch) := \och[\IU(\vect{T})] \]
	 \[ \IU(\Iserv) := \ch[\IU(\vect{T})] \qquad \IU(\Iiserv) := \ich[\IU(\vect{T})] \qquad \IU(\Ioserv) := \och[\IU(\vect{T})]\]
	\label{d:forgetful} 
	\end{definition}
	Then, we obtain the following lemma.
	
	\begin{lemma}
		If $\phi :\Phi \vdash T \subtype T'$ then $\IU(T) \subtype \IU(T')$. Moreover, if $\Ietype{e}{T}$ then $\IU(\Gamma) \vdash e : \IU(T)$ and if $\Itype{P}{K}$ then $\IU(\Gamma) \vdash P$
	\label{l:forgetfultyping}
	\end{lemma}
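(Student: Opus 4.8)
The plan is to prove Lemma~\ref{l:forgetfultyping} by three nested inductions, exploiting the remark (made just before Definition~\ref{d:forgetful}) that the sized rules are all admissible in the simple type system once sizes are forgotten. Concretely, I would prove the three statements in order: first the subtyping claim $\phi;\Phi \vdash T \subtype T' \Rightarrow \IU(T) \subtype \IU(T')$ by induction on the derivation of $\phi;\Phi \vdash T \subtype T'$; then the expression claim $\Ietype{e}{T} \Rightarrow \IU(\Gamma) \vdash e : \IU(T)$ by induction on that typing derivation; and finally the process claim $\Itype{P}{K} \Rightarrow \IU(\Gamma) \vdash P$ by induction on the process typing derivation. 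The subtyping statement is needed as a lemma for the two typing statements, because both the expression system (Figure~\ref{f:sizetypeexpression}) and the process system (Figure~\ref{f:sizetypeprocess}) have subsumption rules that invoke $\phi;\Phi \vdash \Gamma \subtype \Delta$, and I must turn these into the simple-type subsumptions $\IU(\Gamma) \subtype \IU(\Delta)$ used in Figure~\ref{f:typeexpression} and Figure~\ref{f:typeprocess}.

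For the subtyping induction the key observations are routine: $\IU$ collapses both $\Inat$ and $\Ilis$ to $\nat$ and $\lis$ while discarding the index bounds, so the rules that only tighten bounds (the $\nat$ and $\lis$ rules, the $\bool$ rule) map to reflexivity $\IB \subtype \IB$ in Figure~\ref{f:subtyping}. The channel and server subtyping rules all carry index side-conditions such as $\phi;\Phi \vDash I = J$ or $\phi;\Phi \vDash K \le K'$, but since $\IU$ forgets every index these conditions simply disappear, and what remains is exactly the simple-type channel subtyping (invariance for $\ch$, covariance for $\ich$, contravariance for $\och$), with the two server-to-channel collapses $\IU(\Iserv) = \ch$, $\IU(\Iiserv) = \ich$, $\IU(\Ioserv) = \och$ ensuring that the server subtyping rules land on the corresponding channel rules. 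Transitivity is handled by the transitivity rule of Figure~\ref{f:subtyping} using the induction hypotheses.

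For the two typing inductions I would go rule by rule, observing that each sized rule becomes its simple-type analogue after applying $\IU$: the index manipulations ($K$, the complexity bounds, the side-constraints $I \le 0$, $J \ge 1$, the index substitutions $\Isub[\vect{T}][\vect{i}][\vect{J}]$ in the output-to-server rule) are all invisible to $\IU$, so for instance the sized input rule and replicated-input rule both collapse to the single simple input/server rule $\Gamma \vdash a : \ich$, $\Gamma, \vect{\var} : \vect{T} \vdash P$. The genuinely delicate point, and what I expect to be the main obstacle, is the interaction of $\IU$ with the advance-time operator $\Idecr[\cdot][I]$ appearing in the input, output, and tick rules. Here I would prove an auxiliary fact, $\IU(\Idecr[\Gamma][I]) = \IU(\Gamma \restriction)$ where the right side is $\Gamma$ with exactly the non-server channel bindings whose time has expired removed; that is, $\IU$ of $\Idecr[T][I]$ equals $\IU(T)$ whenever $\Idecr[T][I]$ is defined, and a binding dropped by $\Idecr[\cdot][I]$ (an expired plain channel, or the loss of the $\Iserv$ input capability that becomes $\Ioserv$) corresponds either to a weakening or to the $\ich/\och$ collapse that $\IU$ already performs. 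Using $\vdash P$ is invariant under adding or removing hypotheses consistently (admissibility of weakening in Figure~\ref{f:typeprocess}), the removal of expired channels is harmless, so each continuation typed under $\Idecr[\Gamma][I]$ still types under $\IU(\Gamma)$. For the server rule I additionally use the time-invariant context $\Gamma'$ and the subtyping premise $\phi;\Phi \vdash \Idecr[\Gamma][I] \subtype \Gamma'$, feeding it through the already-proven subtyping statement. Once this advance-time/$\IU$ commutation is in place, every remaining case is a direct translation, and the subsumption rules close using the subtyping part and the admissible simple-type subsumption.
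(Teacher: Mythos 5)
Your proposal is correct and follows essentially the same route as the paper, whose proof rests on exactly your two ingredients --- weakening for plain input/output types and the commutation of $\IU$ with the advance-time operator --- followed by induction on the subtyping and typing derivations. The only wrinkle is that your auxiliary equality $\IU(\Idecr[T][I]) = \IU(T)$ fails literally for an expired server, where $\Idecr[\cdot][I]$ turns $\Iserv$ (sent by $\IU$ to a $\ch$ type) into $\Ioserv$ (sent to an $\och$ type); that case is closed by $\ch \subtype \och$ and simple subsumption rather than by any collapse performed by $\IU$ itself, which is the same wrinkle left implicit in the paper's one-sentence argument.
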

	
	\begin{proof}
		Once we have weakening for input/output types and that advancing time does not change the underlying input/output type when it is defined, the proof can be made by induction on the subtyping derivation or the typing.
	\end{proof}
	\section{Subject Reduction and Complexity Bound}
	\label{s:subjectreduction}
	
	In this section, we prove that our type system can indeed give a bound on the number of time reduction of a process following the strategy of Definition~\ref{d:reductionstrategy}.
	
	\subsection{Intermediate Lemmas}
	\label{ss:intlemmas}
	
	We first show some usual and intermediate lemmas on the typing system.  
	
	\begin{lemma}[Weakening]
		Let $\phi,\phi'$ be disjoint set of index variables, $\Phi$ be a set of constraint on $\phi$, $\Phi'$ be a set of constraints on $(\phi,\phi')$, $\Gamma$ and $\Gamma'$ be contexts on disjoint set of variables. 
		\begin{enumerate}
			\item If $\phi;\Phi \vDash C$ then $(\phi,\phi');(\Phi,\Phi') \vDash C$. 
			\item If $\phi;\Phi \vdash T \subtype U$ then $(\phi,\phi');(\Phi,\Phi') \vdash T \subtype U$.
			\item If $\Ietype{e}{T}$ then $\Ietype[(\phi,\phi')][(\Phi,\Phi')][\Gamma,\Gamma']{e}{T}$. 
			\item $\Idecr[@][@][(\phi,\phi');(\Phi,\Phi')] = \Delta ,\Delta'$ with $(\phi;\phi');(\Phi;\Phi') \vdash \Delta \subtype \Idecr[@][@][\phi;\Phi]$. 
			\item If $\Itype{P}{K}$ then $\Itype[(\phi,\phi')][(\Phi,\Phi')][\Gamma,\Gamma']{P}{K}$.  
		\end{enumerate}
	\label{l:weakening}
	\end{lemma}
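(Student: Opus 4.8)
The plan is to establish the five statements in the order given, since each relies on its predecessors, and every part is proved by a straightforward induction (on the semantics of $\vDash$, on a subtyping derivation, on a typing derivation, or by unfolding a definition). Throughout I may assume, by $\alpha$-renaming the bound index variables, that $\phi'$ is kept disjoint from every index variable bound inside the derivations in play (the $\vect{i}$ of server types and the variables introduced by the two match rules), so that no capture occurs.

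Statement~(1) is immediate from the semantics of $\vDash$: if $\rho \vDash (\Phi,\Phi')$ then in particular $\rho \vDash \Phi$, and since the free variables of $C$ lie in $\phi$ the truth value of $C$ under $\rho$ depends only on the restriction of $\rho$ to $\phi$; hence $\phi;\Phi \vDash C$ gives $\rho \vDash C$. Statement~(2) is a routine induction on the derivation of $\phi;\Phi \vdash T \subtype U$: each rule is either purely structural, so the induction hypothesis applies to the premises, or carries an index side-condition of the shape $\phi;\Phi \vDash \cdots$, which lifts to $(\phi,\phi');(\Phi,\Phi')$ by~(1). Statement~(3) is then an induction on the typing of $e$: the axiom $\var : T \in \Gamma$ survives the extension to $\Gamma,\Gamma'$, and the only subsumption rule reduces to~(1) and~(2).

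Statement~(4) is the technical core, and I would prove it by unfolding Definition~\ref{d:advancetime} entry by entry on $\Gamma$, in the same spirit as the corollary of Lemma~\ref{l:advanceandsubtyping}. The guards defining $\Idecr[\cdot][I]$ are all of the form $\phi;\Phi \vDash J \ge I$, so by~(1) they are \emph{monotone} in the constraint set: whenever $\Idecr[T][I][\phi;\Phi]$ is defined, so is $\Idecr[T][I][(\phi,\phi');(\Phi,\Phi')]$. For base types and for the plain channel and output-server cases the decremented type is literally unchanged, so reflexivity of $\subtype$ suffices. The only genuine movement is for a full server: under $\phi;\Phi$ the guard $J \ge I$ may fail and produce $\Ioserv[(J-I)]$, whereas under $(\phi,\phi');(\Phi,\Phi')$ it may now hold and produce $\Iserv[(J-I)]$, which is harmless since $\Iserv[(J-I)] \subtype \Ioserv[(J-I)]$. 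Collecting the entries already defined under $\phi;\Phi$ into $\Delta$ and the entries defined only under the larger set into $\Delta'$ yields $\Idecr[\Gamma][I][(\phi,\phi');(\Phi,\Phi')] = \Delta,\Delta'$ with $(\phi,\phi');(\Phi,\Phi') \vdash \Delta \subtype \Idecr[\Gamma][I][\phi;\Phi]$.

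Finally, statement~(5) is an induction on the derivation of $\Itype{P}{K}$, and this is where I expect the work to concentrate. The rules that do not advance time (parallel composition, conditionals, restriction and $\pzero$) pass through the induction hypothesis directly, using~(1) to absorb the extra constraints of the match rules and~(1),~(2) for the concluding subsumption rule. The main obstacle is the family of rules that advance time — input, output on a channel or on a server, and $\tick$ — where the continuation is typed under a decremented context. Since decrement distributes over concatenation, $\Idecr[\Gamma,\Gamma'][I][(\phi,\phi');(\Phi,\Phi')] = \Idecr[\Gamma][I][(\phi,\phi');(\Phi,\Phi')], \Idecr[\Gamma'][I][(\phi,\phi');(\Phi,\Phi')]$, and~(4) rewrites the $\Gamma$-part as $\Delta,\Delta'$ with $\Delta \subtype \Idecr[\Gamma][I][\phi;\Phi]$. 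The recipe is then uniform: apply the induction hypothesis to weaken the continuation by $\phi',\Phi'$ and by the surplus entries ($\Delta'$ together with the decremented $\Gamma'$), insert one subsumption step to replace $\Idecr[\Gamma][I][\phi;\Phi]$ by its subtype $\Delta$ — for the output rules this is the expression-subsumption rule furnished by~(3) — and re-apply the original rule, the expression premise $a : \Iich$ (respectively $\Ioch$, $\Ioserv$) being transported by~(3). For the server rule the weakening block of the conclusion is simply merged into the discardable part of the rule, so the body derivation needs only the constraint weakening $\phi',\Phi'$; its side-conditions survive because time-invariance rests on constraints $I=0$ that lift by~(1), while the premise bounding $\Idecr[\Gamma][I]$ from above by the body's context lifts by~(2),~(4) and transitivity, the surplus entries $\Delta'$ being harmless because context subtyping admits additional variables.
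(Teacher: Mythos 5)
Your proposal follows the paper's own proof in structure and in every key idea: Point~1 directly from the semantics of $\vDash$; Point~2 by induction on the subtyping derivation using Point~1; Point~3 by induction on the typing derivation using Points~1 and~2; Point~4 by unfolding Definition~\ref{d:advancetime}, with the same crucial observation that enlarging the constraint set can only prevent a server from being degraded to an output server, so the newly decremented context is pointwise a subtype of the old one; and Point~5 by induction on the typing derivation, inserting a subtyping step wherever the advance-time operator appears and exploiting the weakening built into the input-server rule. This is exactly the paper's (much terser) argument.

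One clause, however, does not survive scrutiny: in the input-server case you dispose of the surplus entries $\Delta'$ (those entries of $\Gamma$ whose decrement becomes defined only under the enlarged constraint set) ``because context subtyping admits additional variables.'' The paper never adopts such a convention; every time a domain mismatch arises (Point~4 of this very lemma, the corollary to Lemma~\ref{l:advanceandsubtyping}, Lemma~\ref{l:indexsubstitution}, Lemma~\ref{l:delaying}, the proof of Theorem~\ref{t:quantitativesubjred}) contexts are explicitly split so that subtyping only ever relates contexts with the same domain. And it had better be so: if subsumption could silently absorb extra variables, context weakening for processes would be a one-step consequence of the subtyping rule, whereas the paper proves it by a genuine induction. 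The repair is immediate and uses the mechanism you already invoke for the extra context: when re-applying the input-server rule, re-partition its conclusion context so that the surplus entries of $\Gamma$ join the discardable block of that rule; the remaining part then decrements, under $(\phi,\phi');(\Phi,\Phi')$, to a context with exactly the domain of $\Idecr[\Gamma][I][\phi;\Phi]$ and pointwise a subtype of it (Point~4 with empty surplus), so Point~2 and transitivity yield the premise required against the time-invariant context of the body. With that one adjustment your proof is correct and coincides with the paper's.
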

	
	\begin{proof}
		Point~1 is a direct consequence of the definition of $\phi;\Phi \vDash C$. Point~2 is proved by induction on the subtyping derivation, and it uses explicitly Point~1. Point~4 is a consequence of Point~1: everything that is defined in $\Idecr[@][@][\phi;\Phi]$ is also defined in $\Idecr[@][@][(\phi,\phi');(\Phi,\Phi')]$, and the subtyping condition is here since with more constraints, a server may not be changed into an output server by the advance of time. Point~3 and Point~5 are proved by induction on the typing derivation, and each point uses crucially the previous ones. Note that the weakening integrated in the rule for input servers is primordial to obtain Point~5. Note also that when the advance time operator is used, the weakened typing is obtained with the use of a subtyping rule. 
	\end{proof}
	
	We also show that we can remove some useless hypothesis. 
	
	\begin{lemma}[Strengthening]
		Let $\phi$ be a set of index variables, $\Phi$ be a set of constraint on $\phi$, and $C$ a constraint on $\phi$ such that $\phi;\Phi \vDash C$. 
		\begin{enumerate}
			\item If $\phi;(\Phi,C) \vDash C'$ then $\phi;\Phi \vDash C'$. 
			\item If $\phi;(\Phi,C) \vdash T \subtype U$ then $\phi;\Phi \vdash T \subtype U$.
			\item If $\Ietype[@][(\Phi,C)][\Gamma,\Gamma']{e}{T}$ and the variables in $\Gamma'$ are not free in $e$, then $\Ietype{e}{T}$.
			\item $\Idecr[@][@][\phi;(\Phi,C)] = \Idecr[@][@][\phi;\Phi]$. 
			\item If $\Itype[@][(\Phi,C)][\Gamma,\Gamma']{P}{K}$ and the variables in $\Gamma'$ are not free in $P$, then $\Itype{P}{K}$.  
		\end{enumerate}
		\label{l:strengthening}
	\end{lemma}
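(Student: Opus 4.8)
The plan is to mirror the structure of the Weakening lemma (Lemma~\ref{l:weakening}), establishing the five points in order so that each may invoke its predecessors. The guiding observation is that the removed hypothesis $C$ is semantically vacuous: since $\phi;\Phi \vDash C$, every valuation $\rho$ with $\rho \vDash \Phi$ also satisfies $C$, and hence satisfies $(\Phi,C)$. Point~1 is then immediate from the definition of $\vDash$: if $\rho \vDash \Phi$ then $\rho \vDash (\Phi,C)$, so $\rho \vDash C'$ by hypothesis, and therefore $\phi;\Phi \vDash C'$.

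Point~2 follows by induction on the subtyping derivation. Every side condition occurring in the rules of Figure~\ref{f:sizesubtype} is an index entailment of the form $\phi;(\Phi,C) \vDash \cdots$, which Point~1 turns into the corresponding entailment under $\Phi$; the structural cases are then routine. Point~4 is likewise a consequence of Point~1. The case analysis defining $\Idecr[@][@][\phi;(\Phi,C)]$ in Definition~\ref{d:advancetime} branches solely on entailments of the shape $\phi;(\Phi,C) \vDash J \ge I$, and such an entailment holds exactly when $\phi;\Phi \vDash J \ge I$ holds: the implication from $(\Phi,C)$ to $\Phi$ is Point~1, and the reverse is monotonicity of entailment under constraint addition (Lemma~\ref{l:weakening}, Point~1). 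Consequently the two operators make identical choices on every type, so the resulting contexts are \emph{literally equal}, which is why this point is a strict identity rather than the subtyping relation appearing in Weakening.

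Points~3 and~5 are proved by simultaneous induction on the typing derivations, using Points~1, 2 and~4 together with the hypothesis that the variables of $\Gamma'$ are not free in $e$ (resp.\ $P$). In the variable axiom that freeness condition forces the relevant hypothesis to lie in $\Gamma$ rather than $\Gamma'$, so the axiom reapplies with context $\Gamma$. The subtyping rules are handled by Point~2. For the rules that advance time, namely the (replicated) input, output, server and $\tick$ rules, I first rewrite $\Idecr[@][@][\phi;(\Phi,C)]$ as $\Idecr[@][@][\phi;\Phi]$ by Point~4, then note that a variable not free in the process is not free in its continuation, and finally apply the induction hypothesis to discard the corresponding fragment of the context before reapplying the rule.

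The step I expect to require the most care is the treatment in Point~5 of the conditional (pattern-matching) rules, whose premises are typed under a locally augmented constraint set such as $(\Phi,C,I \le 0)$ or $(\Phi,C,J \ge 1)$. To invoke the induction hypothesis and remove $C$ from these premises I must check that $C$ is still entailed by the augmented set, i.e.\ that $\phi;(\Phi,I \le 0) \vDash C$ and $\phi;(\Phi,J \ge 1) \vDash C$; both follow from $\phi;\Phi \vDash C$ by monotonicity (Lemma~\ref{l:weakening}, Point~1), since the constraint sets are unordered and $(\Phi,C,I \le 0) = ((\Phi,I \le 0),C)$. Granting this bookkeeping, and the parallel fact that the freshly bound variables are distinct from those of $\Gamma'$, the induction hypothesis applies to each branch and the rule reapplies verbatim, which completes Points~3 and~5 and hence the lemma.
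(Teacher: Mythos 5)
Your proof is correct and follows essentially the same route as the paper's: Point~1 directly from the definition of entailment, Point~2 by induction on the subtyping derivation, Point~4 from Point~1 of this lemma combined with Point~1 of Weakening (Lemma~\ref{l:weakening}), and Points~3 and~5 by induction on the typing derivations. The paper's proof is far terser; your additional bookkeeping --- the two-directional entailment argument giving literal equality in Point~4, and the re-verification of $\phi;(\Phi,I \le 0) \vDash C$ before applying the induction hypothesis in the match rules --- correctly fills in exactly the steps the paper leaves implicit.
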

	
	\begin{proof}
		Point~1 is a direct consequence of the definition. Point~2 is proved by induction on the subtyping derivation. Point~4 is straightforward with Point~1 of this lemma and Point~1 of Lemma~\ref{l:weakening}. Point~3 and Point~5 are proved by induction on the typing derivation. 
	\end{proof}
	
	Then, we prove that index variables can indeed be substituted by any other indexes. 
	
	\begin{lemma}[Index Substitution]
		Let $\phi$ be a set of index variable and $i \notin \phi$. Let $J$ be an index with free variables in $\phi$. Then, 
		\begin{enumerate}
			\item $\sem[\rho]{\Isub} = \sem[{\rho \lbrack i \mapsto \sem[\rho]{J} \rbrack }]{I}$.
			\item If $(\phi,i);\Phi \vDash C$ then $\phi;\Isub[\Phi] \vDash \Isub[C]$. 
			\item If $(\phi,i);\Phi \vdash T \subtype U$ then $\phi;\Isub[\Phi] \vdash \Isub[T] \subtype \Isub[U]$.
			\item If $\Ietype[(\phi,i)]{e}{T}$ then $\Ietype[@][{\Isub[\Phi]}][{\Isub[\Gamma]}]{e}{\Isub[T]}$.
			\item $\Idecr[{\Isub[\Gamma]}][\Isub][{\phi;\Isub[\Phi]}] = \Delta ,\Delta'$ with $\phi;\Isub[\Phi] \vdash \Delta \subtype \Isub[({\Idecr[@][@][(\phi,i);\Phi]})]$. 
			\item If $\Itype[(\phi,i)]{P}{K}$ then $\Itype[@][{\Isub[\Phi]}][{\Isub[\Gamma]}]{P}{\Isub[K]}$.
		\end{enumerate}
	\label{l:indexsubstitution}
	\end{lemma}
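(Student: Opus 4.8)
The whole lemma is a package of six statements that I would prove \emph{in order}, each using the earlier ones as available facts. Points 1--5 set up the machinery and Point 6 is the real target (it is the analogue for processes of Point 4 of Lemma~\ref{l:weakening}); I expect to mirror the structure of the Weakening proof closely, since Index Substitution and Weakening have the same shape. The plan is as follows. Point~1 is a purely semantic fact about index interpretation, proved by a routine induction on the structure of the index $I$: the only interesting case is the variable $i$ itself, where both sides evaluate to $\sem[\rho]{J}$, and for a function symbol $f(I_1,\dots,I_n)$ one applies the induction hypothesis componentwise inside $\sem{f}$. Point~2 then follows almost immediately: a constraint is $I \rel J'$, and given $\rho \vDash \Isub[\Phi]$ one builds $\rho' = \rho[i \mapsto \sem[\rho]{J}]$, checks via Point~1 that $\rho' \vDash \Phi$, uses the hypothesis to get $\rho' \vDash C$, and transports back through Point~1 to conclude $\rho \vDash \Isub[C]$.

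Point~3 is proved by induction on the subtyping derivation $(\phi,i);\Phi \vdash T \subtype U$, pushing the substitution $\{J/i\}$ through every rule of Figure~\ref{f:sizesubtype}. The side conditions of the form $\phi;\Phi \vDash I' \le I$ become, after substitution, discharged by Point~2; the structural/transitivity cases follow directly from the induction hypothesis. The one point requiring care is the server rules, where the premises live in the extended context $(\phi,\vect{i})$: here I would first $\alpha$-rename the bound index variables $\vect{i}$ so that they are disjoint from $i$ and from the free variables of $J$, so that $\{J/i\}$ commutes with the binder $\forall \vect{i}$ and the induction hypothesis applies to the premise judgements over $(\phi,\vect{i},i)$. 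Point~4 is then an induction on the expression typing derivation; the only genuine case is the subtyping rule for expressions, which is handled by combining Point~3 (for $U \subtype T$ and $\Gamma \subtype \Delta$) with the induction hypothesis, and the arithmetic successor/cons rules where $\{J/i\}$ must be pushed into indices like $\Inat[I+1][J'+1]$, which is immediate since substitution is a homomorphism on indices.

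Point~5 is the substitution analogue of Point~4 of Weakening and is the technical heart of the setup. The issue is that $\Idecr$ is defined by a case split on whether $\phi;\Phi \vDash J' \ge I$, and substituting by $\{J/i\}$ can change which branch is taken for a server type: a server whose time $J'$ was $\ge I$ under $(\phi,i);\Phi$ might, after instantiation, no longer satisfy $\Isub[J'] \ge \Isub[I]$ under $\phi;\Isub[\Phi]$, so an input-capable server can degrade to an output-only server. This is exactly why the conclusion is stated as a subtyping $\Delta \subtype \Isub[(\Idecr)]$ rather than an equality: the substituted-then-decremented context $\Idecr[\Isub[\Gamma]]$ may drop input capabilities that survive in the decremented-then-substituted context. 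I would prove the containment entrywise over $\Gamma$, checking each of the five clauses of Definition~\ref{d:advancetime}; the base-type and channel clauses give equality (using Point~2 to compare the guard conditions), and for the server clauses one verifies that whenever $\Idecr[\Isub[\Gamma]]$ keeps an entry, the corresponding entry of $\Isub[(\Idecr)]$ is a supertype, using $\Iserv \subtype \Ioserv$ and $\Iiserv$ degrading appropriately.

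Finally Point~6 is proved by induction on the process typing derivation $\Itype[(\phi,i)]{P}{K}$, pushing $\{J/i\}$ through every rule of Figure~\ref{f:sizetypeprocess} and invoking the previous points at each node: Point~4 for the expression premises (e.g.\ the typings of $a$ and of $\vect{e}$), Point~2 for the constraint premises (such as $(\Phi, I \le 0)$ and $(\Phi, J' \ge 1)$ in the pattern-matching rules, where substitution must also be pushed into the refined types $\Inat[I-1][J'-1]$), and Points~3 and~5 wherever the advance-time operator $\Idecr$ appears in input, output, and server rules. The main obstacle will be the rules that use $\Idecr$ together with a context premise: there the induction hypothesis yields a typing over $\Idecr[\Isub[\Gamma]]$, but the rule as stated in Figure~\ref{f:sizetypeprocess} demands a typing over $\Isub[(\Idecr)]$, and these differ by exactly the subtyping of Point~5. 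I would close this gap by appending an application of the process subtyping rule (the last rule of Figure~\ref{f:sizetypeprocess}), which is sound precisely because Point~5 gives $\Delta \subtype \Isub[(\Idecr)]$ in the correct direction. The input-server rule needs the extra observation that time-invariance of $\Gamma'$ is preserved by $\{J/i\}$ (an $I=0$ guard stays $I=0$), and as in Point~3 the bound index variables $\vect{i}$ must be renamed away from $i$ before the induction hypothesis is applied to the body $P$.
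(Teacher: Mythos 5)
Your plan for Points 1--4 matches the paper's proof (induction on the structure of the index, Point~2 as a direct consequence of Point~1, induction on the subtyping derivation, induction on the expression typing derivation), and the care about renaming the bound server variables $\vect{i}$ away from $i$ is a fine detail. The problem is in Points 5 and 6, where you have the direction of the mismatch between substitution and time advance exactly backwards. By your own Point~2, validity of constraints is \emph{preserved} by the substitution $\{J/i\}$: if $(\phi,i);\Phi \vDash J' \ge I$ then $\phi;\Isub[\Phi] \vDash \Isub[J'] \ge \Isub[I]$. So a channel entry that survives $\Idecr[\cdot][I]$ before substitution still survives after it, and a server that keeps its input capability before substitution keeps it after; your claimed scenario (``a server whose time $J'$ was $\ge I$ under $(\phi,i);\Phi$ might, after instantiation, no longer satisfy $\Isub[J'] \ge \Isub[I]$'') cannot happen. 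What actually fails is the converse: a constraint that is not valid for all values of $i$ may become valid once $i$ is instantiated to $J$ (e.g.\ $i \ge 1$ is not valid under $(\phi,i);\emptyset$, but its instance may be). Consequently $\Idecr[{\Isub[\Gamma]}][{\Isub[I]}]$ is the \emph{larger} context: it retains entries that $\Isub[({\Idecr[\Gamma][I]})]$ erases, and retains input capability on servers where the latter has been downgraded to output-only. That is exactly what the decomposition $\Delta,\Delta'$ in the statement records: $\Delta'$ collects the genuinely extra entries, and only $\Delta$ is comparable (as a subtype) with $\Isub[({\Idecr[\Gamma][I]})]$. Your entrywise verification of Point~5, which tries to match every entry kept by $\Idecr[{\Isub[\Gamma]}][{\Isub[I]}]$ with a supertype entry of $\Isub[({\Idecr[\Gamma][I]})]$, is attempting to compare against entries that need not exist.

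This derails Point~6 as written. In each rule that uses the time-advance operator (input, output, tick, server), the induction hypothesis yields a typing over $\Isub[({\Idecr[\Gamma][I]})],\dots$ while re-applying the rule under $\phi;\Isub[\Phi]$ with context $\Isub[\Gamma]$ requires a typing over $\Idecr[{\Isub[\Gamma]}][{\Isub[I]}],\dots$, i.e.\ over $\Delta,\Delta',\dots$ (you state the roles of these two contexts the wrong way round). The subtyping rule of Figure~\ref{f:sizetypeprocess} only replaces a context by a pointwise subtype; it cannot create the extra entries $\Delta'$. Those can only be introduced by the Weakening lemma (Lemma~\ref{l:weakening}), which your plan never invokes, so ``appending an application of the process subtyping rule'' does not close the gap. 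This is precisely the subtlety the paper flags: its proof of Point~6 proceeds by induction on the process $P$ rather than on the typing derivation, \emph{because} Point~5 forces the use of weakening, and weakening produces typings that are not subderivations of the original one. To repair your plan, reorient Point~5, and for Point~6 either switch to structural induction on $P$ as the paper does, or keep the derivation induction but, in every case involving $\Idecr$, apply the induction hypothesis first, then the subtyping rule (to pass from $\Isub[({\Idecr[\Gamma][I]})]$ to $\Delta$), then Weakening (to add $\Delta'$), and only then re-apply the typing rule.
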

	
	\begin{proof}
		Point~1 is proved by induction on $I$. Then, Point~2 is a rather direct consequence of Point~1. Point~3 is proved by induction on the subtyping derivation, then Point~4 is proved by induction on the typing derivation. Point~5 is direct with the use of Point~2. And finally Point~6 is proved by induction on $P$. The induction is on $P$ and not the typing derivation because of Point~5 that forces the use of weakening (Lemma~\ref{l:weakening}). 
	\end{proof}
	
	Those lemmas are rather usual in an index based system. However, the following one relies directly on our notion of time and the type system. 
	
	\begin{lemma}[Delaying]
		Given a type $T$ and an index $I$, we define the \emph{delaying} of $T$ by $I$ units of time, denoted $\Iincr{T}$: 
		\[ \Iincr{\IB} = \IB \qquad \Iincr{(\Ich[J])} = \Ich[J+I] \]
		and for other channel and server types, the definition is the same as the one on the right above. This definition can be extended to contexts. With this, we have:
		\begin{enumerate}
			\item If $\phi;\Phi \vdash T \subtype U$ then $\phi;\Phi \vdash \Iincr{T} \subtype \Iincr{U}$.
			\item If $\Ietype{e}{T}$ then $\Ietype[@][@][\Iincr{\Gamma}]{e}{\Iincr{T}}$.
			\item $\Idecr[\Iincr{\Gamma}][J][\phi;\Phi] = \Delta, \Delta'$ with $\phi;\Phi \vdash \Delta \subtype \Iincr{({\Idecr[\Gamma][J][\phi;\Phi]})}$.
			\item $\Idecr[\Iincr{\Gamma}][(J + I)] = \Idecr[\Gamma][J]$. 
			\item If $\Itype{P}{K}$ then $\Itype[@][@][\Iincr{\Gamma}]{P}{K + I}$.
			\item For any context $\Gamma$, $\Gamma = \Gamma',\Delta $ with $\phi;\Phi \vdash \Gamma' \subtype \Iincr{(\Idecr[\Gamma])}$.
		\end{enumerate}
	\label{l:delaying}
	\end{lemma}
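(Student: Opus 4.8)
The plan is to prove the six points in the order given, each relying on the earlier ones, exactly as in Lemma~\ref{l:weakening} and Lemma~\ref{l:indexsubstitution}. The key simplifying observation is that $\Iincr{(\cdot)}$ only adds $I$ to the outermost time index and leaves carried types, base types, and the complexity annotation of servers untouched. Hence Point~1 is a routine induction on the derivation of $\phi;\Phi \vdash T \subtype U$: every premise is either a constraint on carried types, which is unaffected, or an (in)equality between times such as $I = J$ or $I' \le I$, which survives the uniform shift of both times by the delay; the base and transitivity cases are immediate. Point~2 then follows by induction on the expression typing derivation, the only interesting case being subsumption, discharged by Point~1.

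Points~3 and~4 I would prove by a case analysis on each individual assignment of $\Gamma$ (writing $t$ for its time), comparing ``advance-then-delay'' with ``delay-then-advance''. Point~4 is an exact identity: advancing a delayed type by $J+I$ undoes the $+I$ of the delay, so for base types (no time), for non-server channels (erased precisely when $t+I < J+I$, i.e. $t < J$), for full servers (the truncation threshold shifts consistently) and for output servers (never erased) one recovers $\Idecr[\Gamma][J]$ on the nose. Point~3 is the delicate variant, since the delayed context is advanced by only $J$, leaving $I$ extra units on the clock: base types and channels above the threshold match exactly and enter $\Delta$ with reflexive subtyping, while channels that survive this shorter advance but are erased in the target $\Iincr{(\Idecr[\Gamma][J])}$ become the surplus $\Delta'$. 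The genuinely subtle case is a server whose availability time is crossed by the advance, for delaying first keeps its input capability whereas advancing first downgrades it to an output server; the reconciliation therefore routes through the one-directional subtyping $\Iserv \subtype \Ioserv$ together with the split $\Delta,\Delta'$, and this is the step I expect to be the main obstacle.

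Point~5 is the heart of the statement and I would prove it by induction on the derivation of $\Itype{P}{K}$. The rules that do not advance the context in a premise --- parallel composition, the conditionals, $\pnu$, and subsumption --- go through immediately after delaying, using Point~1 on the subtyping premises. For the communication and server-definition rules the premise advances the context by exactly the time $I_0$ of the channel or server in play; after delaying by $I$ this time becomes $I_0+I$, so Point~4 (with $J=I_0$) gives $\Idecr[\Iincr{\Gamma}][(I_0+I)] = \Idecr[\Gamma][I_0]$, the premises still type verbatim (the time-invariant witness of the server rule being reused undelayed), and the rule reproduces a complexity exactly $I$ larger, as required. The one rule advancing by a fixed amount unrelated to the delay is the $\tick$ rule: its premise lives in $\Idecr[\Gamma][1]$, the induction hypothesis yields a typing in $\Iincr{(\Idecr[\Gamma][1])}$ of complexity $K+I$, and I would invoke Point~3 with $J=1$ to write $\Idecr[\Iincr{\Gamma}][1] = \Delta,\Delta'$ with $\Delta \subtype \Iincr{(\Idecr[\Gamma][1])}$, apply the subsumption rule of Figure~\ref{f:sizetypeprocess} to move the typing into $\Delta$, weaken to $\Delta,\Delta'$ by Lemma~\ref{l:weakening}, and re-apply the $\tick$ rule; this yields complexity $(K+1)+I$, matching the target.

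Finally, Point~6 is once more a per-variable case analysis: on every assignment whose time is at least $I$, delaying the advanced context $\Iincr{(\Idecr[\Gamma])}$ recovers the original type exactly, and these form $\Gamma'$ with reflexive subtyping, whereas the assignments erased or downgraded by the advance form the surplus $\Delta$; as in Point~3, servers are the only place where capability has genuinely been lost, so $\Iserv \subtype \Ioserv$ is again needed to certify $\phi;\Phi \vdash \Gamma' \subtype \Iincr{(\Idecr[\Gamma])}$. The recurring difficulty throughout Points~3, ~5 and~6 is thus the same server phenomenon: advancing past a server's availability time is irreversible, so all these reconciliations must be mediated by the one-directional subtyping $\Iserv \subtype \Ioserv$ and the context split rather than by plain equalities.
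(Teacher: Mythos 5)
Your proposal is correct and follows essentially the same route as the paper, whose own proof is far terser: it declares Points~1--4 and~6 ``straightforward'' and proves Point~5 by induction, using Point~4 on every channel and server rule and Point~3 (exactly via the subsumption-plus-weakening detour you spell out) for the tick rule. The one caveat is the server case of Points~3 and~6 that you yourself flag as the main obstacle: since subtyping requires provably \emph{equal} times and subtraction is truncated, the needed equality $(t+I)-J = (t-J)+I$ is not derivable when $\phi;\Phi \not\vDash t \ge J$, so routing through $\Iserv \subtype \Ioserv$ does not by itself close that case --- but the paper glosses over this identical point, so your treatment is no less rigorous than the published proof.
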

	
	\begin{proof}
		Point~1, Point~2, Point~3 and Point~4 are straightforward. Then, Point~5 is proved by induction on $P$. Point~4 is used on every rule for channel or servers, and Point~3 is used in the rule for tick. Point~6 is another straightforward proof. It is not used in the proof of Point~5 but it is useful for our subject reduction.  
	\end{proof}
	
	We can now show the usual variable substitution lemmas. 
	
	\begin{lemma}[Substitution]
		\begin{enumerate}
			\item If $\Ietype[@][@][\Gamma, \var : T]{e'}{U}$ and $\Ietype{e}{T}$ then $\Ietype{\psub[e'][\var][e]}{U}$.
			\item If $\Itype[@][@][\Gamma, \var : T]{P}{K}$ and $\Ietype{e}{T}$ then $\Itype{\psub[P][\var][e]}{K}$.
		\end{enumerate}
	\label{l:substitution}
	\end{lemma}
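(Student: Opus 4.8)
The plan is to prove both points by induction on the typing derivation of the left-hand premise: Point~1 by induction on the derivation of $\Ietype[@][@][\Gamma,\var:T]{e'}{U}$, and Point~2 by induction on the derivation of $\Itype[@][@][\Gamma,\var:T]{P}{K}$, with Point~2 freely invoking Point~1 whenever a subject expression (a channel name, or the arguments of an output) must be substituted. By $\alpha$-renaming I assume every bound variable is distinct from $\var$ and from the free variables of $e$, so that substitution commutes with all binders. The two generic cases are the variable axiom and the subtyping rule. For the variable axiom I split on whether the typed variable is $\var$: if it is, then $U = T$ and $\psub[\var][\var][e] = e$, so the goal is exactly the hypothesis $\Ietype{e}{T}$; otherwise the variable lies in $\Gamma$ and is untouched. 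For the subtyping rule (premise $\Delta \vdash e' : U'$, with $\Gamma,\var:T \subtype \Delta$ and $U' \subtype U$) I decompose $\Delta = \Delta_0, \var:T'$ with $\Gamma \subtype \Delta_0$ and $T \subtype T'$, re-type $e$ at $T'$ over $\Delta_0$ by one subtyping step, apply the induction hypothesis, and close with the same subtyping step. The process-level subtyping and $\pnu$ rules follow the same re-typing pattern, the latter using Weakening (Lemma~\ref{l:weakening}) to bring $e$ into the scope of the fresh restricted name.

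The distinctive difficulty is that several process rules type their continuation in a context whose time has been advanced, i.e.\ in $\langle \Gamma \rangle_{-I}$ (the two output rules, the input rule, and $\tick$ with $I=1$). To push a substitution through such a rule I first record the auxiliary fact that advancing time preserves expression typing: if $\Ietype{e}{T}$ and $\langle T \rangle_{-I}$ is defined, then $\Ietype[@][@][{\langle \Gamma \rangle_{-I}}]{e}{\langle T \rangle_{-I}}$. This is immediate from Lemma~\ref{l:advanceandsubtyping} (advancing is monotone for subtyping, and if $\langle T \rangle_{-I}$ is defined then so is the advance of any subtype of $T$, so the relevant variable is not erased) together with the fact that base types and base-type variables are inert under time advance. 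With this, the input/output/$\tick$ cases go through: the continuation is typed in $\langle \Gamma \rangle_{-I}, \var : \langle T \rangle_{-I}, \vect{\var}:\vect{T}$; if $\langle T \rangle_{-I}$ is undefined then $\var$ is out of scope in the continuation and the substitution is vacuous, and otherwise the auxiliary fact (plus Weakening to adjoin $\vect{\var}:\vect{T}$) supplies $e$ at type $\langle T \rangle_{-I}$ in the continuation's context, so the induction hypothesis applies. The subject $a$ and, for outputs, the arguments $\vect{e}$ are handled by Point~1; note that when $T$ is a channel or server type the grammar forces $e$ to be a single channel variable, so this is a renaming.

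I expect the replicated-input (server) rule to be the main obstacle, as it combines the advanced-time continuation with the rule's built-in weakening. In the rule's notation its conclusion context is $\Gamma,\Delta$ (with $\Delta$ serving only to type $a$), its body $P$ is typed in a time-invariant $\Gamma'$ satisfying $\phi;\Phi \vdash \langle \Gamma \rangle_{-I} \subtype \Gamma'$, and the substituted $\var$ may sit in $\Gamma$ or in $\Delta$. If $\var \in \Delta$ it cannot occur in the body ($\var \notin \Gamma'$), so only the subject $a$ changes and Point~1 plus a re-split of the context suffice. If $\var \in \Gamma$ and it does occur in the body, then its type in $\Gamma'$ is time-invariant (Definition~\ref{d:timeinvariant}), which by inspection of the subtyping and advance-time definitions forces $T$ to be a base type or a server type whose time is exactly $I$ (the only way an advanced channel/server type can be coerced to an output server at time $0$). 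In the base case the free variables of $e$ are base-type variables, already time-invariant; in the server case $e$ is a single channel variable whose advanced type is likewise an output-server entry. Either way I can choose a new split of $\Gamma,\Delta$, advance it, and present a time-invariant body context into which $\psub[P][\var][e]$ is typed by the induction hypothesis; checking that this re-split still satisfies the $\langle \cdot \rangle_{-I} \subtype \Gamma'$ condition and keeps $\Gamma'$ time-invariant is the delicate bookkeeping step.

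The remaining cases are routine: parallel composition and the boolean conditional recurse directly, while the $\nat$- and $\lis$-match rules extend $\Phi$ with a constraint in each branch, so before recursing I weaken the typing of $e$ to the enriched constraint set (Lemma~\ref{l:weakening}), the bound $x$ (and $y$) being fresh so substitution commutes with the branch. Assembling these completes both inductions; no index substitution is required, since the term-level substitution leaves the index annotations untouched, and the only external inputs are Lemma~\ref{l:advanceandsubtyping}, Weakening, and Point~1 used inside Point~2.
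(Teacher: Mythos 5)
Your overall plan is reasonable and much of it is sound: proving Point~1 first and feeding it into Point~2, the auxiliary fact that advancing time preserves expression typing (correctly reduced to Lemma~\ref{l:advanceandsubtyping} plus the inertness of base types), and the use of Weakening (Lemma~\ref{l:weakening}) for $\pnu$, for the server's index variables, and for the enriched constraint sets in the match rules. The genuine gap is in the case you call generic and then reuse everywhere: the subtyping rule. From $\Gamma,\var:T \subtype \Delta = \Delta_0,\var:T'$ you get $\Gamma \subtype \Delta_0$ and $T \subtype T'$, and you propose to derive $\Ietype[@][@][{\Delta_0}]{e}{T'}$ from $\Ietype{e}{T}$ ``by one subtyping step'' so as to apply the induction hypothesis to the premise $\Ietype[@][@][{\Delta_0,\var:T'}]{e'}{U'}$. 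But the subtyping rule is contravariant in the context: from a judgment in context $\Delta$ it only produces judgments in contexts $\Gamma''$ with $\Gamma'' \subtype \Delta$, i.e.\ it moves to \emph{stronger} contexts, whereas you need to move to the \emph{weaker} context $\Delta_0$. The judgment you need is not merely underivable by one step; it can be false. Take $\Gamma = b : \Ich[0][]$, $\Delta_0 = b : \Iich[0][]$, $e = b$, $T = T' = \Ich[0][]$: then $\Gamma \subtype \Delta_0$ and $\Ietype{b}{T}$ hold, but $\Ietype[@][@][{b : \Iich[0][]}]{b}{\Ich[0][]}$ is not derivable, since by Lemma~\ref{l:exhaustivesubtyping} every type derivable for $b$ in that context is a supertype of $\Iich[0][]$, hence an input type. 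This situation genuinely occurs (e.g.\ $e' = \var$, axiom under one subtyping step), so the induction hypothesis cannot be supplied; and since your Point~2 treats process-level subtyping ``by the same re-typing pattern,'' both inductions stall there as well.

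The repair is not a local patch; it changes the organization of the induction. Either normalize derivations so that exactly one subtyping rule precedes each syntax-directed rule --- the paper performs exactly this normalization in the proof of Lemma~\ref{l:congrtype} --- and treat ``subtyping over a syntax-directed rule'' as a single case, or induct on the structure of $e'$ (resp.\ $P$) instead of on the derivation. In both variants the direction of travel is opposite to yours: one pushes the \emph{premises} of the syntax-directed rule back down into the stronger context, deriving $\Ietype[@][@][{\Gamma,\var:T}]{e_i}{U_i}$ from $\Ietype[@][@][{\Delta}]{e_i}{U_i}$ by one (legitimate) subtyping step, applies the induction hypothesis there with the hypothesis $\Ietype{e}{T}$ one already has, and then reapplies the syntax-directed rule followed by the final subtyping step. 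This also shows why induction on the derivation cannot carry your argument: pushing a premise down adds a rule, so the derivation handed to the induction hypothesis is not smaller; one needs a measure blind to subtyping, such as the term itself. Separately, your server case has one more soft spot than you acknowledge: if $e$ is a channel name $c$ that already has an entry in the time-invariant body context $\Gamma'$, then the entries of $\var$ and of $c$ in $\Gamma'$ are in general two \emph{incomparable} supertypes of the advanced type of $c$, so the induction hypothesis (which needs $c$ typable at $\var$'s entry in $\Gamma'$ minus $\var$) can fail; one must first strengthen $c$'s entry to a common time-invariant subtype --- the output-server form of its advanced type --- by a subtyping step, and only then substitute.
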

	
	The proof is pretty straightforward. 
	
	We can now show the subject reduction of this calculus with the reduction $\red$. 
	
	\subsection{Non-Quantitative Subject Reduction}
	\label{ss:nonquantitativesubjectreduction}
	
	The goal of this section is to prove the following theorem. 
	\begin{theorem}[Non-Quantitative Subject Reduction]
		If $\Itype{P}{K}$ and $P \red Q$ then $\Itype{Q}{K}$. 
	\label{t:nonquantsubjectreduction}
	\end{theorem}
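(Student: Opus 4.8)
The plan is to prove the statement by induction on the derivation of $P \red Q$. Because $\red$ is closed under $\congr$ (the last reduction rule of Figure~\ref{f:reduction}), I first establish a \emph{subject congruence} lemma: if $\Itype{P}{K}$ and $P \congr Q$ then $\Itype{Q}{K}$. I would prove this by induction on $P \congr Q$, checking each generating equation in both directions. The rearrangements of $\parr$ (commutativity, associativity, and the unit law $P \parr \pzero \congr P$) are immediate because the parallel rule is symmetric, assigns one common bound $K$ to both operands, and $\Itype{\pzero}{0}$ can be raised to any $K$ by subtyping; the permutation of restrictions and the $\tick$ and contextual cases follow likewise or from the induction hypothesis. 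The only substantial equation is scope extrusion $\pnu[a](P \parr Q) \congr \pnu[a] P \parr Q$ with $a$ not free in $Q$: the left-to-right direction discards the now-unused hypothesis $a : T$ from the typing of $Q$ via Strengthening (Lemma~\ref{l:strengthening}), and the right-to-left direction reinstates it via Weakening (Lemma~\ref{l:weakening}).

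With subject congruence available, the induction on $P \red Q$ is routine for the non-base rules. For the contextual rule $P \parr R \red Q \parr R$ I invert the parallel rule (both operands carry the same bound $K$), apply the induction hypothesis to $P \red Q$, and re-apply the parallel rule; the rule for $\pnu$ is identical, carrying $a : T$ in the context. The closure rule $P \congr P' \red Q' \congr Q$ is handled by subject congruence at the two ends and the induction hypothesis on the strictly smaller derivation $P' \red Q'$ in the middle. In every base case I start with an \emph{inversion} step: since the subtyping rule may occur anywhere, I first push it down, recovering the premises of the matching constructor rule up to a context refinement $\phi;\Phi \vdash \Gamma \subtype \Delta$ and a bound increase $K_0 \le K$; the reduct is then typed for some bound $\le K$ and lifted to $K$ by a final subtyping step.

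The base cases split into communication and conditionals. For a channel synchronisation $\pin \parr \pout \red \psub$, inversion yields $\Itype[@][@][\Idecr, \vect{\var} : \vect{T}]{P}{K_1}$ and $\Ietype[@][@][\Idecr]{\vect{e}}{\vect{T}}$, where the shared channel $a$ forces a common time $I$ and message type $\vect{T}$ through its type in $\Gamma$ and the time-invariance of channel subtyping. Substitution (Lemma~\ref{l:substitution}) gives $\Itype[@][@][\Idecr]{\psub}{K_1}$; Delaying (Lemma~\ref{l:delaying}, point~5) raises this to $\Itype[@][@][\Iincr{(\Idecr)}]{\psub}{K_1 + I}$, and point~6 together with subtyping and weakening transports it back to $\Gamma$, yielding $\Itype{\psub}{K_1 + I}$ with $K_1 + I \le K$. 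The server case $\pserv \parr \pout \red \pserv \parr \psub$ proceeds identically on the spawned copy, except that the polymorphic indices of the server body are first instantiated by $\vect{i} := \vect{J}$ using Index Substitution (Lemma~\ref{l:indexsubstitution}), so that the copy is bounded by $I + \Isub[K][\vect{i}][\vect{J}]$; the persisting server retains its original derivation and the two are recombined at bound $K$. For the conditionals, inversion on the scrutinee's type makes the branch guard derivable: $\zero : \Inat$ forces $\phi;\Phi \vDash I \le 0$ and $\suc(e) : \Inat$ forces $\phi;\Phi \vDash J \ge 1$ (and symmetrically for lists and booleans), so Strengthening removes the branch-specific constraint; in the $\suc$ and $::$ branches one additionally checks that the destructed expression has the predecessor type $\Inat[I-1][J-1]$ (respectively the head and tail types) and applies Substitution.

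I expect the communication cases to be the main obstacle, specifically the time bookkeeping. The continuation of an input and the arguments of an output are typed in the time-advanced context $\Idecr$, whereas the reduct must be typed in $\Gamma$; it is exactly Delaying, through the decomposition $\Gamma = \Gamma', \Delta$ with $\phi;\Phi \vdash \Gamma' \subtype \Iincr{(\Idecr)}$ of point~6, that closes this gap and explains why a zero-cost reduction preserves the bound $K$. The server variant compounds this with index polymorphism, so some care is needed to verify that the time-invariant captured context $\Gamma'$ is untouched by $\vect{i} := \vect{J}$ and that the constraint $\phi;\Phi \vdash \Idecr \subtype \Gamma'$ survives the instantiation, so that Substitution can be applied after Index Substitution.
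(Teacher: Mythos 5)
Your proposal follows essentially the same route as the paper's own proof: a subject-congruence lemma proved by induction on $\congr$ (with Weakening and Strengthening handling the two directions of scope extrusion), then induction on $P \red Q$ with subtyping pushed down to sit directly above each syntax-directed rule, the communication cases closed by inversion on the shared channel's type, Index Substitution (for servers), Substitution, and Delaying (points 5 and 6), and the conditional cases by deriving the branch constraint and applying Strengthening. The approach and the key lemmas coincide with the paper's, so no substantive comparison or gap remains to report.
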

	In order to do that, let us first show that the congruence relation behave well with typing. 
	
	\begin{lemma}[Congruence and Typing]
		Let $P$ and $Q$ be processes such that $P \congr Q$. Then, $\Itype{P}{K}$ if and only if $\Itype{Q}{K}$.
	\label{l:congrtype}
	\end{lemma}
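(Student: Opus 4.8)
The plan is to prove the biconditional $\Itype{P}{K} \iff \Itype{Q}{K}$ directly by induction on the derivation of $P \congr Q$, leaving $\phi$, $\Phi$, $\Gamma$ and $K$ universally quantified. Proving the biconditional rather than a single implication makes the closure rules painless: the symmetry rule preserves a biconditional verbatim, transitivity composes two biconditionals, and reflexivity is trivial. For each contextual closure rule -- the rules for $\parr$, $\pnu$, the prefixes, the conditionals, and the new rule $\tick.P \congr \tick.Q$ -- I would invert the corresponding syntax-directed typing rule, apply the induction hypothesis to the immediate subterm in the context in which it is typed, and reassemble with the same rule. The ingredient this relies on throughout is a generation lemma: since the subtyping rule of Figure~\ref{f:sizetypeprocess} is the only non-syntax-directed rule, any derivation of $\Itype[@][@][\Gamma]{R}{K}$ ending in a given top constructor can be rewritten as one application of the matching structural rule, whose premises type the immediate subterms in a common context $\Delta$ at a common bound $K_0$, followed by a single subtyping step absorbing $\phi;\Phi \vdash \Gamma \subtype \Delta$ and $\phi;\Phi \vDash K_0 \le K$ (the chain of subtyping steps collapses by transitivity of $\subtype$ and of $\le$). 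I would state this once at the start; it is routine by induction on the typing derivation.

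I then treat the base axioms. For $P \parr Q \congr Q \parr P$ and $P \parr (Q \parr R) \congr (P \parr Q) \parr R$, inversion of the parallel rule exposes the components typed in a common context at a common bound, so the result follows by commuting or reassociating and reapplying the rule, using reflexivity of context subtyping. For $\pnu[a]\pnu[b]P \congr \pnu[b]\pnu[a]P$, inverting the two restrictions yields a typing of $P$ in $\Gamma, a : T_a, b : T_b$, and I reapply the restriction rule in the other order; this needs only that typing is stable under reordering context entries, which is harmless since lookup is by membership. For the unit law $P \parr \pzero \congr P$, the forward direction is immediate from inversion, and the backward direction uses that $\Itype[@][@][\Gamma]{\pzero}{0}$ holds in every context and can be raised to any $K$ by subtyping before applying the parallel rule.

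The hard part is the scope-extrusion axiom $\pnu[a](P \parr Q) \congr \pnu[a]P \parr Q$ with $a$ not free in $Q$. For the left-to-right direction, the generation lemma gives $\Itype[@][@][\Gamma, a : T]{P}{K_0}$ and $\Itype[@][@][\Gamma, a : T]{Q}{K_0}$ in a common context; since $a$ is not free in $Q$, Strengthening (Lemma~\ref{l:strengthening}) lets me drop the hypothesis $a : T$ and obtain $\Itype[@][@][\Gamma]{Q}{K_0}$, after which the restriction rule gives $\Itype[@][@][\Gamma]{\pnu[a]P}{K_0}$, the parallel rule gives $\Itype[@][@][\Gamma]{\pnu[a]P \parr Q}{K_0}$, and one subtyping step recovers the target bound. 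The right-to-left direction is dual: from $\Itype[@][@][\Gamma]{\pnu[a]P}{K}$ and $\Itype[@][@][\Gamma]{Q}{K}$, inverting the restriction exposes $P$ over $\Gamma, a : T$, and Weakening (Lemma~\ref{l:weakening}) reintroduces $a : T$ into the typing of $Q$ so that $P$ and $Q$ share the context $\Gamma, a : T$; then the parallel and restriction rules rebuild the left-hand side. This is the delicate case precisely because it is where the freeness side-condition is consumed, via strengthening, and where one must check that the generation lemma really delivers $P$ and $Q$ in a genuinely common context so that strengthening and weakening line up.

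Finally, the $\tick$ closure case is the only one touching the advance-time operator: inverting the tick rule on $\Itype[@][@][\Gamma]{\tick.P}{K+1}$ exposes $\Itype[@][@][{\Idecr[@][1]}]{P}{K}$, and the induction hypothesis applied in the advanced context $\Idecr[@][1]$ yields $\Itype[@][@][{\Idecr[@][1]}]{Q}{K}$, whence the tick rule gives $\Itype[@][@][\Gamma]{\tick.Q}{K+1}$; since $\Idecr[@][1]$ is literally the same on both sides, no new difficulty arises. Assembling the base axioms with the closure cases completes the induction, and hence the biconditional.
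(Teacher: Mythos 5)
Your proposal is correct and takes essentially the same route as the paper's proof: induction on the derivation of $P \congr Q$, normalizing typing derivations so that exactly one subtyping step precedes each syntax-directed rule, and resolving the scope-extrusion axiom by Weakening in one direction and Strengthening (using the freeness side-condition) in the other. The only difference is that the paper also records a slightly stronger invariant along the way (that the types assigned to restricted channel names are preserved, which it exploits later in the quantitative analysis), but this is not needed for the statement as given.
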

	
	\begin{proof}
		In fact, we will prove something more precise, by showing that the typing conserves the typing of channel name, so if we restrain our calculus with the constructor $(\nu : T)$ to force the typing, the property still holds. Note that all previous lemmas also holds with this forced typing.
		We prove this by induction on $P \congr Q$. Remark that for a process $P$, the typing system is not syntax-directed because of the subtyping rule. However, by reflexivity and transitivity of subtyping, we can always assume that a proof has exaclty \emph{one} subtyping rule before any syntax-directed rule. 
		We first show this propriety for base case of congruence. The reflexivity is trivial then we have:
		\begin{itemize}
			\item \textbf{Case} $ P \parr \pzero \congr P$. Suppose $\Itype{P \parr \pzero}{K}$. Then the proof has the shape:
			\begin{prooftree}
				\footnotesize
				\AxiomC{$\pi$}
				\UnaryInfC{$\Itype[@][@][\Delta]{P}{K'}$}
				\AxiomC{}
				\UnaryInfC{$\Itype[@][@][\Delta']{\pzero}{0}$}
				\AxiomC{$\phi;\Phi \vdash \Delta \subtype \Delta'; 0 \le K'$}
				\BinaryInfC{$\Itype[@][@][\Delta]{\pzero}{K'}$}
				\BinaryInfC{$\Itype[@][@][\Delta]{P \parr \pzero}{K'}$}
				\AxiomC{$\phi;\Phi \vdash \Gamma \subtype \Delta ; K' \le K$}
				\BinaryInfC{$\Itype{P \parr \pzero}{K}$} 
			\end{prooftree}
		
			So, we can derive the following proof:
			\begin{prooftree}
				\footnotesize
				\AxiomC{$\pi$}
				\UnaryInfC{$\Itype[@][@][\Delta]{P}{K'}$}
				\AxiomC{$\phi;\Phi \vdash \Gamma \subtype \Delta ; K' \le K$}
				\BinaryInfC{$\Itype{P}{K}$} 
			\end{prooftree}
			
			Reciprocally, given a proof $\pi$ of $\Itype{P}{K}$, we can derive the proof:
			
			\begin{prooftree}
				\footnotesize
				\AxiomC{$\pi$}
				\UnaryInfC{$\Itype{P}{K}$}
				\AxiomC{}
				\UnaryInfC{$\Itype{\pzero}{0}$}
				\AxiomC{$\phi;\Phi \vDash 0 \le K$}
				\BinaryInfC{$\Itype{\pzero}{K}$}
				\BinaryInfC{$\Itype{P \parr \pzero}{K}$} 
			\end{prooftree}	
			
			\item \textbf{Case} $P \parr Q \congr Q \parr P$. Suppose $\Itype{P \parr Q}{K}$. Then the proof has the shape:
			
			\begin{prooftree}
				\footnotesize
				\AxiomC{$\pi$}
				\UnaryInfC{$\Itype[@][@][\Delta]{P}{K'}$}	
				\AxiomC{$\pi'$}
				\UnaryInfC{$\Itype[@][@][\Delta]{Q}{K'}$}
				\BinaryInfC{$\Itype[@][@][\Delta]{P \parr Q}{K'}$}
				\AxiomC{$\phi;\Phi \vdash \Gamma \subtype \Delta ; K' \le K$}
				\BinaryInfC{$\Itype{P \parr Q}{K}$} 
			\end{prooftree}
			
			And so we can derive:
			
			\begin{prooftree}
				\footnotesize
				\AxiomC{$\pi'$}
				\UnaryInfC{$\Itype[@][@][\Delta]{Q}{K'}$}	
				\AxiomC{$\pi$}
				\UnaryInfC{$\Itype[@][@][\Delta]{P}{K'}$}
				\BinaryInfC{$\Itype[@][@][\Delta]{Q \parr P}{K'}$}
				\AxiomC{$\phi;\Phi \vdash \Gamma \subtype \Delta ; K' \le K$}
				\BinaryInfC{$\Itype{Q \parr P}{K}$}
			\end{prooftree}
			
			We also have the reverse in the same way. 
	
			\item \textbf{Case} $P \parr (Q \parr R) \congr (P \parr Q) \parr R$. Suppose $\Itype{P \parr (Q \parr R)}{K}$. Then the proof has the shape:
			
			\medskip 
			
			\leftshiftt{
				\footnotesize
				\AxiomC{$\pi$}
				\UnaryInfC{$\Itype[@][@][\Delta]{P}{K'}$}
				\AxiomC{$\pi'$}
				\UnaryInfC{$\Itype[@][@][\Delta']{Q}{K''}$}
				\AxiomC{$\pi''$}
				\UnaryInfC{$\Itype[@][@][\Delta']{R}{K''}$}
				\BinaryInfC{$\Itype[@][@][\Delta']{Q \parr R}{K''}$}
				\AxiomC{$\phi;\Phi \vdash \Delta \subtype \Delta' ; K'' \le K'$}
				\BinaryInfC{$\Itype[@][@][\Delta]{Q \parr R}{K'}$}
				\BinaryInfC{$\Itype[@][@][\Delta]{P \parr (Q \parr R)}{K'}$}
				\AxiomC{$\phi;\Phi \vdash \Gamma \subtype \Delta ; K' \le K$}
				\BinaryInfC{$\Itype{P \parr (Q \parr R)}{K}$}
				\DisplayProof}
		
			We can derive the proof:
			
			\medskip 
			
			\leftshiftt{
				\footnotesize
				\AxiomC{$\pi$}
				\UnaryInfC{$\Itype[@][@][\Delta]{P}{K'}$}
				\AxiomC{$\pi'$}
				\UnaryInfC{$\Itype[@][@][\Delta']{Q}{K''}$}
				\AxiomC{$\phi;\Phi \vdash \Delta \subtype \Delta' ; K'' \le K'$}
				\BinaryInfC{$\Itype[@][@][\Delta]{Q}{K'}$}
				\BinaryInfC{$\Itype[@][@][\Delta]{P \parr Q}{K'}$}
				\AxiomC{$\pi''$}
				\UnaryInfC{$\Itype[@][@][\Delta']{R}{K''}$}
				\AxiomC{$\phi;\Phi \vdash \Delta \subtype \Delta' ; K'' \le K'$}
				\BinaryInfC{$\Itype[@][@][\Delta]{R}{K'}$}
				\BinaryInfC{$\Itype[@][@][\Delta]{(P \parr Q) \parr R}{K'}$}
				\AxiomC{$ \fCenter \phi;\Phi \vdash \Gamma \subtype \Delta ; K' \le K$}
				\insertBetweenHyps{\hskip -2cm}
				\BinaryInf$\Itype{(P \parr Q) \parr R}{K} \fCenter$
				\DisplayProof }
			
			\medskip 
			
			The reverse follows the same pattern.
		
		\item \textbf{Case} $\pnu \pnu[b] P \congr \pnu[b] \pnu P$. Suppose $\Itype{\pnu \pnu[b] P}{K}$. Then the proof has the shape:
		
		\begin{prooftree}
			\footnotesize
			\AxiomC{$\pi$}
			\UnaryInfC{$\Itype[@][@][\Delta', a : T', b : U]{P}{K''}$}
			\UnaryInfC{$\Itype[@][@][\Delta', a : T']{\pnu[b] P}{K''}$}
			\AxiomC{$\phi;\Phi \vdash \Delta \subtype \Delta' ; K'' \le K' ; T \subtype T'$}
			\BinaryInfC{$\Itype[@][@][\Delta, a : T]{\pnu[b] P}{K'}$}
			\UnaryInfC{$\Itype[@][@][\Delta]{\pnu \pnu[b] P}{K'}$}
			\AxiomC{$\phi;\Phi \vdash \Gamma \subtype \Delta ; K' \le K$}
			\BinaryInfC{$\Itype{\pnu \pnu[b] P}{K}$}
		\end{prooftree} 
	
		We can derive the proof:
		
		\begin{prooftree}
			\footnotesize
			\AxiomC{$ \pi'$}
			\UnaryInfC{$\Itype[@][@][\Delta', a : T', b : U]{P}{K''}$}
			\AxiomC{$\phi;\Phi \vdash T \subtype T' $}
			\BinaryInfC{$\Itype[@][@][\Delta', a : T, b : U]{P}{K''}$}
			\UnaryInfC{$\Itype[@][@][\Delta', b : U]{\pnu P}{K''}$}
			\UnaryInfC{$\Itype[@][@][\Delta']{\pnu[b] \pnu P}{K''}$}
			\AxiomC{$\phi;\Phi \vdash \Gamma \subtype \Delta' ; K'' \le K$}
			\BinaryInfC{$\Itype{\pnu[b] \pnu P}{K}$}
		\end{prooftree}
	
		\item \textbf{Case} $\pnu P \parr Q \congr \pnu (P \parr Q)$ with $a$ not free in $Q$. Suppose $\Itype{\pnu P \parr Q}{K}$. Then the proof has the shape:
		
		\begin{prooftree}
			\footnotesize
			\AxiomC{$\pi$}
			\UnaryInfC{$\Itype[@][@][\Delta', a : T]{P}{K''}$}
			\UnaryInfC{$\Itype[@][@][\Delta']{\pnu P}{K''}$}
			\AxiomC{$\phi;\Phi \vdash \Delta \subtype \Delta' ; K'' \le K'$}
			\BinaryInfC{$\Itype[@][@][\Delta]{\pnu P}{K'}$}
			\AxiomC{$\pi'$}
			\UnaryInfC{$\Itype[@][@][\Delta]{Q}{K'}$}
			\BinaryInfC{$\Itype[@][@][\Delta]{\pnu P \parr Q}{K'}$}
			\AxiomC{$\phi;\Phi \vdash \Gamma \subtype \Delta ; K' \le K$}
			\BinaryInfC{$\Itype{\pnu P \parr Q}{K}$}
		\end{prooftree}
	
		By weakening (Lemma~\ref{l:weakening}), we obtain a proof $\pi'_w$ of $\Itype[@][@][\Delta, a : T]{Q}{K'}$. Thus, we have the following derivation:
			
			\medskip 
			
		\begin{prooftree}
			\footnotesize	
			\AxiomC{$\pi$}
			\UnaryInfC{$\Itype[@][@][\Delta', a : T]{P}{K''}$}
			\AxiomC{$\phi;\Phi \vdash \Delta \subtype \Delta' ; T \subtype T ; K'' \le K'$}
			\BinaryInfC{$\Itype[@][@][\Delta, a : T]{P}{K'}$}
			\AxiomC{$\pi'_w$}
			\UnaryInfC{$\Itype[@][@][\Delta, a : T]{Q}{K'}$}
			\BinaryInfC{$\Itype[@][@][\Delta, a : T]{P \parr Q}{K'}$}
			\UnaryInfC{$\Itype[@][@][\Delta]{\pnu (P \parr Q)}{K'}$}
			\AxiomC{$\phi;\Phi \vdash \Gamma \subtype \Delta ; K' \le K$}
			\BinaryInfC{$\Itype{\pnu (P \parr Q)}{K}$}
		\end{prooftree}
		
		For the converse, suppose $\Itype{\pnu (P \parr Q)}{K}$. Then the proof has the shape:
		
		\begin{prooftree}
			\footnotesize	
			\AxiomC{$\pi$}
			\UnaryInfC{$\Itype[@][@][\Delta', a : T']{P}{K''}$}
			\AxiomC{$\pi'$}
			\UnaryInfC{$\Itype[@][@][\Delta', a : T']{Q}{K''}$}
			\BinaryInfC{$\Itype[@][@][\Delta', a : T']{P \parr Q}{K''}$}
			\AxiomC{$\phi;\Phi \vdash \Delta \subtype \Delta' ; T \subtype T' ; K'' \le K'$}
			\BinaryInfC{$\Itype[@][@][\Delta, a : T]{P \parr Q}{K'}$}
			\UnaryInfC{$\Itype[@][@][\Delta]{\pnu (P \parr Q)}{K'}$}
			\AxiomC{$\phi;\Phi \vdash \Gamma \subtype \Delta ; K' \le K$}
			\BinaryInfC{$\Itype{\pnu (P \parr Q)}{K}$}
		\end{prooftree}
	
		Since $a$ is not free in $Q$, by Lemma~\ref{l:strengthening}, from $\pi'$ we obtain a proof $\pi'_c$ of $\Itype[@][@][\Delta']{Q}{K''}$. We can then derive the following typing:
		
		\begin{prooftree}
			\footnotesize
			\AxiomC{$\pi$}
			\UnaryInfC{$\Itype[@][@][\Delta', a : T']{P}{K''}$}
			\AxiomC{$\phi;\Phi \vdash T \subtype T'$}
			\BinaryInfC{$\Itype[@][@][\Delta', a : T]{P}{K''}$}
			\UnaryInfC{$\Itype[@][@][\Delta']{\pnu P}{K''}$}
			\AxiomC{$\pi'_c$}
			\UnaryInfC{$\Itype[@][@][\Delta']{Q}{K''}$}
			\BinaryInfC{$\Itype[@][@][\Delta']{\pnu P \parr Q}{K''}$}
			\AxiomC{$\phi;\Phi \vdash \Gamma \subtype \Delta' ; K'' \le K$}
			\BinaryInfC{$\Itype{\pnu P \parr Q}{K}$}
		\end{prooftree}		
		\end{itemize}
	
		This concludes all the base case. We can then prove Lemma~\ref{l:congrtype} by induction on $P \congr Q$. All the base case have been done, symmetry and transitivity are direct by induction hypothesis. For the cases of contextual congruence, the proof is again straightforward by considering proofs in which there is exaclty one subtyping rule before any syntax-directed rule. 
	\end{proof}
	
	Now that we have Lemma~\ref{l:congrtype}, we can work up to the congruence relation. We now give an exhaustive description of the subtyping relation. 
	
		\begin{lemma}[Exhaustive Description of Subtyping]
			If $\phi; \Phi \vdash T \subtype U$, then one of the following case holds. 
			\begin{itemize}
			\small 
				\item \[T = \Inat \qquad U = \Inat[I'][J'] \qquad \phi;\Phi \vDash I' \le I \qquad \phi;\Phi \vDash J \le J' \]
				\item \[T = \Ilis \qquad U = \Ilis[I'][J'][\IB'] \qquad \phi;\Phi \vDash I' \le I \qquad \phi;\Phi \vDash J \le J' \qquad \phi;\Phi \vdash \IB \subtype \IB' \]
				\item \[ T = \Ibool \qquad U = \Ibool \]
				\item \[T = \Ich \qquad U = \Ich[J][\vect{U}] \qquad \phi;\Phi \vDash I = J \qquad \phi;\Phi \vdash \vect{T} \subtype \vect{U} \qquad \phi;\Phi \vdash \vect{U} \subtype \vect{T} \]
				\item \[T = \Ich \qquad U = \Iich[J][\vect{U}] \qquad \phi;\Phi \vDash I = J \qquad \phi;\Phi \vdash \vect{T} \subtype \vect{U} \]
				\item \[T = \Ich \qquad U = \Ioch[J][\vect{U}] \qquad \phi;\Phi \vDash I = J \qquad \phi;\Phi \vdash \vect{U} \subtype \vect{T} \]
				\item \[T = \Iich \qquad U = \Iich[J][\vect{U}] \qquad \phi;\Phi \vDash I = J \qquad \phi;\Phi \vdash \vect{T} \subtype \vect{U} \]
				\item \[T = \Ioch \qquad U = \Ioch[J][\vect{U}] \qquad \phi;\Phi \vDash I = J \qquad \phi;\Phi \vdash \vect{U} \subtype \vect{T} \]
				\item \[\leftshift{T = \Iserv \qquad U = \Iserv[J][@][K'][\vect{U}] \qquad \phi;\Phi \vDash I = J \qquad (\phi,\vect{i});\Phi \vdash \vect{T} \subtype \vect{U} \qquad (\phi,\vect{i});\Phi \vdash \vect{U} \subtype \vect{T} \qquad (\phi,\vect{i});\Phi \vDash K = K'} \]
				\item \[T = \Iserv \qquad U = \Iiserv[J][@][K'][\vect{U}] \qquad \phi;\Phi \vDash I = J \qquad (\phi,\vect{i});\Phi \vdash \vect{T} \subtype \vect{U} \qquad (\phi,\vect{i});\Phi \vDash K' \le K \]
				\item \[T = \Iserv \qquad U = \Ioserv[J][@][K'][\vect{U}] \qquad \phi;\Phi \vDash I = J \qquad (\phi,\vect{i});\Phi \vdash \vect{U} \subtype \vect{T} \qquad (\phi,\vect{i});\Phi \vDash K \le K' \]
				\item \[T = \Iiserv \qquad U = \Iiserv[J][@][K'][\vect{U}] \qquad \phi;\Phi \vDash I = J \qquad (\phi,\vect{i});\Phi \vdash \vect{T} \subtype \vect{U} \qquad (\phi,\vect{i});\Phi \vDash K' \le K \]
				\item \[T = \Ioserv \qquad U = \Ioserv[J][@][K'][\vect{U}] \qquad \phi;\Phi \vDash I = J \qquad (\phi,\vect{i});\Phi \vdash \vect{U} \subtype \vect{T} \qquad (\phi,\vect{i});\Phi \vDash K \le K' \]			
			\end{itemize}
	\label{l:exhaustivesubtyping}		
		\end{lemma}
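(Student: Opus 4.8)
The plan is to proceed by induction on the derivation of $\phi;\Phi \vdash T \subtype U$ using the rules of Figure~\ref{f:sizesubtype}. The statement is really an inversion lemma asserting that the listed shapes are \emph{exhaustive}, so what must be shown is that no derivation produces a pair $(T,U)$ outside the list. For every rule \emph{other} than transitivity, the conclusion falls immediately into exactly one case: one reads off the top-level shapes of $T$ and $U$ together with the index (in)equalities directly from the premises. For instance the rule concluding $\Inat \subtype \Inat[I'][J']$ gives the first case with its premises $\phi;\Phi \vDash I' \le I$ and $\phi;\Phi \vDash J \le J'$, and the list, boolean, channel and server rules match their respective cases in the same direct fashion, using no induction hypothesis.

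The only genuine work is the transitivity rule, concluding $\phi;\Phi \vdash T \subtype T''$ from $\phi;\Phi \vdash T \subtype T'$ and $\phi;\Phi \vdash T' \subtype T''$. Here I would apply the induction hypothesis to both premises and carry out a case analysis driven by the shape of the intermediate type $T'$ (which is constrained simultaneously by both hypotheses). The key structural observation keeping this finite is that the constructor families do not mix: a $\nat$ can only relate to a $\nat$, a $\lis$ to a $\lis$, a $\bool$ to a $\bool$, the channel and server families never meet, and within each of those two families the capability annotations are rigid. A plain channel $\Ich$ may pass to $\Ich$, $\Iich$ or $\Ioch$, but once a type carries only an input (resp. output) capability its only outgoing subtyping is to another input (resp. output) type; likewise $\Iserv$ may pass to any of the three server types while $\Iiserv$ and $\Ioserv$ stay within their own kind. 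This rigidity forces each reconstructed case to remain inside the family of its endpoints.

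For each compatible pairing I would rebuild the target case for $T \subtype T''$ by composing the data supplied by the two induction hypotheses, using three transitivity principles: transitivity of index entailment, so that e.g.\ $\phi;\Phi \vDash I'' \le I'$ and $\phi;\Phi \vDash I' \le I$ give $\phi;\Phi \vDash I'' \le I$ (and likewise for the time equalities $I = J$ and the comparisons on the server complexities $K$); the subtyping transitivity rule applied pointwise to the component vectors $\vect{T}$, to combine $\vect{T} \subtype \vect{U}$ with $\vect{U} \subtype \vect{U'}$ and symmetrically for the contravariant components of output and of plain channels; and the same transitivity rule at the smaller carried base type $\IB$ in the list case. The main obstacle is entirely bookkeeping: for every pairing of an outgoing case at $T'$ with an incoming case at $T'$ one must combine the covariant and contravariant component subtypings in the correct directions so that the recombined constraints match \emph{exactly} those demanded by the target case. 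Once this variance check is done uniformly across the channel and server sub-families, every case of the transitivity step lands in the list and the induction closes.
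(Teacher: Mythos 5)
Your proposal is correct and follows essentially the same route as the paper: induction on the subtyping derivation, where every non-transitivity rule yields its case directly from its premises, and transitivity is handled by applying the induction hypothesis to both premises, case-analyzing the shape of the intermediate type (using the rigidity of the constructor families and capability annotations), and composing the index entailments and componentwise subtypings with the correct variance. The paper states this in a few lines; your write-up just makes the intermediate-type case analysis and the variance bookkeeping explicit.
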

	
	\begin{proof}
		The proof is rather straightforward, we proceed by induction on the subtyping relation. All base cases are indeed of this form, and then for transitivity, we can use the induction hypothesis and consider all cases in which the second member of a subtyping relation can match with the first one, and all cases are simple.  
	\end{proof}
	
	 Let us now show Theorem~\ref{t:nonquantsubjectreduction}. We do this by induction on $P \red Q$. Let us first remark that when considering the typing of $P$, the first subtyping rule has no importance since we can always start the typing of $Q$ with the exact same subtyping rule. One can see it in the detailed proof of Lemma~\ref{l:congrtype}.  We now proceed by doing the case analysis on the rules of Figure~\ref{f:reduction}. 
	
	\begin{itemize}
		\item \textbf{Case} $\pserv \parr \pout \red \pserv \parr \psub$. Consider the typing $\Itype{\pserv \parr \pout}{K}$. The first rule is the rule for parallel composition, then the proof is split into the two following subtree:
		
		\begin{prooftree}
			\footnotesize 
	
			\AxiomC{$\Ietype[@][@][\Gamma_1]{a}{\Ioserv[I_1][@][K_1][\vect{T_1}]}$} 
			\AxiomC{$\pi_e$}
			\UnaryInfC{$\Ietype[@][@][{\Idecr[\Gamma_1][I_1]}]{\vect{e}}{\Isub[\vect{T_1}][\vect{i}][\vect{J}]}$}
			\BinaryInfC{$\Itype[@][@][\Gamma_1]{\pout}{I_1 + \Isub[K_1][\vect{i}][\vect{J}]} $}
			\AxiomC{$\phi;\Phi \vdash \Gamma \subtype \Gamma_1 ; I_1 + \Isub[K_1][\vect{i}][\vect{J}] \le K$}
			\BinaryInfC{$\Itype{\pout}{K}$}
		\end{prooftree}
	    
	    \medskip 
	    
	    \leftshiftt{
	    	\footnotesize
			\AxiomC{$\Ietype[@][@][\Gamma_0, \Delta_0]{a}{\Iiserv[I_0][@][K_0][\vect{T_0}]}$}
			\AxiomC{$\pi_P$}
			\UnaryInfC{$\Itype[(\phi,\vect{i})][@][\Gamma', \vect{\var} : \vect{T_0}]{P}{K_0}$}
			\AxiomC{$\Gamma'$ time invariant}
			\AxiomC{$\phi;\Phi \vdash \Idecr[\Gamma_0][I_0] \subtype \Gamma'$}
			\QuaternaryInfC{$\Itype[@][@][\Gamma_0,\Delta_0]{\pserv}{I_0}$}
			\AxiomC{$\phi;\Phi \vdash \Gamma \subtype \Gamma_0,\Delta_0 ; I_0 \le K$}
			\insertBetweenHyps{\hskip -2cm}
			\BinaryInfC{$\Itype{\pserv}{K}$}
			\DisplayProof}
		
		\medskip 
		
		The second subtree can be used exactly in the same way to type the server in the right part of the reduction relation. Furthermore, as the name $a$ is used as an input and as an output, so the original type in $\Gamma$ for this name must be a server type $\Iserv[@][@][K']$. By Lemma~\ref{l:exhaustivesubtyping}, we have:
		\[\phi;\Phi \vDash I_0 = I \qquad (\phi,\vect{i});\Phi \vdash \vect{T} \subtype \vect{T_0} \qquad (\phi,\vect{i});\Phi \vDash K_0 \le K' \]
		\[\phi;\Phi \vDash I = I_1 \qquad (\phi,\vect{i});\Phi \vdash \vect{T_1} \subtype \vect{T} \qquad (\phi,\vect{i});\Phi \vDash K' \le K_1 \]
		So, we obtain directly: 
		\[\phi;\Phi \vDash I_0 = I_1 \qquad (\phi,\vect{i});\Phi \vdash \vect{T_1} \subtype \vect{T_0} \qquad (\phi,\vect{i});\Phi \vDash K_0 \le K_1 \]
		
		Thus, by subtyping, from $\pi_P$ we can obtain a proof of $\Itype[(\phi,\vect{i})][@][{\Idecr[\Gamma_0][I_0], \vect{\var} : \vect{T_1}}]{P}{K_1}$. By Lemma~\ref{l:indexsubstitution}, we have a proof of $\Itype[@][{\Isub[\Phi][\vect{i}][\vect{J}]}][{\Isub[{(\Idecr[\Gamma_0][I_0], \vect{\var} : \vect{T_1})}][\vect{i}][\vect{J}]}]{P}{\Isub[K_1][\vect{i}][\vect{J}]}$. As $\vect{i}$ only appears in $\vect{T_1}$ and $K_1$, we obtain a proof of 
		$\Itype[@][@][{\Idecr[\Gamma_0][I_0], \vect{\var} : \Isub[\vect{T_1}][\vect{i}][\vect{J}]}]{P}{\Isub[K_1][\vect{i}][\vect{J}]}$. 
		
		Now, by Lemma~\ref{l:advanceandsubtyping}, we have: 
		\[ \Idecr = \Gamma_0', \Delta_0' \qquad \phi;\Phi \vdash \Gamma_0' \subtype \Idecr[\Gamma_0] \qquad \Idecr = \Gamma_1', \Delta_1' \qquad \phi;\Phi \vdash \Gamma_1' \subtype \Idecr[\Gamma_1] \]
		By Lemma~\ref{l:weakening}, and as $\phi;\Phi \vdash I = I_0 = I_1$  we can obtain two proofs:
		\[ \Itype[@][@][{\Idecr[\Gamma_0], \Delta_0', \vect{\var} : \Isub[\vect{T_1}][\vect{i}][\vect{J}]}]{P}{\Isub[K_1][\vect{i}][\vect{J}]} \qquad \Ietype[@][@][{\Idecr[\Gamma_1], \Delta_1'}]{\vect{e}}{\Isub[\vect{T_1}][\vect{i}][\vect{J}]} \]
		Finally, by subtyping, with the remark above, we obtain:
		\[ \Itype[@][@][{\Idecr, \vect{\var} : \Isub[\vect{T_1}][\vect{i}][\vect{J}]}]{P}{\Isub[K_1][\vect{i}][\vect{J}]} \qquad \Ietype[@][@][\Idecr]{\vect{e}}{\Isub[\vect{T_1}][\vect{i}][\vect{J}]} \]
		Thus, by the substitution lemma (Lemma~\ref{l:substitution}), we have $\Itype[@][@][\Idecr]{\psub}{\Isub[K_1][\vect{i}][\vect{J}]}$.
		Then, by delaying (Lemma~\ref{l:delaying}), we have $\Itype[@][@][\Iincr{(\Idecr)}]{\psub}{I + \Isub[K_1][\vect{i}][\vect{J}]}$, and $\Gamma = \Gamma', \Delta$ with $\phi;\Phi \vdash \Gamma' \subtype \Iincr{(\Idecr)}$. Recall that $\phi;\Phi \vDash I + \Isub[K_1][\vect{i}][\vect{J}] \le K$. Thus, again by subtyping and weakening, we obtain 
		\[ \Itype{\psub}{K} \]
		And this concludes this case. 
		
		\item \textbf{Case} $\pin \parr \pout \red \psub$. Consider the typing $\Itype{\pserv \parr \pout}{K}$. The first rule is the rule for parallel composition, then the proof is split into the two following subtree:
		
		\begin{prooftree}
			\footnotesize 		
			\AxiomC{$\Ietype[@][@][\Gamma_1]{a}{\Ioch[I_1][\vect{T_1}]}$} 
			\AxiomC{$\pi_e$}
			\UnaryInfC{$\Ietype[@][@][{\Idecr[\Gamma_1][I_1]}]{\vect{e}}{\vect{T_1}}$}
			\BinaryInfC{$\Itype[@][@][\Gamma_1]{\pout}{I_1} $}
			\AxiomC{$\phi;\Phi \vdash \Gamma \subtype \Gamma_1 ; I_1 \le K$}
			\BinaryInfC{$\Itype{\pout}{K}$}
		\end{prooftree}
		
		\begin{prooftree}
			\footnotesize
			\AxiomC{$\Ietype[@][@][\Gamma_0]{a}{\Iich[I_0][\vect{T_0}]}$}
			\AxiomC{$\pi_P$}
			\UnaryInfC{$\Itype[@][@][{\Idecr[\Gamma_0][I_0]}, \vect{\var} : \vect{T_0}]{P}{K_0}$}
			\BinaryInfC{$\Itype[@][@][\Gamma_0]{\pin}{I_0 + K_0}$}
			\AxiomC{$\phi;\Phi \vdash \Gamma \subtype \Gamma_0 ; I_0 + K_0 \le K$}
			\BinaryInfC{$\Itype{\pin}{K}$}
		 \end{prooftree}
	 
	 	As the name $a$ is used as an input and as an output, so the original type in $\Gamma$ for this name must be a channel type $\Ich$. By Lemma~\ref{l:exhaustivesubtyping}, we have:
	 	\[\phi;\Phi \vDash I_0 = I \qquad \phi;\Phi \vdash \vect{T} \subtype \vect{T_0} \qquad \phi;\Phi \vDash I = I_1 \qquad \phi;\Phi \vdash \vect{T_1} \subtype \vect{T}\]
	 	So, we obtain directly: 
	 	\[\phi;\Phi \vDash I_0 = I_1 \qquad (\phi,\vect{i});\Phi \vdash \vect{T_1} \subtype \vect{T_0} \]
	 	
	 	Thus, by subtyping, from $\pi_P$ we can obtain a proof of $\Itype[@][@][{\Idecr[\Gamma_0][I_0], \vect{\var} : \vect{T_1}}]{P}{K_0}$.
	 	Now, by Lemma~\ref{l:advanceandsubtyping}, we have: 
	 	\[ \Idecr = \Gamma_0', \Delta_0' \qquad \phi;\Phi \vdash \Gamma_0' \subtype \Idecr[\Gamma_0] \qquad \Idecr = \Gamma_1', \Delta_1' \qquad \phi;\Phi \vdash \Gamma_1' \subtype \Idecr[\Gamma_1] \]
	 	By Lemma~\ref{l:weakening}, and as $\phi;\Phi \vdash I = I_0 = I_1$  we can obtain two proofs:
	 	\[ \Itype[@][@][{\Idecr[\Gamma_0], \Delta_0', \vect{\var} : \vect{T_1}}]{P}{K_0} \qquad \Ietype[@][@][{\Idecr[\Gamma_1], \Delta_1'}]{\vect{e}}{\vect{T_1}} \]
	 	Then, by subtyping, with the remark above, we obtain:
	 	\[ \Itype[@][@][\Idecr, \vect{\var} : \vect{T_1}]{P}{K_0} \qquad \Ietype[@][@][\Idecr]{\vect{e}}{\vect{T_1}} \]
	 	Thus, by the substitution lemma (Lemma~\ref{l:substitution}), we have $\Itype[@][@][\Idecr]{\psub}{K_0}$.
	 	Then, by delaying (Lemma~\ref{l:delaying}), we have $\Itype[@][@][\Iincr{(\Idecr)}]{\psub}{I + K_0}$, and $\Gamma = \Gamma', \Delta$ with $\phi;\Phi \vdash \Gamma' \subtype \Iincr{(\Idecr)}$. Recall that $\phi;\Phi \vDash I + K_0 \le K$. Thus, again by subtyping and weakening, we obtain 
	 	\[ \Itype{\psub}{K} \]
	 	And this concludes this case. 
	 	
	 	\item \textbf{Case} $\pifl[\nil] \red P$. This case is similar to its counterpart for natural number and the two case for booleans, so we only detail this one. Suppose given a derivation $\Itype{\pifl[\nil]}{K}$. Then the derivation has the shape:
	 	\begin{prooftree}
	 		\footnotesize
	 		\AxiomC{}
	 		\UnaryInfC{$\Ietype[@][@][\Delta]{\nil}{\Ilis[0][0][\IB']}$}
	 		\AxiomC{$\phi;\Phi \vdash \Gamma \subtype \Delta; \Ilis[0][0][\IB'] \subtype \Ilis$}
	 		\BinaryInfC{$\Ietype{\nil}{\Ilis}$}
	 		\AxiomC{$\pi_P$}
	 		\UnaryInfC{$\Itype[@][(\Phi,I \le 0)][@]{P}{K}$}
	 		\AxiomC{$\pi_Q$}
	 		\TrinaryInfC{$\Itype{\pifl[\nil]}{K}$}
	 	\end{prooftree}
		Where $\pi_Q$ is the typing for $Q$ that does not interest us in this case. By Lemma~\ref{l:exhaustivesubtyping}, we obtain:
		\[ \phi;\Phi \vDash I \le 0 \qquad \phi;\Phi \vDash 0 \le J \qquad \phi;\Phi \vdash \IB' \subtype \IB \]
		As $\phi;\Phi \vDash I \le 0$, by Lemma~\ref{l:strengthening}, we obtain directly from $\pi_P$ a proof $\Itype[@][@][@]{P}{K}$. 
		
		\item \textbf{Case} $\pifl[e::e'] \red \psub[Q][x,y][e,e']$. This case is more difficult than its counterpart for integers, thus we only detail this case and the one for integers can easily be deduced from this one. Suppose given a derivation $\Itype{\pifl[e::e']}{K}$. Then the proof has the shape:
		
		\medskip 
		
		\leftshiftt{
			\footnotesize
			\AxiomC{$\pi_e$}
			\UnaryInfC{$\Ietype[@][@][\Delta]{e}{\IB'}$}
			\AxiomC{$\pi_{e'}$}
			\UnaryInfC{$\Ietype[@][@][\Delta]{e'}{\Ilis[I'][J'][\IB']}$}
			\BinaryInfC{$\Ietype[@][@][\Delta]{e::e'}{\Ilis[I'+1][J'+1][\IB']}$}
			\AxiomC{$\phi;\Phi \vdash \Gamma \subtype \Delta; \Ilis[I'+1][J'+1][\IB'] \subtype \Ilis$}
			\BinaryInfC{$\Ietype{e::e'}{\Ilis}$}
			\AxiomC{$\pi_P$}
			\AxiomC{$\pi_Q$}
			\TrinaryInfC{$\Itype{\pifl[e::e']}{K}$}
			\DisplayProof}
		
		\medskip 
		
		Where $\pi_Q$ is a proof of $\Itype[@][(\Phi,J \ge 1)][\Gamma, x : \IB, y : {\Ilis[I-1][J-1][\IB]}]{Q}{K}$, and $\pi_P$ is a typing derivation for $P$ that does not interest us in this case.
		
		Lemma~\ref{l:exhaustivesubtyping} gives us the following information:
		\[ \phi;\Phi \vDash I \le I' + 1 \qquad \phi;\Phi \vDash J' + 1 \le J \qquad \phi;\Phi \vdash \IB' \subtype \IB \] 
		From this, we can deduce the following constraints:
		\[ \phi;\Phi \vDash J \ge 1 \qquad \phi;\Phi \vDash I - 1 \le I' \qquad \phi;\Phi \vDash J' \le J-1 \]
		Thus, with the subtyping rule and the proofs $\pi_e$ and $\pi_{e'}$ we obtain:
		\[ \Ietype{e}{\IB} \qquad \Ietype{e'}{\Ilis[I-1][J-1][\IB]} \]
		Then, by Lemma~\ref{l:strengthening}, from $\pi_Q$ we obtain a proof of $\Itype[@][@][\Gamma, x : \IB, y : {\Ilis[I-1][J-1][\IB]}]{Q}{K}$. By
		the substitution lemma (Lemma~\ref{l:substitution}), we obtain $\Itype[@][@][@]{\psub[Q][x,y][e,e']}{K}$. This concludes this case. 
		
		\item \textbf{Case} $P \parr R \red Q \parr R$ with $P \red Q$. Suppose that $\Itype{P \parr R}{K}$. Then the proof has the shape:
		\begin{prooftree}
			\footnotesize
			\AxiomC{$\pi_P$}
			\UnaryInfC{$\Itype{P}{K}$}
			\AxiomC{$\pi_R$}
			\UnaryInfC{$\Itype{R}{K}$}
			\BinaryInfC{$\Itype{P \parr R}{K}$}
		\end{prooftree}
		By induction hypothesis, with the proof $\pi_P$ of $\Itype{P}{K}$, we obtain a proof $\pi_Q$ of $\Itype{P}{Q}$. Then, we can derive the following proof:
		\begin{prooftree}
			\footnotesize
			\AxiomC{$\pi_Q$}
			\UnaryInfC{$\Itype{Q}{K}$}
			\AxiomC{$\pi_R$}
			\UnaryInfC{$\Itype{R}{K}$}
			\BinaryInfC{$\Itype{Q \parr R}{K}$}
		\end{prooftree}
		This concludes this case. 
		
		\item \textbf{Case} $\pnu P \red \pnu Q$ with $P \red Q$. Suppose that $\Itype{\pnu P}{K}$. Then the proof has the shape:
		\begin{prooftree}
			\footnotesize
			\AxiomC{$\pi_P$}
			\UnaryInfC{$\Itype[@][@][\Gamma, a : T]{P}{K}$}
			\UnaryInfC{$\Itype{\pnu P}{K}$}
		\end{prooftree}
		By induction hypothesis, with the proof $\pi_P$ of $\Itype[@][@][\Gamma, a : T]{P}{K}$, we obtain a proof $\pi_Q$ of $\Itype[@][@][\Gamma, a : T]{Q}{K}$
		We can then derive the proof:
		\begin{prooftree}
			\footnotesize
			\AxiomC{$\pi_Q$}
			\UnaryInfC{$\Itype[@][@][\Gamma, a : T]{Q}{K}$}
			\UnaryInfC{$\Itype{\pnu Q}{K}$}
		\end{prooftree} 
		This concludes this case. 
		
		\item \textbf{Case} $P \red Q$ with $P \congr P'$, $P' \red Q'$ and $Q \congr Q'$. Suppose that $\Itype{P}{K}$. By Lemma~\ref{l:congrtype}, we have $\Itype{P'}{K}$. By induction hypothesis, we obtain $\Itype{Q'}{K}$. Then, again by Lemma~\ref{l:congrtype}, we have $\Itype{Q}{K}$. This concludes this case. 
	\end{itemize}
	
	This concludes the proof of Theorem~\ref{t:nonquantsubjectreduction}. 
	
	\subsection{Quantitative Subject Reduction}
	\label{ss:quantitativesubjectreduction}
	
	We now want to prove that our type system can effectively give a bound on the number of time reduction. However, the subject reduction for time reduction does not hold as expected, in fact our type system relies crucially on the tick-last strategy. To see where the problem is, let us consider the following process:
	\[P = \pin[@][][\tick.\pzero] \parr \pout[@][] \parr \tick.\pzero \] 
	In an unrestricted setting, this process could need two time reductions to reach a normal form. 
	\begin{equation}
	P \tred (\pin[@][][\tick.\pzero] \parr \pout[@][] \parr \pzero) \red (\tick.\pzero \parr \pzero) \tred (\pzero \parr \pzero)
	\end{equation} 
	However, with the tick-last strategy, we obtain:
	\begin{equation}
	P \red (\tick.\pzero \parr \tick. \pzero) \tred (\pzero \parr \pzero)
	\end{equation}
	And this corresponds to a reduction with ''maximal parallelism'', as we considered the tick to be the costly operation. 
	As we wanted, our type system can give this process a complexity $1$, with for example the following typing:
	
	\medskip 
	
	\leftshiftt{
		\footnotesize
		\AxiomC{}
		\doubleLine 
		\UnaryInfC{$\Ietype[\cdot][\cdot][{a : \Ich[0][]}]{a}{\Iich[0][]}$}
		\AxiomC{}
		\UnaryInfC{$\Itype[\cdot][\cdot][\cdot]{\pzero}{0}$}
		\UnaryInfC{$\Itype[\cdot][\cdot][{a : \Ich[0][]}]{\tick.0}{1}$}
		\BinaryInfC{$\Itype[\cdot][\cdot][{a : \Ich[0][]}]{\pin[@][][\tick.\pzero]}{1}$}
		\AxiomC{}
		\UnaryInfC{$\Itype[\cdot][\cdot][{a : \Ioch[0][]}]{\pout[@][]}{0}$}
		\AxiomC{$\cdot;\cdot \vdash \Ich[0][] \subtype \Ioch[0][] ; 0 \le 1$}
		\BinaryInfC{$\Itype[\cdot][\cdot][{a : \Ich[0][]}]{\pout[@][]}{1}$}
		\AxiomC{}
		\UnaryInfC{$\Itype[\cdot][\cdot][\cdot]{\pzero}{0}$}
		\UnaryInfC{$\Itype[\cdot][\cdot][{a : \Ich[0][]}]{\tick.0}{1}$}
		\TrinaryInfC{$\Itype[\cdot][\cdot][{a : \Ich[0][]}]{P}{1}$}
		\DisplayProof}
	
	\medskip 
	
	As a consequence, this typing for $P$ does not give a bound on the number of time reduction in $(1)$. Intuitively, this is because the typing $a : \Ich[0][0]$ announces that $a$ will do its communication at time $0$, whereas the reduction $(1)$ does this reduction at time $1$. However, in the tick-last strategy, all communications are made as early as possible. As a consequence, a name of type $\Ich[0]$ will not do any communication at a time greater than $0$. So, what we will show is that if $\Itype{P}{K}$ and $P \tred P'$, then, there is some $\Gamma',K',P''$ with $\Itype[@][@][\Gamma']{P''}{K'}$, such that $\Gamma'$ is close to $\Idecr[@][1]$, $K' + 1 \le K$ and $P''$ can simulate $P'$. 
	
	For this simulation of $P'$, we first work with the type system of Figure~\ref{f:typeexpression} and Figure~\ref{f:typeprocess}, to present general result for input/output types and not specific to our type system. However, we still have the tick with its associated typing rules presented in the beginning of Section~\ref{s:sizetypes}. Then, using this generic definition, we will present something linked with our type system, and we will directly obtain that we have a simulation. 
	
	\begin{definition}[Discarding Deadlocked Processes]
		Let $R$ be a process and $\IS$ a multiset of guarded processes included in the top guarded processes of $P$ (see the proof of Lemma~\ref{l:uniquecanonical} for a formal definition) such that:
		\begin{itemize}
			\item $R \congr \pnu[\vect{a}] (P_1 \parr \cdots \parr P_n \parr Q_1 \parr \cdots \parr Q_m)$ with $P_1,\dots,P_n,Q_1,\dots,Q_m$ guarded processes.
			\item $[P_1 \dots P_n] = \IS$.
			\item $\IS$ contains only non-replicated input and output processes. 
			\item $P_1 \parr \cdots \parr P_n$ is in normal form for $\red$. 
			\item For each name $a$ at a top of a process in $\IS$, $a$ cannot appear both at the top of an input process and an output process in $\IS$. 
			\item There exist $\Gamma$ such that $\Gamma \vdash Q_1 \parr \cdots \parr Q_m$ and for each name $a$ at a top of a process in $\IS$, either $a$ appears only at the top of input processes in $\IS$, and $a : \ich \in \Gamma$ for some $\vect{T}$ or $a$ appears only at the top of output processes in $\IS$, and $a : \och \in \Gamma$ for some $\vect{T}$.    
		\end{itemize}
		Then, we say that $P'$ is a \emph{$\IS$-discarding} of $P$, noted $\Idisc{P,P'}$ if $P' \congr \pnu[\vect{a}] (Q_1 \parr \cdots \parr Q_n)$. 
	\label{d:discarddeadlock}
	\end{definition}
	
	With this generic notion, we can show that we define a simulation of $R$ in the following sense.
	
	\begin{lemma}
		If $R_0 \red R_0'$ and $\Idisc{R_0,R_1}$, then there exists $R_1'$ such that $\Idisc{R_0',R_1'}$ and $R_1 \red R_1'$. We also have the same simulation for $\tred$. Moreover, if $R_0 \tred R_0$ and $\Idisc{R_0,R_1}$ then $R_1 \tred R_1$.     
	\label{l:discardsimulation}
	\end{lemma}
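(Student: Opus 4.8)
The plan is to fix, once and for all, canonical representatives $R_0 \congr \pnu[\vect{a}](P_1 \parr \cdots \parr P_n \parr Q_1 \parr \cdots \parr Q_m)$ and $R_1 \congr \pnu[\vect{a}](Q_1 \parr \cdots \parr Q_m)$ witnessing $\Idisc{R_0,R_1}$, writing $\IS = [P_1,\dots,P_n]$ and $Q = Q_1 \parr \cdots \parr Q_m$ with $\Gamma \vdash Q$ as in Definition~\ref{d:discarddeadlock}. The heart of the argument is the observation that the discarded processes in $\IS$ are genuinely deadlocked: no reduction of $R_0$ can ever involve one of the $P_i$. I would establish this first, and then read off the three statements.

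To prove the deadlock observation, I would use the exhaustive description of $\red$ for processes (the analogue of Lemma~\ref{l:exhaustivereductions}) together with the uniqueness of canonical forms (Lemma~\ref{l:uniquecanonical}): any redex of $R_0$ is located, up to congruence and permutation, among the top guarded processes $P_1,\dots,P_n,Q_1,\dots,Q_m$. A reduction is either a conditional step on a single guarded process --- impossible for a $P_i$, since $\IS$ contains only inputs and outputs --- or a communication between two guarded processes sharing a name $a$. The case of two processes inside $\IS$ is excluded by the hypotheses that $P_1 \parr \cdots \parr P_n$ is in normal form and that no name occurs both as an input and an output head in $\IS$. The mixed case, one participant in $\IS$ and one in $Q$, is where the typing hypothesis is essential: if $a$ heads an input in $\IS$ then $a : \ich[\vect{T}] \in \Gamma$, so by $\Gamma \vdash Q$ the name $a$ can only be used for reception in $Q$ and no output on $a$ can occur there; symmetrically, if $a$ heads an output in $\IS$ then $a : \och[\vect{T}] \in \Gamma$ forbids any input on $a$ in $Q$, replicated or not. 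Hence every reduction of $R_0$ takes place entirely inside $Q$.

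With the deadlock observation in hand, the $\red$ simulation is immediate. If $R_0 \red R_0'$ then $Q \red Q'$ and $R_0' \congr \pnu[\vect{a}](\IS \parr Q')$; setting $R_1' \congr \pnu[\vect{a}](Q')$ gives $R_1 \red R_1'$ through the contextual rule for $\nu$ and reduction up to congruence. To see that $\Idisc{R_0',R_1'}$ still holds with the same $\IS$, I would invoke subject reduction for the input/output type system (standard, see \cite{SangiorgiWalkerPi}) to get $\Gamma \vdash Q'$; since $\IS$ is untouched, all the remaining clauses of Definition~\ref{d:discarddeadlock}, in particular the input-only/output-only typing of the names heading $\IS$, are preserved verbatim. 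The $\tred$ simulation is analogous and even simpler: since $\IS$ consists of inputs and outputs, which are fixed points of $\tred$, the rule for parallel composition and the determinism of $\tred$ give $R_0' \congr \pnu[\vect{a}](\IS \parr Q')$ with $Q \tred Q'$, and stripping top-level ticks preserves i/o typability, so $\Gamma \vdash Q'$ and the discarding conditions again carry over; Lemma~\ref{l:timereductioncongr} lets me transport everything along $\congr$.

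Finally, for the invariance claim, I would use the characterisation following Lemma~\ref{l:exhaustivetimereductions}: $R_0 \tred R_0$ means that none of the top guarded processes of $R_0$ starts with a tick. The top guarded processes $Q_1,\dots,Q_m$ of $R_1$ are among those of $R_0$, so none of them starts with a tick either, whence $Q \tred Q$ and $R_1 \congr \pnu[\vect{a}](Q) \tred \pnu[\vect{a}](Q) \congr R_1$. The step I expect to require the most care is the deadlock observation, and specifically making the localisation of the redex rigorous: the reduction relation is defined up to $\congr$, so pinning down which top guarded processes actually react requires combining the exhaustive-reduction lemma with the uniqueness of canonical forms, rather than reasoning syntactically on $R_0$ directly.
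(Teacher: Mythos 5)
Your proof is correct and follows essentially the same route as the paper's: the ``deadlock observation'' you isolate is exactly the paper's inline case analysis (via Lemma~\ref{l:exhaustivereductions}) showing that no redex can involve an element of $\IS$ --- replicated inputs and conditionals are excluded from $\IS$ by definition, intra-$\IS$ communication by the normal-form and name conditions, and mixed $\IS$/$Q$ communication by the input-only/output-only capabilities recorded in $\Gamma$ --- followed by input/output subject reduction to preserve the discarding conditions and the no-top-tick characterisation for $\tred$-invariance. The only difference is packaging: you factor the case analysis out as a standalone observation before reading off the three claims, whereas the paper performs it case by case inside the simulation argument.
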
 
	 
	\begin{proof}
		First, with Lemma~\ref{l:exhaustivereductions}, we can do a case analysis on the reduction $R_0 \red R_0'$.
		\begin{itemize}
			\item If we are in the case: 
			\[ R_0 \congr \pnu[\vect{b}] (Q_1 \parr \cdots \parr Q_n \parr \pserv \parr \pout) \qquad R_0' \congr \pnu[\vect{b}] (Q_1 \parr \cdots \parr Q_n \parr \pserv \parr \psub)  \]
			By definition of $\Idisc{R_0,R_1}$, $\pserv \notin \IS$ because $\IS$ cannot have replicated input. Moreover, $\pout \notin \IS$ since we cannot have $\Gamma \vdash \pserv$ if $a : \och \in \Gamma$. So, we can write 
			\[ R_0 \congr \pnu[\vect{b}] (\IS \parr Q_1' \parr \cdots \parr Q_m' \parr \pserv \parr \pout) \qquad R_0' \congr \pnu[\vect{b}] (\IS \parr Q_1' \parr \cdots \parr Q_m' \parr \pserv \parr \psub)  \]
			Now, we want to show that the guarded processes of $R_0'$ without $\IS$ are typable. By definition of $\Idisc{R_0,R_1}$, $\Gamma \vdash Q_1' \parr \cdots \parr Q_m' \parr \pserv \parr \pout$ with the good restrictions on $\Gamma$. By subject reduction of input/output types \cite{SangiorgiWalkerPi}, we obtain directly $\Gamma \vdash Q_1' \parr \cdots \parr Q_m' \parr \pserv \parr \psub$. So, if we pose $R_1' = \pnu[\vect{b}] (Q_1' \parr \cdots \parr Q_m' \parr \pserv \parr \psub)$, we have $\Idisc{R_0',R_1'}$ and $R_1 \red R_1'$ because:
			\[ R_1 \congr \pnu[\vect{b}] (Q_1' \parr \cdots \parr Q_m' \parr \pserv \parr \pout) \qquad R_1' = \pnu[\vect{b}] (Q_1' \parr \cdots \parr Q_m' \parr \pserv \parr \psub)  \]
			 
			\item If we are in the case: 
			\[ R_0 \congr \pnu[\vect{b}] (Q_1 \parr \cdots \parr Q_n \parr \pin \parr \pout) \qquad R_0' \congr \pnu[\vect{b}] (Q_1 \parr \cdots \parr Q_n \parr \pin \parr \psub)  \]
			By definition of $\Idisc{R_0,R_1}$, either $\pin \notin \IS$ or $\pout \notin \IS$ because $\IS$ cannot have a name both at a top of an input and output process. Suppose for example $\pin \notin \IS$. Then, $\pout \notin \IS$ since we cannot have $\Gamma \vdash \pin$ if $a : \och \in \Gamma$. Symmetrically, in $\pout \notin \IS$ then $\pin \notin \IS$. So, we can write 
			\[ R_0 \congr \pnu[\vect{b}] (\IS \parr Q_1' \parr \cdots \parr Q_m' \parr \pin \parr \pout) \qquad R_0' \congr \pnu[\vect{b}] (\IS \parr Q_1' \parr \cdots \parr Q_m' \parr \pin \parr \psub)  \]
			Then, we can conclude this proof as in the previous case. 
			\item  If we are in the case:
			\[ R_0 \congr \pnu[\vect{b}] (Q_1 \parr \cdots \parr Q_n \parr \pifn[\zero]) \qquad R_0' \congr \pnu[\vect{b}] (Q_1 \parr \cdots \parr Q_n \parr P)  \]
			Then by definition, $\pifn[\zero] \notin \IS$, so we go back to the previous cases where $\IS$ is only in the $R_i$. All the other conditionals behaves the same way.
		\end{itemize}
		Then, with Lemma~\ref{l:exhaustivetimereductions}, we know the shape of a time reduction, and as processes starting with a tick cannot be in $\IS$, we obtain the result in the same way as conditionals for $\red$. For this, remark that the subject reduction of input/output types for $\tred$ is straightforward by definition of the tick rule. Finally, we can prove that if $R_0 \tred R_0$ and $\Idisc{R_0,R_1}$ then $R_1 \tred R_1$ by remarking that $P \tred P$ if and only if in the top guarded processes of the canonical form of $P$, none of them start with a tick. 
	\end{proof}
	
	We also show that discarding preserves normal form. 
	
	\begin{lemma}
		If $P$ is in normal form for $\red$ and $\Idisc{P,P'}$, then $P'$ is in normal form for $\red$.
	\label{l:discardnormalform}
	\end{lemma}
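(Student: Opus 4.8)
The plan is to argue by contraposition: I will show that any $\red$-reduction of $P'$ can be lifted to a $\red$-reduction of $P$, so that if $P$ is in normal form then $P'$ must be as well. The essential observation is that, by Definition~\ref{d:discarddeadlock}, the top guarded processes of $P'$ form a sub-multiset of those of $P$: writing $P \congr \pnu[\vect{a}](P_1 \parr \cdots \parr P_n \parr Q_1 \parr \cdots \parr Q_m)$ with $[P_1,\dots,P_n] = \IS$, we have $P' \congr \pnu[\vect{a}](Q_1 \parr \cdots \parr Q_m)$. Removing the parallel components $\IS$ can neither create a new communication partner nor turn a conditional into a redex, so every redex of $P'$ is already a redex of $P$.

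To make this precise, suppose toward a contradiction that $P' \red P''$. By the process version of Lemma~\ref{l:exhaustivereductions}, the canonical form of $P'$ can be written, up to $\congr$, so as to expose the redex among its top guarded processes, that is, as $\pnu[\vect{c}](G_1 \parr \cdots \parr G_l)$ where two of the $G_i$ form a communicating server/output or input/output pair on a common name, or one of them is a conditional applied to a value. By uniqueness of the canonical form (Lemma~\ref{l:uniquecanonical}), $\vect{c}$ is a permutation of $\vect{a}$ and $G_1,\dots,G_l$ is, up to $\congr$ and permutation, exactly $Q_1,\dots,Q_m$. Since congruence preserves the top constructor of a guarded process, the components constituting the redex are congruent to, hence occur among, the $Q_j$, and therefore among the top guarded processes of $P$.

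I then rebuild the canonical form of $P$ to expose the very same redex: applying the same reordering of binders and parallel components to $P \congr \pnu[\vect{a}](\IS \parr Q_1 \parr \cdots \parr Q_m)$ gives $P \congr \pnu[\vect{c}](\IS \parr G_1 \parr \cdots \parr G_l)$. Since the shapes listed in Lemma~\ref{l:exhaustivereductions} permit an arbitrary family of guarded processes as parallel context, the presence of the extra components $\IS$ is harmless, and $P$ matches one of those shapes. Thus $P \red R$ for some $R$, contradicting that $P$ is in normal form for $\red$.

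The only point requiring care is the bookkeeping with $\congr$ and $\nu$-scoping when transferring the redex from $P'$ to $P$; this is entirely handled by uniqueness of the canonical form. I stress that the specific hypotheses on $\IS$ in Definition~\ref{d:discarddeadlock} (being in normal form, not sharing an input/output name, typability) play no role here: the argument uses only that $\IS$ is a sub-multiset of the top guarded processes of $P$, together with the semantic fact that discarding parallel components cannot introduce fresh redexes.
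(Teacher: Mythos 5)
Your proposal is correct and follows essentially the same route as the paper: the paper's (one-line) proof is exactly this contrapositive argument via Lemma~\ref{l:exhaustivereductions}, observing that a redex exposed in the discarded process $P'$ persists as a redex of $P$ since the components of $\IS$ are merely extra parallel guarded processes. Your write-up just makes explicit the canonical-form bookkeeping (Lemma~\ref{l:uniquecanonical}) that the paper leaves implicit.
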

	
	\begin{proof}
		With Lemma~\ref{l:exhaustivereductions}, one can see that if $\Idisc{P}$ is not in normal form for $\red$, then $P$ cannot be in normal form, thus we obtain directly the propriety.
	\end{proof}
	
	As a consequence of Lemma~\ref{l:discardsimulation} and Lemma~\ref{l:discardnormalform}, if a process $P_0$ can be reduced to $Q_0$ by the strategy of Definition~\ref{d:reductionstrategy}, then, if we have $\Idisc{P_0,P_1}$, $P_1$ can also be reduced to $Q_1$ with $\Idisc{Q_0,Q_1}$ by the same strategy. Indeed, by Lemma~\ref{l:discardsimulation}, when $P_0 \red^* P_0'$ then $P_1 \red^* P_1'$ with $\Idisc{P_0',P_1'}$, and if $P_0'$ is in normal form, so is $P_1'$ by Lemma~\ref{l:discardnormalform}. And then, if $P_0' \tred P_0'$ and the computation stops, then $P_1' \tred P_1'$ and so the computation stops. Otherwise, if $P_0' \tred Q_0'$, then $P_1' \tred Q_1'$ with $\Idisc{Q_0',Q_1'}$, and we can continue the simulation. 
	
	Now, we want to use this with a notion of discarding linked with our type system. 
	
	\begin{definition}[Discarding Time Out Processes]
		Given a process $P$ in normal form for $\red$ with a proof $\pi$ of $\Itype[@][@][\cdot]{P}{K}$, by Lemma~{\ref{l:closedtypednormalform2}}, its canonical form is: 
		\[\leftshift{P \congr \pnu[\vect{a}](\pserv[b_1][\vect{\var_1^0}][P_1] \parr \cdots \parr \pserv[b_k][\vect{\var_k^0}][P_k] \parr \pin[c_1][\vect{\var_1^1}][Q_1] \parr \cdots \parr \pin[c_{k'}][\vect{\var_{k'}^1}][Q_{k'}] \parr \pout[d_1][\vect{e_1}] \parr \cdots \parr \pout[d_{k''}][\vect{e_{k''}}] \parr \tick.R_1 \parr \cdots \parr \tick.R_{k'''})} \]
		with $(\{ b_i \mid 1 \le i \le k \} \cup \{ c_i \mid 1 \le i \le k' \}) \cap \{ d_i \mid 1 \le i \le k\} = \emptyset$.
		In the proof $\pi$, a type is given to each name in $\vect{a}$ with the rule for the $\pnu$ constructor. Note that during the proof, this type is not fixed because of subtyping, but its time is, as the time of a type is invariant by subtyping. 
		We define the multiset of \emph{timed out processes} of $P$ according to $\pi$, noted $\IT[\pi](P)$, included in the top guarded processes of $P$ by the following rules:
		\begin{itemize}
			\item A server $\pserv[b_i][\vect{\var_i^0}][P_i]$ is never in $\IT[\pi](P)$. 
			\item A ticked process $\tick.R_i$ is never in $\IT[\pi](P)$. 
			\item An input $\pin[c_i][\vect{\var_i^1}][Q_i]$ is in $\IT[\pi](P)$ if and only if the time $I$ of the associated type of $c_i$ is such that $\phi;\Phi \not\vDash I \ge 1$
			\item An output $\pout[d_i][\vect{e_i}]$ is in $\IT[\pi](P)$ if and only if the time $I$ of the associated type of $d_i$ is such that $\phi;\Phi \not\vDash I \ge 1$ 
		\end{itemize}
		\label{d:discardtime}
	\end{definition} 
	
	When the proof $\pi$ is not ambiguous, we use $\IT(P)$. Now what we want to show is that if $\Idisc[\IT(P)]{P,P'}$, then $P'$ has exactly the same behaviour as $P$ for the reductions. In order to do this, we show that $\IT(P)$ is a special case of Definition~\ref{d:discarddeadlock}.
	
	\begin{lemma}[Time Out Processes and Discarding]
		Let $P$ be a process in normal form for $\red$ with a proof $\pi$ of $\Itype[@][@][\cdot]{P}{K}$. Then, $\IT(P)$ satisfies the condition of Definition~\ref{d:discarddeadlock}.
	\label{l:timeoutisdeadlock}  
	\end{lemma}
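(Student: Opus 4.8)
The plan is to take $\IS = \IT(P)$ itself as the witness multiset and to verify the bullets of Definition~\ref{d:discarddeadlock} one by one. Writing $P$ in the canonical form of Lemma~\ref{l:closedtypednormalform2}, the processes collected in $\IT(P)$ are exactly the time-$0$ inputs and time-$0$ outputs, while the complement $Q_1 \parr \cdots \parr Q_m$ gathers the servers, the ticked processes, and every input or output whose associated time $I$ satisfies $\vDash I \ge 1$. The congruence and membership bullets are then immediate, and the bullet ``contains only non-replicated inputs and outputs'' holds since a server or a $\tick$ is never placed in $\IT(P)$ by Definition~\ref{d:discardtime}. The bullet forbidding a name to head both an input and an output of $\IS$ follows from the disjointness $(\{b_i\} \cup \{c_i\}) \cap \{d_i\} = \emptyset$ of Lemma~\ref{l:closedtypednormalform2}, since the input subjects of $\IS$ lie among the $c_i$ and the output subjects among the $d_i$. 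The bullet ``$P_1 \parr \cdots \parr P_n$ in normal form'' follows because any $\red$-step inside $\IS$ would lift through the parallel and $\nu$ contexts to a $\red$-step of $P$, contradicting that $P$ is normal. The whole difficulty is thus the last, typing bullet.

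For that bullet I would first apply Lemma~\ref{l:forgetfultyping} to $\pi$, obtaining a simple input/output typing $\vdash P$; stripping the $\nu$-binders gives $\IU(\vect{a} : \vect{T}) \vdash Q_1 \parr \cdots \parr Q_m$, using that $P$ is closed. It then remains to turn this context into a $\Gamma$ assigning $\ich$ to each input-head of $\IS$ and $\och$ to each output-head. The key claim is that every subject of an $\IS$-process which is a plain (non-server) channel does not occur free in the complement at all. Such a name $a$ carries at its $\nu$-binder a channel type of time $0$: its time is $0$ because its heading input or output sits at the top level, where no time has yet elapsed, and lies in $\IT(P)$, and the time of a type is invariant under subtyping. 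Once this is known, $a$ may be given either capability in $\Gamma$ freely, since altering the type of a name that does not occur cannot break a derivation.

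To establish the vanishing claim I would examine how the complement descends into its sub-terms. Each guarded process of $Q_1 \parr \cdots \parr Q_m$ is either a server, whose body is typed in a time-invariant context; an input or output of time $I$ with $\vDash I \ge 1$, whose continuation or message is typed under $\Idecr[\Gamma][I]$; or a process $\tick.R$, whose body is typed under $\Idecr[\Gamma][1]$. In the latter two cases the advance-time operator of Definition~\ref{d:advancetime} makes the type of a time-$0$ plain channel undefined, so $a$ is erased from the relevant context and cannot be free below; in the server case a time-invariant context (Definition~\ref{d:timeinvariant}) holds only base types and time-$0$ output servers, so $a$ is again absent. Moreover $a$ cannot head a complement process: it is not a server, hence differs from every $b_i$; and matching it with a time-$\ge 1$ input $c_i$ or output $d_i$ would force its (invariant) time to be simultaneously $0$ and $\ge 1$. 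Hence $a$ is genuinely absent, and we set it to $\ich$ or $\och$ as required.

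The remaining, and most delicate, case is an output-head of $\IS$ whose subject is a server, for then the advance-time operator does not erase it. Here I would use that an output server retains only its output capability under any elapsed time, that a full server is downgraded to an output server as soon as time passes (Definition~\ref{d:advancetime}), and that the disjointness condition prevents such a name from heading a server definition at top level; together these force the name to be used purely in output position throughout the complement. Replacing its forgetful type by $\och$ therefore preserves the simple typing of $Q_1 \parr \cdots \parr Q_m$. Assembling the three cases produces a context $\Gamma$ with $\Gamma \vdash Q_1 \parr \cdots \parr Q_m$ giving $\ich$ to input-heads and $\och$ to output-heads, which is precisely the typing bullet. I expect this server-output case to be the principal obstacle, as it is the one where the advance-time operator must be analysed on server types to guarantee that the name survives only as an output.
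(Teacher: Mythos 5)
Your proposal is correct and follows essentially the same route as the paper's own proof: like the paper, you handle the one non-trivial (typing) bullet by taking the forgetful image of the types declared at the $\nu$-binders and altering only the entries of names heading $\IS$-processes, justifying this via the behaviour of the advance-time operator (timed-out channels are erased from the contexts typing continuations, messages and tick-bodies, and are absent from time-invariant server contexts, while timed-out servers survive only with output capability), exactly the ingredients the paper uses. The only cosmetic differences are that you phrase the channel case as a non-occurrence (``vanishing'') claim and obtain the no-shared-subject bullet from the disjointness condition of Lemma~\ref{l:closedtypednormalform2}, where the paper instead says the erased names have ``no incidence on the typing'' and re-derives that disjointness from well-definedness of substitution; both variants are sound.
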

	
	\begin{proof}
		First, the fact that $\IT(P)$ is indeed a multiset of guarded processes included in the top guarded processes of $P$ is direct. Moreover, by definition, $\IT(P)$ contains only non-replicated input and output processes, and it is indeed in normal form for $\red$, as $P$ is in normal form. Now, let us show the two remaining points.
		
		Let $a$ be a name at a top of a process in $\IT(P)$. Suppose that $a$ appears both at the top of an input process $\pin$ and an output process $\pout$ in $\IT(P)$. As $P$ is typable with sized types, it is also typable without sizes. So, by the usual result on input/output type, $\vect{v}$ and $\vect{e}$ have the same arity, and base type variables are matched with base type expressions, and channel variable are matched with other channel variables, thus the reduction $\pin \parr \pout \red \psub$ is defined, which contradicts the fact that $P$ is in normal form. 
		
		Finally, let us consider the canonical form of $P$: 
		\[\leftshift{P \congr \pnu[\vect{a}](\pserv[b_1][\vect{\var_1^0}][P_1] \parr \cdots \parr \pserv[b_k][\vect{\var_k^0}][P_k] \parr \pin[c_1][\vect{\var_1^1}][Q_1] \parr \cdots \parr \pin[c_{k'}][\vect{\var_{k'}^1}][Q_{k'}] \parr \pout[d_1][\vect{e_1}] \parr \cdots \parr \pout[d_{k''}][\vect{e_{k''}}] \parr \tick.R_1 \parr \cdots \parr \tick.R_{k'''})} \]
		with $(\{ b_i \mid 1 \le i \le k \} \cup \{ c_i \mid 1 \le i \le k' \}) \cap \{ d_i \mid 1 \le i \le k\} = \emptyset$.
		By Lemma~\ref{l:congrtype}, the typing $\pi$ of $P$ gives us a typing of this canonical form. Moreover, in this typing, the name are given the same type as in the original typing $\pi$ (see the proof of Lemma~\ref{l:congrtype}). If we look at the shape of the proof for the canonical form, it starts with rules for $\nu$ and subtyping rules. And then it uses the rule of parallel composition and again subtyping rules to type each of the guarded processes in the canonical form of $P$.
		
		 Let us first show that we can always push the subtyping rule in the typing of guarded processes in this case. For this, we show that the rule for subtyping and $\nu$ can be swapped, and the same for parallel composition. 
		 \begin{itemize}
		 	\item If we have the typing:
		 	\begin{prooftree}
		 		\footnotesize 
		 		\AxiomC{$\pi$}
		 		\UnaryInfC{$\Itype[@][@][\Delta, a : T]{P}{K'}$}
		 		\UnaryInfC{$\Itype[@][@][\Delta]{\pnu P}{K'}$}
		 		\AxiomC{$\phi;\Phi \vDash \Gamma \subtype \Delta$}
		 		\AxiomC{$\phi;\Phi \vDash K' \le K $}
		 		\TrinaryInfC{$\Itype{\pnu P}{K}$}
		 	\end{prooftree}
	 		Then, we can push the subtyping rule with the following derivation:
	 		\begin{prooftree}
	 			\footnotesize 
	 			\AxiomC{$\pi$}
	 			\UnaryInfC{$\Itype[@][@][\Delta, a : T]{P}{K'}$}
	 			\AxiomC{$\phi;\Phi \vDash \Gamma, a : T \subtype \Delta, a : T$}
	 			\AxiomC{$\phi;\Phi \vDash K' \le K $}
	 			\TrinaryInfC{$\Itype[@][@][\Gamma, a : T]{P}{K}$}
	 			\UnaryInfC{$\Itype{\pnu P}{K}$}
	 		\end{prooftree}
	 		\item If we have the typing: 
	 		\begin{prooftree}
	 			\footnotesize 
	 			\AxiomC{$\pi_P$}
	 			\UnaryInfC{$\Itype[@][@][\Delta]{P}{K'}$}
	 			\AxiomC{$\pi_Q$}
	 			\UnaryInfC{$\Itype[@][@][\Delta]{Q}{K'}$}
	 			\BinaryInfC{$\Itype[@][@][\Delta]{P \parr Q}{K'}$}
	 			\AxiomC{$\phi;\Phi \vDash \Gamma \subtype \Delta$}
	 			\AxiomC{$\phi;\Phi \vDash K' \le K $}
	 			\TrinaryInfC{$\Itype{P \parr Q}{K}$}
	 		\end{prooftree}
	 		Then, we can push the subtyping rule with the following derivation:
	 		 \begin{prooftree}
	 			\footnotesize 
	 			\AxiomC{$\pi_P$}
	 			\UnaryInfC{$\Itype[@][@][\Delta]{P}{K'}$}
	 			\AxiomC{$\phi;\Phi \vDash \Gamma \subtype \Delta$}
	 			\AxiomC{$\phi;\Phi \vDash K' \le K $}
	 			\TrinaryInfC{$\Itype{P}{K}$}
	 			\AxiomC{$\pi_Q$}
	 			\UnaryInfC{$\Itype[@][@][\Delta]{Q}{K'}$}
	 			\AxiomC{$\phi;\Phi \vDash \Gamma \subtype \Delta$}
	 			\AxiomC{$\phi;\Phi \vDash K' \le K $}
	 			\TrinaryInfC{$\Itype{Q}{K}$}
	 			\BinaryInfC{$\Itype{P \parr Q}{K}$}
	 		\end{prooftree}
		 \end{itemize}
	 	As a consequence, by pushing the subtyping rule just before the rule for the typing of guarded processes, we obtain the following derivation for the typing of the canonical form:
	 	\begin{prooftree}
	 		\footnotesize
	 		\AxiomC{$\pi_1,\dots,\pi_n$}
	 		\UnaryInfC{$\forall i,~ \Itype[@][@][\vect{a} : \vect{U}]{S_i}{K}$}
	 		\doubleLine 
	 		\UnaryInfC{$\Itype[@][@][\vect{a} : \vect{U}]{(S_1 \parr \cdots \parr S_n)}{K}$}
	 		\doubleLine 
	 		\UnaryInfC{$\Itype[@][@][\cdot]{\pnu[\vect{a}]}{(S_1 \parr \cdots \parr S_n)}{K}$}
	 	\end{prooftree}
	 	Now, we want to show that there exists $\Gamma'$ (without size types) such that for all guarded process \hbox{$S_i \notin \IT(P)$}, $\Gamma' \vdash S_i$, with $\Gamma'$ giving output type to name in output processes in $\IT(P)$ and input type to name in input processes in $\IT(P)$. We define $\Gamma' = \vect{a} : \vect{U'}$ with:
	 	\begin{itemize}
	 		\item If $U_i = \Ich$ with $\phi;\Phi \vDash I \ge 1$, then $U_i' = \ch[\IU(\vect{T})]$. (See Definition~\ref{d:forgetful} for $\IU$).
	 		\item If $U_i = \Ich$ with $\phi;\Phi \not\vDash I \ge 1$, then if $a_i$ is a top name of a process in $\IT(P)$, we pose $U_i' = \ich[\IU(\vect{T})]$ if it appears only at the top of input processes in $\IT(P)$, and $U_i' = \och[\IU(\vect{T})]$ if it only appears at the top of output processes in $\IT(P)$.  Otherwise, we pose $U_i' = \ch[\IU(\vect{T})]$  
	 		\item If $U_i = \Iich$, then we pose $U_i' = \ich[\IU(\vect{T})]$
	 		\item If $U_i = \Ioch$, then we pose $U_i' = \och[\IU(\vect{T})]$
	 		\item If $U_i = \Iserv$ with $\phi;\Phi \vDash I \ge 1$, then $U_i' = \ch[\IU(\vect{T})]$.
	 		\item If $U_i = \Iserv$ with $\phi;\Phi \not\vDash I \ge 1$, then if $a_i$ is a top name of a process in $\IT(P)$, we pose $U_i' = \och[\IU(\vect{T})]$. (Note that as $\IT(P)$ do not contain replicated input, $a$ can only appear as an output in $\IT(P)$) Otherwise, we pose $U_i' = \ch[\IU(\vect{T})]$.  
	 		\item If $U_i = \Iiserv$, then we pose $U_i' = \ich[\IU(\vect{T})]$.
	 		\item If $U_i = \Ioserv$, then we pose $U_i' = \och[\IU(\vect{T})]$.
	 	\end{itemize} 
	 	$\Gamma'$ satisfies the restriction of Definition~\ref{d:discarddeadlock}. Remark that the only differences between $\Gamma'$ and $\IU(\vect{a} : \vect{U})$ is for names in $\IT(P)$ with types that were originally both input and output. Now we need to show that under the context $\Gamma'$, all the top guarded processes of $P$ not in $\IT(P)$ can be typed. Let us proceed by case analysis. In order to simplify the notation, we use usual generic notation for the guarded processes instead of the notation specified above.
	 	
	 	\begin{itemize}
	 		\item If $\pserv$ is a top guarded processes of $P$. By definition, $\pserv \notin \IT(P)$. Let us consider the typing of $\pserv$. 
	 		
	 		\medskip 
	 		
	 		\leftshiftt{
	 			\footnotesize 
	 			\AxiomC{$\Ietype[@][@][\Gamma,\Delta]{a}{\Iiserv[@][@][K']}$}
	 			\AxiomC{$\Itype[(\phi,\vect{i})][@][\Gamma'', \vect{v} : \vect{T}]{P}{K'}$}
	 			\AxiomC{$\Gamma''$ time invariant}
	 			\AxiomC{$\phi;\Phi \vdash \Idecr[@][@][\phi;\Phi] \subtype \Gamma''$}
	 			\QuaternaryInfC{$\Itype[@][@][\Gamma, \Delta]{\pserv}{I}$}
	 			\AxiomC{$\phi;\Phi \vdash (\vect{a} : \vect{U}) \subtype \Gamma,\Delta; I \le K$}
	 			\insertBetweenHyps{\hskip -2cm}
	 			\BinaryInfC{$\Itype[@][@][\vect{a} : \vect{U}]{\pserv}{K}$}
	 			\DisplayProof}
	 		
	 		\medskip
	 		
	 		As $\Gamma'$ is time invariant, there is no channel type in $\Gamma''$, and all server type must have the shape $\Ioserv[0]$ for some $\vect{i},K,\vect{T}$.
	 		By Lemma~\ref{l:forgetfultyping}, we have:
	 		\[ \leftshift{\IU(\Gamma), \IU(\Delta) \vdash a : \ich[\IU(\vect{T})] \qquad \IU(\Gamma''), \vect{v} : \IU(\vect{T}) \vdash P \qquad \IU(\Gamma) = \Gamma_0, \Gamma_1 \text{ with } \Gamma_0 \subtype \IU(\Gamma'') \qquad \vect{a} : \vect{\IU(T)} \subtype \IU(\Gamma),\IU(\Delta)}  \]
	 		With weakening, all this could give us a proof of $\vect{a} : \vect{\IU(U)} \vdash \pserv$. However, we want a proof of $\Gamma' \vdash \pserv$. As explained before, the only difference between $\Gamma'$ and $\vect{a} : \vect{\IU(U)}$ is for names in $\IT(P)$ that were originally both input and output. Let us track the role of those names. For channel name, they are not useful in the typing of $a$, as $a$ is a server, and they do not appear in $\IU(\Gamma'')$ because $\Gamma''$ is time invariant. Now we only need to work on server name that were originally both input and output but became only output in $\Gamma'$. As $P$ is in normal form, $\pout \notin \IT(P)$. In particular, the server name $a$ is not a top name in $\IT(P)$. As a consequence, changing the type of server name in $\IT(P)$ has no consequence for the typing of $a$. Moreover, as $\Gamma''$ is time invariant, changing input and output servers type to output types has no consequence for the typing of $P$. So, in the end, we have indeed $\Gamma' \vdash \pserv$.  
	 		
	 		\item If $\pin$ is a top guarded processes of $P$ not in $\IT(P)$. Let us consider the typing of $\pin$. 
	 		\begin{prooftree}
	 			\footnotesize
	 			\AxiomC{$\Ietype{a}{\Iich}$}
	 			\AxiomC{$\Itype[@][@][\Idecr]{P}{K'}$}
	 			\BinaryInfC{$\Itype{\pin}{I + K'}$}
	 			\AxiomC{$\phi;\Phi \vdash \vect{a} : \vect{U} \subtype \Gamma; I + K' \le K$}
	 			\BinaryInfC{$\Itype[@][@][\vect{a} : \vect{U}]{\pin}{K}$}
	 		\end{prooftree}
	 		By Lemma~\ref{l:forgetfultyping}, we have:
	 		\[ \IU(\Gamma) \vdash a : \ich[\IU(\vect{T})] \qquad \IU(\Gamma) = \IU(\Idecr), \Delta \qquad \IU(\Idecr) \vdash P \qquad \vect{a} : \vect{\IU(T)} \subtype \IU(\Gamma) \]
	 		So, by weakening we obtain a proof $\vect{a} : \IU(\vect{U}) \vdash \pin$. Again, we want a proof of $\Gamma' \vdash \pin$. As $\pin \notin \IT(P)$, we have $\phi;\Phi \vDash I \ge 1$. As a consequence, the typing of $a$ is not modified in $\Gamma'$. Moreover, all time out channels are erased in $\Idecr$, so they have no incidence on the typing. Finally, the timed out input/output server names are changed to output server names by $\Idecr$, thus we have indeed $\Gamma' \vdash \pin$.
	 		\item If $\pout$ is a top guarded processes of $P$ not in $\IT(P)$. There are two cases, $a$ is a channel name or $a$ is a server name. In both cases, the proof has the same reasoning as the one for $\pin$. 
	 		\item If $\tick.P$ is a top guarded processes of $P$. By definition, $\tick.P \notin \IT(P)$. Then, the proof has the same reasoning as the one for $\pin$. 		
	 	\end{itemize}
	 	This concludes the proof of Lemma~\ref{l:timeoutisdeadlock}.
	\end{proof}
	As a consequence, when given a typed process $P$ is normal form, we can define $\IS = \IT(P)$ and we now that if $\Idisc{P,P'}$, then $P'$ can simulate the strategy of Definition~\ref{d:reductionstrategy} on $P$.
	
	Now we want to show the following theorem:
	\begin{theorem}[Quantitative Subject Reduction]
		If $P$ is in normal form for $\red$, $P \tred Q$ with $Q \ne P$ and $\Itype[@][@][\cdot]{P}{K}$ then, if we pose $\IS = \IT(P)$, we have $\Itype[@][@][\cdot]{Q'}{K'}$ with $\Idisc{Q,Q'}$ and $\phi;\Phi \vDash K' + 1 \le K$. 
	\label{t:quantitativesubjred} 
	\end{theorem}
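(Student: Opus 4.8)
The plan is to work from the canonical form of $P$ given by Lemma~\ref{l:closedtypednormalform2}. Since $P \tred Q$ with $Q \ne P$, Lemma~\ref{l:exhaustivetimereductions} and the remark following it force at least one top-level guarded process to start with a $\tick$, so $Q$ is exactly $P$ with all its top-level ticks removed, and the $\tick$ rule forces $\phi;\Phi \vDash 1 \le K$. Exactly as in the proof of Lemma~\ref{l:timeoutisdeadlock}, I would first push every subtyping rule of the given derivation downwards, so that in a single context $\vect{a} : \vect{U}$ each top-level guarded process $S_i$ of the canonical form is typed $\Itype[@][@][\vect{a} : \vect{U}]{S_i}{K}$ with a subtyping rule as its last step.

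The target is $Q'$, the $\IS$-discarding of $Q$ for $\IS = \IT(P)$, which keeps all servers, all non-timed-out inputs and outputs, and the stripped continuations $R_1, \dots, R_{k'''}$. I would type $Q'$ in the empty context, choosing for the bound names $\vect{a}$ a context $\Gamma'$ that is $\Idecr[\vect{a} : \vect{U}][1]$ (every type advanced by one unit of time), and then re-type each retained guarded process at complexity $\le K-1$: (i) for a stripped continuation $R_j$, pushing subtyping shows $\tick.R_j$ was typed at $K$ from some $\Itype[@][@][\Delta_j]{\tick.R_j}{K_j}$ with $\phi;\Phi \vDash K_j \le K$, obtained by the $\tick$ rule from $\Itype[@][@][\Idecr[\Delta_j][1]]{R_j}{K_j - 1}$, and since $\vect{a} : \vect{U} \subtype \Delta_j$, Lemma~\ref{l:advanceandsubtyping} (at the level of contexts) yields $\Idecr[\vect{a} : \vect{U}][1] \subtype \Idecr[\Delta_j][1]$, so $R_j$ types at $K_j - 1 \le K-1$ in $\Gamma'$; (ii) a retained input or output has guarding time $I$ with $\phi;\Phi \vDash I \ge 1$, so rebuilding its typing rule in the context advanced by one reuses the very same premises on the transmitted expressions (the advances compose: $\Idecr[{\Idecr[\vect{a} : \vect{U}][1]}][I-1]$ equals $\Idecr[\vect{a} : \vect{U}][I]$) and records a complexity lowered by one, hence $\le K-1$. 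Raising every component to $K-1$ by subtyping and recombining with the parallel-composition rule and the name-restriction rule gives $\Itype[@][@][\cdot]{Q'}{K-1}$, so $K' = K-1$ satisfies $\phi;\Phi \vDash K' + 1 \le K$.

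The main obstacle is the re-typing of the servers, since advancing time does not commute with a server's input capability. Each server is kept in $Q'$ and its defining replicated input must be re-typed, which needs $b_i$ to retain an input capability; but its recorded complexity equals its availability time $I$, and when $\phi;\Phi \not\vDash I \ge 1$ the operator $\Idecr[\cdot][1]$ either deletes $b_i$ (a pure input server) or downgrades it to an output server (a full server), destroying precisely that capability. I would resolve this by letting $\Gamma'$ differ from $\Idecr[\vect{a} : \vect{U}][1]$ only on these time-$0$ servers, keeping their original type: input capability is preserved, and because such a server contributes complexity $0 \le K-1$ the bound is untouched (for servers with $\phi;\Phi \vDash I \ge 1$ the ordinary advance lowers the complexity $I$ to $I-1 \le K-1$, and its time-invariant continuation context is unaffected). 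The other processes still type-check under $\Gamma'$ because $\Gamma' \subtype \Idecr[\vect{a} : \vect{U}][1]$: a kept full server with time $0$ is a subtype of the output server produced by the advance via the rule $\Iserv \subtype \Ioserv$, while a kept pure input server with time $0$ carries no output capability and so is used nowhere outside its own definition, allowing it to be re-introduced by weakening (Lemma~\ref{l:weakening}).

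Finally, for the structural requirement $\Idisc{Q,Q'}$ I would verify that $\IT(P)$ meets the conditions of Definition~\ref{d:discarddeadlock} relative to $Q$; these transfer from Lemma~\ref{l:timeoutisdeadlock}, since $\tred$ leaves the top-level servers, inputs and outputs untouched (only stripping ticks), so $\IT(P)$ is still a submultiset of the top guarded processes of $Q$ with the required disjointness of input and output names, and the plain input/output typability of the remaining processes survives because $\tred$ enjoys subject reduction for input/output types. The discarded timed-out processes supply no capability used by the retained ones, by the normal-form disjointness of Lemma~\ref{l:closedtypednormalform2} together with the capability assignment constructed in Lemma~\ref{l:timeoutisdeadlock}, which is exactly what makes the simulation of Lemma~\ref{l:discardsimulation} apply.
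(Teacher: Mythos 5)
Your proposal follows the paper's skeleton closely (canonical form with subtyping pushed onto the guarded components, the same $Q'$, the composed-advance argument $\Idecr[{\Idecr[(\vect{a} : \vect{U})][1]}][(I-1)] = \Idecr[(\vect{a} : \vect{U})][I]$ for retained inputs and outputs, and the transfer of Lemma~\ref{l:timeoutisdeadlock} to get $\Idisc{Q,Q'}$), but there is a genuine gap exactly where the paper's proof does its hardest work: servers whose time $I$ satisfies $\phi;\Phi \not\vDash I \ge 1$. Your dichotomy between ``servers with $\phi;\Phi \vDash I \ge 1$'' and ``time-$0$ servers'' is not exhaustive: not provably $\ge 1$ is strictly weaker than provably equal to $0$. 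Keeping the original type for such a server breaks the proof in two ways. First, the complexity accounting: the server rule forces the complexity of a replicated input to be the time of the server type of its name (and server subtyping cannot lower it, since it requires provably equal times), so your re-typed server has complexity $I$, not $0$; and $\phi;\Phi \vDash I \le K-1$ does not follow from $\phi;\Phi \vDash I \le K$ and $\phi;\Phi \vDash 1 \le K$. Concretely, take $\phi = \{i,k\}$, $\Phi = \{i \le k,\, 1 \le k\}$, and $P$ consisting of a top-level server on a name of time $i$ in parallel with $\tick.\pzero$, typed at overall complexity $k$: the valuation $i = k = 5$ satisfies $\Phi$ but refutes $i \le k-1$, so your construction cannot reach the bound $K' = K - 1$. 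Second, the compatibility claim: for the other retained components to type under your $\Gamma'$ you invoke $\phi;\Phi \vdash \Iserv[I] \subtype \Ioserv[(I-1)]$, but that subtyping requires $\phi;\Phi \vDash I = I-1$, which again holds only when $I$ is provably $0$; a retained input of time $\ge 1$ whose continuation outputs to that server thus fails to re-typecheck.

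The paper's solution differs from yours precisely here: the new assignment $\vect{a} : \vect{U'}$ is \emph{not} $\Idecr[\cdot][1]$ patched on problematic servers, but the result of subtracting one unit (with truncated subtraction) from \emph{every} time while keeping \emph{every} capability, so a full server of time $I$ becomes a full server of time $I-1$ and an input server stays an input server. Then each server re-types at complexity $I-1$, which is $\le K-1$ in all valuations satisfying $\Phi$ (using $\phi;\Phi \vDash 1 \le K$), and $\phi;\Phi \vdash \Iserv[(I-1)] \subtype \Ioserv[(I-1)]$ is immediate, so $\vect{U'}$ is componentwise a subtype of the genuine one-unit advance wherever the latter is defined --- which is all that the retained inputs, outputs and stripped continuations need. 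The residual obligation, absent from your proposal, is that for these servers Lemma~\ref{l:doubletimeadvance} is unavailable (it needs $\phi;\Phi \vDash I \ge 1$), so the server rule's premise $\phi;\Phi \vdash \Idecr[\Gamma_1][(I-1)] \subtype \Gamma''$ for the time-invariant context of the body must be re-established directly; the paper does this by tracking, via Lemma~\ref{l:exhaustivesubtyping}, how each output server type of time $0$ occurring in $\Gamma''$ arises from the new assignment.
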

	\begin{proof}
		By Lemma \ref{l:closedtypednormalform2}, we know the canonical form of $P$. As $P \tred Q$ with $Q \ne P$, we know that $P$ has at least one top guarded process starting with a $\tick$. Let us write:
		\[ P \congr \pnu[\vect{a}] ( \IT(P) \parr P_1 \parr \cdots \parr P_n \parr \tick.R_1 \parr \cdots \parr \tick.R_m )\] 
		With $m \ge 1$. Then, let us pose:
		\[ Q' = \pnu[\vect{a}] ( P_1 \parr \cdots \parr P_n \parr R_1 \parr \cdots \parr R_m )\] 
		By Lemma~\ref{l:exhaustivetimereductions}, we have:
		\[ Q \congr \pnu[\vect{a}] ( \IT(P) \parr P_1 \parr \cdots \parr P_n \parr R_1 \parr \cdots \parr R_m )\]
		Thus, we have indeed $\Idisc{Q,Q'}$ by Lemma~\ref{l:timeoutisdeadlock}. Now, let us consider the typing for the canonical form of $P$ to give a typing for $Q'$. As previously, we consider that subtyping have been pushed to the typing of guarded processes.
		\begin{prooftree}
			\footnotesize
			\AxiomC{$\Itype[@][@][\vect{a} : \vect{U}]{\IT(P)}{K}$}
			\AxiomC{$\forall i,~ \Itype[@][@][\vect{a} : \vect{U}]{P_i}{K}$}
			\AxiomC{$\forall j,~ \Itype[@][@][\vect{a} : \vect{U}]{\tick. R_j}{K}$}
			\doubleLine 
			\TrinaryInfC{$\Itype[@][@][\vect{a} : \vect{U}]{(\IT(P) \parr P_1 \parr \cdots \parr P_n \parr \tick.R_1 \parr \cdots \parr	\tick.R_m)}{K}$}
			\doubleLine 
			\UnaryInfC{$\Itype[@][@][\cdot]{\pnu[\vect{a}](\IT(P) \parr P_1 \parr \cdots \parr P_n \parr \tick.R_1 \parr \cdots \parr \tick.R_m)}{K}$}
		\end{prooftree} 
		First, we can see that because of the rule for $\tick$, we have $\phi;\Phi \vDash K \ge 1$. So, we will show that $\Itype[@][@][\cdot]{Q'}{K - 1}$, and we have indeed $\phi;\Phi \vDash (K-1)+1 \le K$. (Note that if $K = 0$ this last inequation is not true, that is why we first need $\phi;\Phi \vDash K \ge 1$).
		In order to do this, let us decide a new assignment to the name in $\vect{a}$. In order to do this, we take the assignment $\vect{a} : \vect{U}$ and we apply the following function on $\vect{U}$:  
		\begin{itemize}
			\item $\Ich \mapsto \Ich[(I-1)]$. Other channel types follow exactly the same pattern.
			\item $\Iserv \mapsto \Iserv[(I-1)]$. Other server types follow exactly the same pattern.  
		\end{itemize} 
		We denote this new assignment by $\vect{a} : \vect{U'}$. Remark that we have $\vect{a} : \vect{U'} = \Gamma', \Delta'$ with the subtyping $\phi;\Phi \vdash \Gamma' \subtype \Idecr[(\vect{a}: \vect{U})][1]$. Note also that we have the following lemma:
		\begin{lemma}
			$\Idecr = \Idecr[{\Idecr[@][1]}][(I-1)]$ when $\phi;\Phi \vDash I \ge 1$. 
		\label{l:doubletimeadvance}
		\end{lemma}
		The proof is simple as when $\phi;\Phi \vDash I \ge 1$, we have  $\phi;\Phi \vDash J \ge I \Leftrightarrow \phi;\Phi \vDash J \ge 1 \land (J-1) \ge (I-1)$.  
		  Now, let us give a typing for $Q'$.
		\begin{prooftree}
			\footnotesize
			\AxiomC{$\forall i,~ \Itype[@][@][\vect{a} : \vect{U'}]{P_i}{K-1}$}
			\AxiomC{$\forall j,~ \Itype[@][@][\vect{a} : \vect{U'}]{R_j}{K-1}$}
			\doubleLine 
			\BinaryInfC{$\Itype[@][@][\vect{a} : \vect{U'}]{(P_1 \parr \cdots \parr P_n \parr R_1 \parr \cdots \parr	R_m)}{K-1}$}
			\doubleLine 
			\UnaryInfC{$\Itype[@][@][\cdot]{\pnu[\vect{a}](P_1 \parr \cdots \parr P_n \parr R_1 \parr \cdots \parr R_m)}{K-1}$}
		\end{prooftree}
		Where the proofs for the $P_i$ are:
		\begin{itemize}
			\item If $P_i = \pserv$. Then, the original proof for the canonical form of $P$ was:
			
			\medskip 
			
			\leftshiftt{
				\footnotesize 
				\AxiomC{$\Ietype[@][@][\Gamma,\Delta]{a}{\Iiserv[@][@][K']}$}
				\AxiomC{$\Itype[(\phi,\vect{i})][@][\Gamma'', \vect{v} : \vect{T}]{P}{K'}$}
				\AxiomC{$\Gamma''$ time invariant}
				\AxiomC{$\phi;\Phi \vdash \Idecr[@][@][\phi;\Phi] \subtype \Gamma''$}
				\QuaternaryInfC{$\Itype[@][@][\Gamma, \Delta]{\pserv}{I}$}
				\AxiomC{$\phi;\Phi \vdash (\vect{a} : \vect{U}) \subtype \Gamma,\Delta; I \le K$}
				\insertBetweenHyps{\hskip -2cm}
				\BinaryInfC{$\Itype[@][@][\vect{a} : \vect{U}]{\pserv}{K}$}
				\DisplayProof}

			\medskip
			
			Then, an easy case to consider is when $\phi;\Phi \vDash I \ge 1$. In this case, we can give the following typing, by using previous remarks, Lemma~\ref{l:advanceandsubtyping} and Lemma~\ref{l:doubletimeadvance}. Indeed, Lemma~\ref{l:advanceandsubtyping} gives us:
			\[ \Idecr[(\vect{a}: \vect{U})][1] = \Gamma_0, \Delta'' \text{ with } \phi;\Phi \vdash \Gamma_0 \subtype \Idecr[@][1],\Idecr[\Delta][1] \]
			and so we obtain the following proof (without recalling the subtyping):
			
			\medskip 
			
			\leftshiftt{
				\footnotesize 
				\AxiomC{$\Ietype[@][@][{\Idecr[\Gamma][1],\Idecr[\Delta][1]},\Delta',\Delta'']{a}{\Iiserv[(I-1)][@][K']}$}
				\AxiomC{$\Itype[(\phi,\vect{i})][@][\Gamma'', \vect{v} : \vect{T}]{P}{K'}$}
				\AxiomC{$\Gamma''$ time invariant}
				\AxiomC{$\phi;\Phi \vdash \Idecr[@][@][\phi;\Phi] \subtype \Gamma''$}
				\QuaternaryInfC{$\Itype[@][@][{\Idecr[\Gamma][1],\Idecr[\Delta][1]},\Delta',\Delta'']{\pserv}{I-1}$}
				\doubleLine
				\UnaryInfC{$\Itype[@][@][{\Idecr[(\vect{a}: \vect{U})][1],\Delta'}]{\pserv}{K-1}$}
				\doubleLine 
				\UnaryInfC{$\Itype[@][@][\Gamma',\Delta']{\pserv}{K-1}$}
				\DisplayProof}
				
			\medskip
			
			Now we need to consider the more difficult case $\phi;\Phi \not\vDash I \ge 1$. Let us consider that type $T$ assigned to $a$ in $\vect{a} : \vect{U}$. By Lemma~\ref{l:exhaustivesubtyping}, as $\phi;\Phi \vdash T \subtype \Iiserv[@][@][K']$ we have: 
			\[T = \Iserv[J][@][K''][\vect{T'}] \qquad \phi;\Phi \vDash I = J \qquad (\phi,\vect{i});\Phi \vdash \vect{T'} \subtype \vect{T} \qquad (\phi,\vect{i});\Phi \vDash K' \le K'' \]
			or,
			\[T = \Iiserv[J][@][K''][\vect{T'}] \qquad \phi;\Phi \vDash I = J \qquad (\phi,\vect{i});\Phi \vdash \vect{T'} \subtype \vect{T} \qquad (\phi,\vect{i});\Phi \vDash K' \le K'' \]
			In both cases, we can see that the type $T'$ assigned to $a$ in $\vect{a} : \vect{U'}$ is such that we have the subtyping	 $\phi;\Phi \vdash T' \subtype \Iiserv[(I-1)][@][K']$
	
			Now, let us look at what happens to server name in $\Gamma''$ the original proof for the canonical form of $P$. For all axiom $b : \Ioserv[0]$ in $\Gamma''$, we have $b : T \in \vect{a} : \vect{U}$ such that:
			\[ \phi;\Phi \vdash T \subtype T' \qquad \phi;\Phi \vdash \Idecr[T'] \subtype  \Ioserv  \]
			So, with Lemma~\ref{l:exhaustivesubtyping}, we have:
			\[\Idecr[T'] = \Iserv[J][@][K'][\vect{T'}] \qquad \phi;\Phi \vDash J = 0 \qquad (\phi,\vect{i});\Phi \vdash \vect{T} \subtype \vect{T'} \qquad (\phi,\vect{i});\Phi \vDash K' \le K \]
			or
			\[\Idecr[T'] = \Ioserv[J][@][K'][\vect{T'}] \qquad \phi;\Phi \vDash J = 0 \qquad (\phi,\vect{i});\Phi \vdash \vect{T} \subtype \vect{T'} \qquad (\phi,\vect{i});\Phi \vDash K' \le K \]
			Then, by Definition~\ref{d:advancetime}, this gives the following possibilities:
			\[T' = \Iserv[(J+I)][@][K'][\vect{T'}] \qquad \phi;\Phi \vDash J = 0  \qquad (\phi,\vect{i});\Phi \vdash \vect{T} \subtype \vect{T'} \qquad (\phi,\vect{i});\Phi \vDash K' \le K \]
			or
			\[T' = \Iserv[J'][@][K'][\vect{T'}] \qquad \phi;\Phi \not\vDash J' \ge I \qquad \phi;\Phi \vDash J'-I = 0 \qquad (\phi,\vect{i});\Phi \vdash \vect{T} \subtype \vect{T'} \qquad (\phi,\vect{i});\Phi \vDash K' \le K \]
			or 
			\[T' = \Ioserv[J'][@][K'][\vect{T'}] \qquad \phi;\Phi \vDash J'-I = 0 \qquad (\phi,\vect{i});\Phi \vdash \vect{T} \subtype \vect{T'} \qquad (\phi,\vect{i});\Phi \vDash K' \le K \]
			Note that the two first case can be combined by a type with a time $J'$ such that $\phi;\Phi \vDash J'-I = 0$. Then, we can use again Lemma~\ref{l:exhaustivereductions}, and we obtain:
			\[T = \Iserv[J''][@][K''][\vect{T''}] \qquad \phi;\Phi \vDash J''-I = 0 \qquad (\phi,\vect{i});\Phi \vdash \vect{T} \subtype \vect{T''} \qquad (\phi,\vect{i});\Phi \vDash K'' \le K \]
			or 
			\[T' = \Ioserv[J''][@][K''][\vect{T''}] \qquad \phi;\Phi \vDash J''-I = 0 \qquad (\phi,\vect{i});\Phi \vdash \vect{T} \subtype \vect{T''} \qquad (\phi,\vect{i});\Phi \vDash K'' \le K \]
			So, in the assignment $\vect{a} : \vect{U'}$, this type $T$ is sent to a type $T_{new}$ corresponding to $T$ with a time $J'' - 1$ instead of $J''$. It is easy to see that in both cases, we have $\phi;\Phi \vdash \Idecr[T_{new}] \subtype \Ioserv[0]$. So, we can write $\vect{a} : \vect{U'} = \Gamma_1,\Delta_1$ with $\phi;\Phi \vdash \Idecr[\Gamma_1] \subtype \Gamma''$. And we obtain the following typing:
			
			\medskip 
			
			\leftshiftt{
				\footnotesize 
				\AxiomC{$\Ietype[@][@][\Gamma_1,\Delta_1]{a}{\Iiserv[(I-1)][@][K']}$}
				\AxiomC{$\Itype[(\phi,\vect{i})][@][\Gamma'', \vect{v} : \vect{T}]{P}{K'}$}
				\AxiomC{$\Gamma''$ time invariant}
				\AxiomC{$\phi;\Phi \vdash \Idecr[\Gamma_1] \subtype \Gamma''$}
				\QuaternaryInfC{$\Itype[@][@][\Gamma_1,\Delta_1]{\pserv}{I-1}$}
				\doubleLine 
				\UnaryInfC{$\Itype[@][@][\Gamma_1,\Delta_1]{\pserv}{K-1}$}
				\DisplayProof}
			
			\medskip
			
			This concludes this case. 
			
			\item If $P_i = \pin$. Then, the original proof for the canonical form of $P$ was:
			\begin{prooftree}
				\footnotesize 
				\AxiomC{$\Ietype{a}{\Iich}$}
				\AxiomC{$\Itype[@][@][\Idecr, \vect{v} : \vect{T}]{P}{K'}$}
				\BinaryInfC{$\Itype{\pserv}{I + K'}$}
				\AxiomC{$\phi;\Phi \vdash (\vect{a} : \vect{U}) \subtype \Gamma; I + K' \le K$}
				\BinaryInfC{$\Itype[@][@][\vect{a} : \vect{U}]{\pin}{K}$}
			\end{prooftree}
			
			As $P_i \notin \IT(P)$, we have $\phi;\Phi \vDash I \ge 1$. We can give the following typing, by using previous remarks, Lemma~\ref{l:advanceandsubtyping} and Lemma~\ref{l:doubletimeadvance}. Indeed, Lemma~\ref{l:advanceandsubtyping} gives us:
			\[ \Idecr[(\vect{a}: \vect{U})][1] = \Gamma_0, \Delta'' \text{ with } \phi;\Phi \vdash \Gamma_0 \subtype \Idecr[@][1] \]
			and so we obtain the following proof (without recalling the subtyping):
			\begin{prooftree}
				\footnotesize 
				\AxiomC{$\Ietype[@][@][{\Idecr[\Gamma][1]},\Delta',\Delta'']{a}{\Iich[(I-1)]}$}
				\AxiomC{$\Itype[@][@][\Idecr, \vect{v} : \vect{T}]{P}{K'}$}
				\BinaryInfC{$\Itype[@][@][{\Idecr[\Gamma][1],\Delta',\Delta''}]{\pin}{(I-1) + K'}$}
				\doubleLine
				\UnaryInfC{$\Itype[@][@][{\Idecr[(\vect{a}: \vect{U})][1],\Delta'}]{\pin}{K-1}$}
				\doubleLine 
				\UnaryInfC{$\Itype[@][@][\Gamma',\Delta']{\pin}{K-1}$}
			\end{prooftree}
			\item If $P_i = \pout$, then we can do as the previous case for $\pin$. 	
		\end{itemize} 
		Then, we need to type $R_j$. By the remark that $\vect{a} : \vect{U'}$ can be written $\Gamma',\Delta'$ with the subtyping \hbox{$\phi;\Phi \vdash \Gamma' \subtype \Idecr[(\vect{a} : \vect{U})][1]$}, then it can be done by weakening and subtyping. 
		This concludes the proof for Theorem~\ref{t:quantitativesubjred}. 
	\end{proof}
	
	\subsection{Complexity Bound}
	\label{ss:complexitybound}
	
	This short section is to prove the main theorem of this paper. 
	
	\begin{theorem}
		If $\Itype[@][@][\cdot]{P}{K}$ and $P$ reduces to $Q$ by the strategy of Definition~\ref{d:reductionstrategy} with $n$ time reductions, then $\phi;\Phi \vDash K \ge n$.
	\label{t:complexitybound}
	\end{theorem}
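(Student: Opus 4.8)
The plan is to argue by induction on the number $n$ of time reductions performed in the run from $P$ to $Q$ following the tick-last strategy of Definition~\ref{d:reductionstrategy}, using the two subject reduction results as the two engines of the argument: Theorem~\ref{t:nonquantsubjectreduction} to carry the bound $K$ unchanged across the $\red^*$ prefix of each round, and Theorem~\ref{t:quantitativesubjred} to pay one unit of $K$ for each $\tred$ step. For the base case $n = 0$, the strategy halts after performing only $\red$-steps, and there is nothing to prove since $\phi;\Phi \vDash K \ge 0$ holds for free, indices being interpreted in $\NN$.

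For the inductive step with $n \ge 1$, I would unfold one round of the strategy. First $P \red^* P'$ with $P'$ in normal form for $\red$, so by iterating Theorem~\ref{t:nonquantsubjectreduction} along this sequence we still have $\Itype[@][@][\cdot]{P'}{K}$; then, since the computation does not halt, $P' \tred P''$ with $P'' \ne P'$, and $P''$ reduces to $Q$ under the strategy with exactly $n-1$ time reductions. Posing $\IS = \IT(P')$, Theorem~\ref{t:quantitativesubjred} supplies a process $Q'$ with $\Idisc{P'',Q'}$, $\Itype[@][@][\cdot]{Q'}{K'}$ and $\phi;\Phi \vDash K' + 1 \le K$. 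The crucial move is then to transport the remaining $n-1$ time reductions from $P''$ to its discarded companion $Q'$: this is precisely what Lemma~\ref{l:discardsimulation} and Lemma~\ref{l:discardnormalform}, together with the remark following them, guarantee, since the strategy on $Q'$ simulates the strategy on $P''$ step by step, $\red$ for $\red$ and $\tred$ for $\tred$, so that $Q'$ reduces under the strategy with the same number $n-1$ of time reductions. Applying the induction hypothesis to $Q'$, which is typable with bound $K'$ and reduces with $n-1 < n$ time reductions, yields $\phi;\Phi \vDash K' \ge n-1$; combining this with $\phi;\Phi \vDash K'+1 \le K$ gives $\phi;\Phi \vDash K \ge K'+1 \ge n$, as required.

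I expect the genuine difficulty to have already been absorbed into Theorem~\ref{t:quantitativesubjred} and the discarding machinery, so this final induction should be essentially bookkeeping. The subtle conceptual point worth flagging is why one cannot simply run a subject reduction for $\tred$ directly: the timing annotations assert that each communication fires at the earliest admissible instant, so a process whose declared communication time has already lapsed is effectively dead under the tick-last strategy, as witnessed by $\pin[@][][\tick.\pzero] \parr \pout[@][] \parr \tick.\pzero$. The operator $\mathtt{disc}$ prunes exactly these timed-out processes, and the whole argument hinges on the fact, established by Theorem~\ref{t:quantitativesubjred} and the two simulation lemmas, that pruning them both preserves typability with a strictly decreased complexity bound and leaves the number of $\tred$ steps untouched; this is what lets the induction hypothesis be discharged on $Q'$ rather than on $Q$ itself.
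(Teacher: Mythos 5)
Your proposal is correct and takes essentially the same route as the paper: an induction on $n$ (which the paper packages as an auxiliary lemma quantified over all processes, constraints and indices, exactly so that the induction hypothesis can be applied to the discarded process rather than to $Q$ itself), using Theorem~\ref{t:nonquantsubjectreduction} along the $\red^*$ prefix, Theorem~\ref{t:quantitativesubjred} with $\IS = \IT(P')$ to pay one unit of complexity at the $\tred$ step, and Lemma~\ref{l:discardsimulation} together with Lemma~\ref{l:discardnormalform} to transfer the remaining run, with the same number of time reductions, to the discarded companion. Your closing remark about why a direct subject reduction for $\tred$ fails also matches the paper's own motivation for introducing the discarding machinery.
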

	
	\begin{proof}
		We prove the following lemma:
		\begin{lemma}
			For all integer $n$, for all set of constraints $\Phi$ over $\phi$, for all index $K$, for all processes $P,Q$, if $\Itype[@][@][\cdot]{P}{K}$ and $P$ can be reduced to $Q$ by the strategy of Definition~\ref{d:reductionstrategy} with $n$ time reductions, then $\phi;\Phi \vDash K \ge n$. 
		\label{l:complexityboundaux}
		\end{lemma}
		By induction on $n$. 
		\begin{itemize}
			\item \textbf{Case} $n=0$. For any $\phi,\Phi,K$, we have $\phi;\Phi \vDash K \ge 0$, so we obtain directly this case. 
			\item \textbf{Case} $n+1$. By Definition~\ref{d:reductionstrategy}, we have $P \red^* P'$ with $P'$ in normal form for $\red$. Moreover, $P' \tred P_0$, with $P_1 \ne P'$ and $P_1$ can be reduced to $Q$ by the strategy of Definition~\ref{d:reductionstrategy} with $n$ time reductions. By hypothesis, we have
			$\Itype[@][@][\cdot]{P}{K}$. By Theorem~\ref{t:nonquantsubjectreduction}, we have $\Itype[@][@][\cdot]{P'}{K}$. By Theorem~\ref{t:quantitativesubjred}, if we pose $\IS = \IT(P')$, we have $\Itype[@][@][\cdot]{P_1}{K'}$ with $\Idisc{P_0,P_1}$ and $\phi;\Phi \vDash K' + 1 \le K$. 
			
			By Lemma~\ref{l:discardsimulation} and Lemma~\ref{l:discardnormalform}, there exists $Q_1$ such that $\Idisc{Q,Q_1}$ and $P_1$ can be reduced to $Q_1$ by the strategy with $n$ time reductions. By induction hypothesis, we obtain $\phi;\Phi \vDash K' \ge n$. Thus, $\phi;\Phi \vDash K \ge K' + 1 \ge n + 1$. This conludes the proof.  	
		\end{itemize}
	\end{proof}
	
	So we have indeed that the typing of a process can give a bound on its complexity under maximal parallelism. 
	
	\section{Examples}
	\label{s:examples}
	
	We present here an example for this complexity, showing that under maximal parallelism, merge sort has a linear number of comparison. Suppose given a replicated input with name $compare$ doing the comparison between two elements of type $\IB$. We want to count the number of comparison in merge sort. The processes are described in Figure~\ref{f:fusionsort}.   
	
	\begin{figure}
		\centering
		\begin{framed}
			\begin{lstlisting}[style=ocaml,emph={l,a,x,y,q,r,b,z,c,a',b',c',d},mathescape=true]
			!merge($l_0$,$l_1$,a). match $l_0$ with 
				| [] -> $\overline{a}$<$l_1$>
				| x::q -> match $l_1$ with 
					| [] -> $\overline{a}$<$l_0$>
					| y::r -> (nub)($\mathtt{tick}.\overline{compare}$<x,y,b> | b(z). if z then 
							(nuc)($\overline{merge}$<q,$l_1$,c> | c($l_2$).$\overline{a}$<x::$l_2$>) else
							(nuc)($\overline{merge}$<$l_0$,r,c> | c($l_2$).$\overline{a}$<y::$l_2$>)
							)
			
			
			!decompose(l,$a_0$,$a_1$). match l with 
				| [] -> $\overline{a_0}$<[]> | $\overline{a_1}$<[]>
				| x::q -> match q with 
					| [] -> $\overline{a_0}$<x::[]> | $\overline{a_1}$<[]> 
					| y::r -> (nu$b_0$)(nu$b_1$).($\overline{decompose}$<r,$b_0$,$b_1$> | $b_0(l_0)$.$\overline{a_0}$<x::$l_0$> | $b_1(l_1)$.$\overline{a_1}$<y::$l_1$>)
					
					
			!mergesort(l,a). match l with 
				| [] -> $\overline{a}$<[]>
				| x::q -> match q with 
					| [] -> $\overline{a}$<x::[]>
					| y::r -> (nu$b_0$)(nu$b_1$)(nu$c_0$)(nu$c_1$)(nud)($\overline{decompose}$<l,$b_0$,$b_1$> 
						| $b_0(l_0)$.$\overline{mergesort}$<$l_0$,$c_0$> | $b_1(l_1)$.$\overline{mergesort}$<$l_1$,$c_1$> 
						| $c_0(q_0)$.$c_1(q_1)$.$\overline{merge}$<$q_0$,$q_1$,d> | d($q_2$).$\overline{a}$<$q_2$> 
						)
					
			
			
			\end{lstlisting}
		\end{framed}
		\caption{Merge Sort}
		\label{f:fusionsort}
	\end{figure}
	
	We now describe the typing for those processes. To take in account the complexity of the comparison, we do a tick before each call to compare, and we give this server the type $\Iserv[0][][0][\IB,\IB,{\Ioch[0][\Ibool]}]$. Note that we could have equivalently given a complexity $1$ to the server and removed the tick. In order to simplify the notation, we bound the sizes of the lists in mergesort by an exponent of $2$. 
	Let us pose the following context:
	\[ \Gamma := compare : \Iserv[0][][0][\IB,\IB,{\Ioch[0][\Ibool]}], merge : \Iserv[0][(i,j)][i+j][{\Ilis[0][i], \Ilis[0][j],\Ioch[i+j][{\Ilis[0][i+j]}]}], \]
	\[ decompose : \Iserv[0][i][0][{\Ilis[0][2i],\Ioch[0][{\Ilis[0][i]}],\Ioch[0][{\Ilis[0][i]}]}], \]
	\[ mergesort : \Iserv[0][i][2^{i+1}][{\Ilis[0][2^i],\Ioch[2^{i+1}][{\Ilis[0][2^i]}]}] \]
	
	And we pose $\Gamma_o$ the same context with output server instead of input/output, thus $\Gamma_o$ is time unlimited, and we have $\cdot \vdash \Gamma \subtype \Gamma_o$. Moreover, $\Idecr[\Gamma][0] = \Gamma$.  
	
	We want to show that the servers are well typed under this context. In the typing, we omit the typing of expressions when it is obvious, that is to say only syntax directed rule or a subtyping where an input/output becomes only input or only output without changing anything else. 
	
	We start with the server for merge. We pose:
	\[ \Gamma' := \Gamma_o, l_0 : \Ilis[0][i], l_1 : \Ilis[0][j], a : \Ioch[i+j][{\Ilis[0][i+j]}] \]
	\[ \Gamma'' := \Gamma', x : \IB, q : \Ilis[0][i-1], y: \IB, r : \Ilis[0][j-1] \]
	
	\begin{prooftree}
		\footnotesize
		\AxiomC{}
		\UnaryInfC{$\Itype[(i,j)][0 \le 0][\Gamma']{\pout[a][l_1]}{i+j}$}
		\AxiomC{}
		\UnaryInfC{$\Itype[(i,j)][(i \ge 1, j \ge 1)][{\Idecr[\Gamma''][1],b: \Ioch[0][\Ibool]}]{\pout[compare][x,y,b]}{0}$}
		\UnaryInfC{$\Itype[(i,j)][(i \ge 1, j \ge 1)][{\Gamma'',b: \Ioch[1][\Ibool]}]{\tick. \pout[compare][x,y,b]}{1}$}
		\UnaryInfC{$\Itype[(i,j)][(i \ge 1, j\ge 1)][{\Gamma'',b: \Ich[1][\Ibool]}]{\tick. \pout[compare][x,y,b]}{i+j}$}
		\AxiomC{$\pi$}
		\BinaryInfC{$\Itype[(i,j)][(i \ge 1, j \ge 1)][{\Gamma'',b: \Ich[1][\Ibool]}]{\tick. \pout[compare][x,y,b] \parr \pin[b][z][\cdots]}{i+j}$}
		\UnaryInfC{$\Itype[(i,j)][(i \ge 1, j \ge 1)][\Gamma'']{\pnu[b] (\tick . \pout[compare][x,y,b] \parr \pin[b][z][\cdots])}{i+j}$}
		\doubleLine
		\UnaryInfC{$\Itype[(i,j)][(i \ge 1)][{\Gamma', x : \IB, q : \Ilis[0][i-1]}]{\pifl[l_1][{\pout[a][l_0]}][y][r][\cdots]}{i+j}$}
		\BinaryInfC{$\Itype[(i,j)][\cdot][\Gamma']{\pifl[l_0][{\pout[a][l_1]}][x][q][\cdots]}{i+j}$}
		\UnaryInfC{$\Itype[\cdot][\cdot][\Gamma]{\pserv[merge][l_0,l_1,a][\cdots]}{0}$}
	\end{prooftree}
	
	where $\pi$ is the following proof, in which we pose:
	\[ \Delta = \Gamma_o, l_0 : \Ilis[0][i], l_1 : \Ilis[0][j], a : \Ioch[i+j-1][{\Ilis[0][i+j]}], \]
	\[  x : \IB, q : \Ilis[0][i-1], y: \IB, r : \Ilis[0][j-1], b : \Ich[0][\Ibool], z : \Ibool \]
	\[ \Delta' = \Delta, c : \Ich[i+j-1][{\Ilis[0][i+j-1]}] \]
	\[ \Delta'' = \Idecr[\Delta'][(i+j-1)],l_2 : \Ilis[0][i+j-1] \]
	
	\begin{prooftree}
		\footnotesize
		\AxiomC{}
		\UnaryInfC{$\Itype[(i,j)][(i \ge 1, j \ge 1)][\Delta']{\pout[merge][q,l_1,c]}{i+j-1}$}
		\AxiomC{}
		\UnaryInfC{$\Itype[(i,j)][(i \ge 1, j \ge 1)][\Delta'']{\pout[a][x::l_2]}{0}$} 
		\UnaryInfC{$\Itype[(i,j)][(i \ge 1, j \ge 1)][\Delta']{\pin[c][l_2][{\pout[a][x::l_2]}]}{i+j-1}$}
		\BinaryInfC{$\Itype[(i,j)][(i \ge 1, j \ge 1)][{\Delta, c : \Ich[i+j-1][{\Ilis[0][i+j-1]}]}]{\pout[merge][q,l_1,c] \parr \pin[c][l_2][{\pout[a][x::l_2]}]}{i+j-1}$}
		\UnaryInfC{$\Itype[(i,j)][(i \ge 1, j \ge 1)][\Delta]{\pnu[c] (\pout[merge][q,l_1,c] \parr \pin[c][l_2][{\pout[a][x::l_2]}])}{i+j-1}$}
		\AxiomC{$\cdots$} 
		\BinaryInfC{$\Itype[(i,j)][(i \ge 1, j \ge 1)][\Delta]{\pif[z][{\pnu[c] \cdots}][{\pnu[c] \cdots}]}{i+j-1}$}
		\UnaryInfC{$\Itype[(i,j)][(i \ge 1, j \ge 1)][{\Gamma'',b: \Ich[1][\Ibool]}]{\pin[b][z][\cdots]}{i+j}$}
	\end{prooftree}
	And the typing for the other branch of the conditional is similar. 
	
	Now, we type the server for the decompose function. We pose:
	\[ \Gamma' := \Gamma_o, l : \Ilis[0][2i], a_0 : \Ioch[0][{\Ilis[0][i]}], a_1 : \Ioch[0][{\Ilis[0][i]}] \]
	\[ \Gamma'' := \Gamma', x : \IB, q : \Ilis[0][2i-1], y: \IB, r : \Ilis[0][2(i-1)] \]
	\[ \Delta = \Gamma'', b_0 : \Ich[0][{\Ilis[0][i-1]}], b_1 : \Ich[0][{\Ilis[0][i-1]}] \]
	
	\begin{prooftree}
		\footnotesize
		\AxiomC{}
		\UnaryInfC{$\Itype[i][\cdot][\Gamma']{(\pout[a_0][\nil] \parr \pout[a_1][\nil])}{0}$} 
		\AxiomC{}
		\UnaryInfC{$\Itype[i][(i \ge 1)][{\Delta, l_0 : \Ilis[0][i-1]}]{\pout[a_0][x::l_0]}{0}$}
		\UnaryInfC{$\Itype[i][(i \ge 1)][\Delta]{\pin[b_0][l_0][{\pout[a_0][x::l_0]}]}{0}$}
		\AxiomC{$\cdots$}
		\BinaryInfC{$\Itype[i][(i \ge 1)][\Delta]{\pin[b_0][l_0][{\pout[a_0][x::l_0]}] \parr \pin[b_1][l_1][{\pout[a_1][y::l_1]}]}{0}$}
		\doubleLine 
		\UnaryInfC{$\Itype[i][(i \ge 1)][{\Gamma'',b_0 : \Ich[0][{\Ilis[0][i-1]}], b_1 : \Ich[0][{\Ilis[0][i-1]}]}]{\pout[decompose][r,b_0,b_1] \parr \cdots}{0}$}
		\UnaryInfC{$\Itype[i][(i \ge 1)][\Gamma'']{\pnu[b_0] \pnu[b_1] (\pout[decompose][r,b_0,b_1] \parr \cdots)}{0}$}
		\doubleLine
		\UnaryInfC{$\Itype[i][(2i \ge 1)][{\Gamma', x : \IB, q : \Ilis[0][2i-1]}]{\pifl[q][{(\pout[a_0][x::\nil] \parr \pout[a_1][\nil])}][y][r][\cdots]}{0}$}
		\BinaryInfC{$\Itype[i][\cdot][\Gamma']{\pifl[l][{(\pout[a_0][\nil] \parr \pout[a_1][\nil])}][x][q][\cdots]}{0}$}
		\UnaryInfC{$\Itype[\cdot][\cdot][\Gamma]{\pserv[decompose][l,a_0,a_1][\cdots]}{0}$}
	\end{prooftree}
	
	And finally, the typing for the server computing the merge sort. We pose:
	\[ \Gamma' := \Gamma_o, l : \Ilis[0][2^i], a : \Ioch[2^{i+1}][{\Ilis[0][2^i]}] \]
	\[ \Gamma'' := \Gamma', x : \IB, q : \Ilis[0][2^i-1], y: \IB, r : \Ilis[0][2^i-2] \]
	\[ \Delta = \Gamma'', b_0 : \Ich[0][{\Ilis[0][2^{i-1}]}], b_1 : \Ich[0][{\Ilis[0][2^{i-1}]}], \]
	\[ c_0 :\Ich[2^i][{\Ilis[0][2^{i-1}]}], c_1 :\Ich[2^i][{\Ilis[0][2^{i-1}]}] , d: \Ich[2^{i+1}][{\Ilis[0][2^i]}] \]
	
	\begin{prooftree}
		\footnotesize
		\AxiomC{}
		\UnaryInfC{$\Itype[i][\cdot][\Gamma']{\pout[a][\nil]}{2^{i+1}}$} 
		\AxiomC{See below} 
		\UnaryInfC{$\Itype[i][(2^i \ge 2)][\Delta]{ \pout[decompose][l,b_0,b_1] \parr \cdots}{2^{i+1}}$}
		\UnaryInfC{$\Itype[i][(2^i \ge 2)][\Gamma'']{\pnu[(b_0,b_1,c_0,c_1,d)] \cdots}{2^{i+1}}$}
		\doubleLine
		\UnaryInfC{$\Itype[i][(2^i \ge 1)][{\Gamma', x : \IB, q : \Ilis[0][2^i-1]}]{\pifl[q][{(\pout[a][x::\nil])}][y][r][\cdots]}{2^{i+1}}$}
		\BinaryInfC{$\Itype[i][\cdot][\Gamma']{\pifl[l][{\pout[a][\nil]}][x][q][\cdots]}{2^{i+1}}$}
		\UnaryInfC{$\Itype[\cdot][\cdot][\Gamma]{\pserv[mergesort][l,a][\cdots]}{0}$}
	\end{prooftree}
	
	Where we detail here the different proofs for the $5$ processes in parallel.
	
	\begin{prooftree}
		\footnotesize
		\AxiomC{$\Ietype[i][(2^i \ge 2)][\Delta]{l}{\Ilis[0][2*2^{i-1}]}$}
		\AxiomC{$\Ietype[i][(2^i \ge 2)][\Delta]{b_i}{\Ioch[0][{\Ilis[0][2^{i-1}]}]}$}
		\BinaryInfC{$\Itype[i][(2^i \ge 2)][\Delta]{ \pout[decompose][l,b_0,b_1]}{0}$} 
		\UnaryInfC{$\Itype[i][(2^i \ge 2)][\Delta]{ \pout[decompose][l,b_0,b_1]}{2^{i+1}}$}
	\end{prooftree}
	
	\begin{prooftree}
		\footnotesize
		\AxiomC{$\Ietype[i][(2^i \ge 2)][\Delta]{l_0}{\Ilis[0][2^{i-1}]}$}
		\AxiomC{$\Ietype[i][(2^i \ge 2)][\Delta]{c_0}{\Ioch[2^i][{\Ilis[0][2^{i-1}]}]}$}
		\BinaryInfC{$\Itype[i][(2^i \ge 2)][{\Delta, l_0 : \Ilis[0][2^{i-1}]}]{\pout[mergesort][l_0,c_0]}{2^i}$} 
		\UnaryInfC{$\Itype[i][(2^i \ge 2)][{\Delta, l_0 : \Ilis[0][2^{i-1}]}]{\pout[mergesort][l_0,c_0]}{2^{i+1}}$} 
		\UnaryInfC{$\Itype[i][(2^i \ge 2)][\Delta]{\pin[b_0][l_0][{\pout[mergesort][l_0,c_0]}]}{2^{i+1}}$}
	\end{prooftree}
	
	\begin{prooftree}
		\footnotesize
		\AxiomC{$\Ietype[i][(2^i \ge 2)][{\Idecr[\Delta][2^i], q_0 : \Ilis[0][2^{i-1}], q_1 : \Ilis[0][2^{i-1}]}]{d}{\Ich[2^i][{\Ilis[0][2^i]}]}$}  
		\UnaryInfC{$\Itype[i][(2^i \ge 2)][{\Idecr[\Delta][2^i], q_0 : \Ilis[0][2^{i-1}], q_1 : \Ilis[0][2^{i-1}]}]{\pout[merge][q_0,q_1,d]}{2^i}$} 
		\UnaryInfC{$\Itype[i][(2^i \ge 2)][{\Idecr[\Delta][2^i], q_0 : \Ilis[0][2^{i-1}]}]{\pin[c_1][q_1][{\pout[merge][q_0,q_1,d]}]}{2^i}$} 
		\UnaryInfC{$\Itype[i][(2^i \ge 2)][\Delta]{\pin[c_0][q_0][{\pin[c_1][q_1][{\pout[merge][q_0,q_1,d]}]}]}{2^{i+1}}$}
	\end{prooftree}
	
	\begin{prooftree}
		\footnotesize
		\AxiomC{$\Ietype[i][(2^i \ge 2)][{\Idecr[\Delta][2^{i+1}], q_2 : \Ilis[0][2^i]}]{a}{\Ioch[0][{\Ilis[0][2^i]}]}$}
		\UnaryInfC{$\Itype[i][(2^i \ge 2)][{\Idecr[\Delta][2^{i+1}], q_2 : \Ilis[0][2^i]}]{\pout[a][q_2]}{0}$} 
		\UnaryInfC{$\Itype[i][(2^i \ge 2)][\Delta]{\pin[d][q_2][{\pout[a][q_2]}]}{2^{i+1}}$}
	\end{prooftree}
	
	So those servers are well typed under this context. As a consequence, when put in parallel with a call to mergesort with a list of size less than $N = 2^n$, we have a bound on the number of comparisons in the computation, and the bound is $2N$. In order to explicit the results of Theorem~\ref{t:complexitybound}, we show how the strategy of Definition~\ref{d:reductionstrategy} works on a call to mergesort. 
	
	First, we describe a call to merge with a list of size $n$ and a list of size $m$. In order to simplify the notations, we omit to recall the definition of servers, and we do not write them in the reduction as they are invariant. We consider the case $n > 0$ and $m > 0$, otherwise, there is no tick so there is no time reduction. We also consider $x_1 \le y_1$ and the other case is symmetric. 
	
	\begin{align*}
	\pout[merge][{[x_1;\cdots;x_n],[y_1;\cdots;y_m],a}] &\red^* \pnu[b] (\tick. \pout[compare][x_1,y_1,b]) \parr \pin[b][z][\cdots] \\
	& \tred \pnu[b] (\pout[compare][x_1,y_1,b]) \parr \pin[b][z][\cdots] \\
	& \red^* \pnu[b] \pnu[c] (\pout[merge][{[x_2,\cdots,x_n],[y_1,\cdots,y_m],c}] \parr \pin[c][l_2][{\pout[a][x_1 :: l_2]}])
	\end{align*}
	Then, we can show by induction that the number of step reduction is less than $n + m$ and that a call to merge produces at the end an output on the channel (and some name variables that we omit for simplicity). And then, we have:
	\begin{align*}
	\pnu[b] \pnu[c] (\pout[c][{[z_1;\dots;z_{n+m-1}]}] \parr \pin[c][l_2][{\pout[a][x_1 :: l_2]}]) \red^* \pnu[b] \pnu[c] (\pout[a][x_1 :: {[z_1;\cdots;z_{n+m-1}]}])  \\
	\end{align*}
	So we have indeed less than $n+m$ time reductions. 
	In a call to decompose, no time reduction occurs, so the reduction has indeed complexity $0$.
	For the example, we show a call to mergesort on a list of size $4$ and show that we have a bound of $8$ time reductions. The general behaviour can be deduced from this example. 
	
	\begin{align*}
	&\pout[mergesort][{[4;6;7;2]},a] \red^* \pnu[(b_0,b_1,c_0,c_1,d)] (\pout[decompose][{[4;6;7;2]},b_0,b_1] \parr \cdots)  \\
	 &\red^* \pnu[(b_0,b_1,c_0,c_1,d)] (\pout[mergesort][{[4;7],c_0}] \parr \pout[mergesort][{[6;2],c_1}] \parr \pin[c_0][q_0][{\pin[c_1][q_1][\cdots]}] \parr \pin[d][q_2][\cdots] ) \\ 
	 &\red^* \pnu[(\cdots)] (\pout[mergesort][{[4],c_0'}] \parr \pout[mergesort][{[7],c_1'}] \parr \pin[c_0'][q_0][{\pin[c_1'][q_1][\cdots]}] \parr \pin[d'][q_2][\cdots] \\
	&\parr \pout[mergesort][{[6],c_0''}] \parr \pout[mergesort][{[2],c_1''}] \parr \pin[c_0''][q_0][{\pin[c_1''][q_1][\cdots]}] \parr \pin[d''][q_2][\cdots] \parr \pin[c_0][q_0][{\pin[c_1][q_1][\cdots]}] \parr \pin[d][q_2][\cdots] ) \\ 
	&\red^* \pnu[(\cdots)] (\pout[c_0'][{[4]}] \parr \pout[c_1'][{[7]}] \parr \pin[c_0'][q_0][{\pin[c_1'][q_1][\cdots]}] \parr \pin[d'][q_2][\cdots] \\
	&\parr \pout[c_0''][{[6]}] \parr \pout[c_1''][{[2]}] \parr \pin[c_0''][q_0][{\pin[c_1''][q_1][\cdots]}] \parr \pin[d''][q_2][\cdots] \parr \pin[c_0][q_0][{\pin[c_1][q_1][\cdots]}] \parr \pin[d][q_2][\cdots] ) \\
	&\red^* \pnu[(\cdots)] (\pout[merge][{[4],[7],d'}] \parr \pin[d'][q_2][\cdots] \parr \pout[merge][{[6],[2],d''}] \parr \pin[d''][q_2][\cdots] \parr \pin[c_0][q_0][{\pin[c_1][q_1][\cdots]}] \parr \pin[d][q_2][\cdots] ) \\  
	&\red^* \pnu[(\cdots)] (\tick. \pout[compare][{4,7,b}] \parr \pin[b][z][\cdots] \parr \pin[d'][q_2][\cdots] \parr \\
	&\tick. \pout[compare][{6,2,b'}] \parr \pin[b'][z][\cdots] \parr \pin[d''][q_2][\cdots] \parr \pin[c_0][q_0][{\pin[c_1][q_1][\cdots]}] \parr \pin[d][q_2][\cdots] ) \\ 
	\end{align*}
	And this process is in normal form for $\red$. Thus, we can do the time reduction. 
	\begin{align*}
	&\tred \pnu[(\cdots)] (\pout[compare][{4,7,b}] \parr \pin[b][z][\cdots] \parr \pin[d'][q_2][\cdots] \parr \\
	&\pout[compare][{6,2,b'}] \parr \pin[b'][z][\cdots] \parr \pin[d''][q_2][\cdots] \parr \pin[c_0][q_0][{\pin[c_1][q_1][\cdots]}] \parr \pin[d][q_2][\cdots] ) \\ 
	&\red^* \pnu[(\cdots)] (\pout[d'][{[4;7]}] \parr \pin[d'][q_2][\cdots] \parr \pout[d''][{[2;6]}] \parr \pin[d''][q_2][\cdots] \parr \pin[c_0][q_0][{\pin[c_1][q_1][\cdots]}] \parr \pin[d][q_2][\cdots] ) \\
	&\red^* \pnu[(\cdots)] (\pout[c_0][{[4;7]}] \parr \pout[c_1][{[2;6]}] \parr \pin[c_0][q_0][{\pin[c_1][q_1][\cdots]}] \parr \pin[d][q_2][\cdots] ) \\
	&\red^* \pnu[(\cdots)] (\pout[merge][{[4;7],[2;6],d}] \parr \pin[d][q_2][{\pout[a][q_2]}] ) \\
	\end{align*} 
	And, with what we saw before, we can do the merging in at most $4$ time reductions, and finish the computation.
	
	If we want a more generic notion of complexity for programs, we believe it is best to consider the number of communications on channels. A good way to count this is to add a tick after each input (we could also add a tick before each input, or before each output, however in this case, stuck programs would have a complexity 1 even if they do not communicate, that is why we believe it is better to put them after the input). Without detailing the typing, we give the complexity for mergesort in this case. We suppose that the complexity of compare is $K_c$. 
	
	For the merge function, if we call $f(i,j)$ the complexity of a call to merge on input of sizes between $0$ and $i$ and $0$ and $j$, we obtain the following restrictions in the typing:
	\[ f(i,j) \ge 1 \qquad \forall i,j \ge 1, f(i,j) \ge f(i-1,j) + K_c + 3 \qquad \forall i,j \ge 1, f(i,j) \ge f(i,j-1) + K_c + 3  \]  
	Indeed, we always start by a tick, so the complexity for any call is more than $1$. Then, in the computation, when the lists are not empty, we have a total of $3$ input, a call to compare and a call to merge where the size of one of the list decrease by one. So, we could for example take 
	\[ f(i,j) = (3 + K_c)(i+j) + 1 \]
	
	Now, for the decompose function, if we call $f(i)$ the complexity for an input list of size smaller than $2i$, we obtain the following restrictions:
	\[ f(i) \ge 1 \qquad \forall i \ge 1, f(i) \ge 2 + f(i-1)  \]
	Thus, we obtain the complexity 
	\[ f(i) = 1 + 2i \]
	Finally, for the merge sort, we obtain the following restrictions, if we call $f(i)$ the complexity on an input list of size smaller than $2^i$. 
	\[ f(i) \ge 1 \qquad \forall i \ge 1, f(i) \ge 1 + 1 + 2^i + 1 + f(i-1) + 2 + (3 + K_c)2^i + 1 + 1  \]
	Indeed, we start with a tick ($+1$), then a call to decompose ($1 + 2^i$), then we get back the results ($+1$) and call  two mergesort ($f(i-1)$) in parallel, then we get back the results ($+2$) and call merge $((3 + K_c)2^i + 1)$, and finally we get back the result $(+1)$ and send it to the output channel.  
	So, finally, we obtain 
	\[ f(i) \ge 1 \qquad \forall i \ge 1, f(i) \ge  7 + f(i-1) + (4 + K_c)2^i \]
	, thus 
	\[ f(i) = 1 + 7i + (4 + K_c)(2^{i+1} - 2) \]
	Remark that if we only look at the coefficient for $K_c$, we get back the complexity for the number of comparison. Anyway, mergesort is indeed linear under maximal parallelism with this result.

	\section{Work of a Process}
	
	We now want to obtain the total complexity of a process, that is to say the total number of tick without parallelism. We will see that this notion of complexity is far easier to obtain. First, let us define the new time reduction we are interested in $\tickred$. This is defined in Figure~\ref{f:tickreduction}. 
	
	\begin{figure}
		\centering
		\begin{framed}
			\small 
			\begin{center} 
				\AxiomC{$P \tickred P'$}
				\UnaryInfC{$P \parr Q \tickred P' \parr Q$}
				\DisplayProof 
				\qquad 
				\AxiomC{$Q \tickred Q'$}
				\UnaryInfC{$P \parr Q \tickred P \parr Q$}
				\DisplayProof 
				\qquad
				\AxiomC{$P \tickred P'$}
				\UnaryInfC{$\pnu P \tickred \pnu P'$}
				\DisplayProof 
				\qquad 
				\AxiomC{}
				\UnaryInfC{$\tick. P \tickred P$}
				\DisplayProof 
			\end{center}   
		\end{framed}
		\caption{Tick Reduction Rules}
		\label{f:tickreduction}
	\end{figure}
	
	And then, from any process $P$, a reduction to $Q$ is just a sequence of one-step reductions with $\red$ or $\tickred$, and the complexity of this reduction is the number of $\tickred$. Contrary to the tick-last strategy, we do not add any restrictions on this semantic. We will now again design a type system to obtain a bound on the complexity of all possible reductions from $P$. We will see that this type system is more permissive than the previous one, and is a simplification of the previous one. 
	
	\begin{definition}
		The set of types and \emph{base types} are given by the following grammar.
		\begin{align*}
		\IB &:= \Inat \midd \Ilis \midd \Ibool \\
		T &:= \IB \midd \Tch \midd \Tich \midd \Toch \midd \Tserv \midd \Tiserv \midd \Toserv \\	
		\end{align*}
		\label{d:types}
	\end{definition}
	Note that there are no time indication in those types. Then, the subtyping system is given in Figure~\ref{f:worksubtype}. It is very close to the previous one. 
	
	\begin{figure}
		\centering
		\begin{framed}
			\small 
			\begin{center}
				\AxiomC{$\phi;\Phi \vDash I' \le I$}
				\AxiomC{$\phi;\Phi \vDash J \le J'$}
				\BinaryInfC{$\phi;\Phi \vdash \Inat \subtype \Inat[I'][J']$}
				\DisplayProof 
				\qquad 
				\AxiomC{$\phi;\Phi \vDash I' \le I$}
				\AxiomC{$\phi;\Phi \vDash J \le J'$}
				\AxiomC{$\phi;\Phi \vdash \IB \subtype \IB' $}
				\TrinaryInfC{$\phi;\Phi \vdash \Ilis \subtype \Ilis[I'][J'][\IB']$}
				\DisplayProof 
				\\
				\vvskip 
				\AxiomC{}
				\UnaryInfC{$\phi; \Phi \vdash \Ibool \subtype \Ibool$}
				\DisplayProof
				\qquad  
				\AxiomC{$\phi; \Phi \vdash \vect{T} \subtype \vect{U}$}
				\AxiomC{$\phi; \Phi \vdash \vect{U} \subtype \vect{T}$}
				\BinaryInfC{$\phi; \Phi \vdash \Tch \subtype \Tch[\vect{U}]$}
				\DisplayProof
				\qquad 
				\AxiomC{}
				\UnaryInfC{$\phi; \Phi \vdash \Tch \subtype \Tich$}
				\DisplayProof 
				\\
				\vvskip 
				\AxiomC{}
				\UnaryInfC{$\phi;\Phi \vdash \Tch \subtype \Toch$}
				\DisplayProof 
				\qquad 
				\AxiomC{$\phi;\Phi \vdash \vect{T} \subtype \vect{U}$}
				\UnaryInfC{$\phi;\Phi \vdash \Tich \subtype \Tich[\vect{U}]$}
				\DisplayProof 
				\qquad 		
				\AxiomC{$\phi; \Phi \vdash \vect{U} \subtype \vect{T}$}
				\UnaryInfC{$\phi;\Phi \vdash \Toch \subtype \Toch[\vect{U}]$}
				\DisplayProof 
				\\ 
				\vvskip 
				\AxiomC{$(\phi,\vect{i}); \Phi \vdash \vect{T} \subtype \vect{U}$}
				\AxiomC{$(\phi,\vect{i}); \Phi \vdash \vect{U} \subtype \vect{T}$}
				\AxiomC{$(\phi,\vect{i}); \Phi \vDash K = K'$}
				\TrinaryInfC{$\phi; \Phi \vdash \Tserv \subtype \Tserv[@][K'][\vect{U}]$}
				\DisplayProof
				\\ 
				\vvskip 
				\AxiomC{}
				\UnaryInfC{$\phi; \Phi \vdash \Tserv \subtype \Tiserv$}
				\DisplayProof
				\qquad  
				\AxiomC{}
				\UnaryInfC{$\phi;\Phi \vdash \Tserv \subtype \Toserv$}
				\DisplayProof 
				\\
				\vvskip 			
				\AxiomC{$(\phi,\vect{i});\Phi \vdash \vect{T} \subtype \vect{U}$}
				\AxiomC{$(\phi,\vect{i});\Phi \vDash K' \le K$}
				\BinaryInfC{$\phi;\Phi \vdash \Tiserv \subtype \Tiserv[@][K'][\vect{U}]$}
				\DisplayProof 
				\\
				\vvskip 
				\AxiomC{$(\phi,\vect{i});\Phi \vdash \vect{U} \subtype \vect{T}$}
				\AxiomC{$(\phi,\vect{i});\Phi \vDash K \le K'$}
				\BinaryInfC{$\phi;\Phi \vdash \Toserv \subtype \Toserv[@][K'][\vect{U}]$}
				\DisplayProof 
				\qquad 
				\AxiomC{$\phi;\Phi \vdash T \subtype T'$}
				\AxiomC{$\phi;\Phi \vdash T' \subtype T''$}
				\BinaryInfC{$\phi;\Phi \vdash T \subtype T''$}
				\DisplayProof 
				
			\end{center}   
		\end{framed}
		\caption{Subtyping Rules for Sized Types}
		\label{f:worksubtype}
	\end{figure}
	
	And then, the typing for expressions is the same as before, and for processes we take the rules of Figure~\ref{f:worktypeprocess}.
	
	\begin{figure}
		\centering
		\begin{framed}
			\footnotesize 
			\begin{center}
				\AxiomC{}
				\UnaryInfC{$\Itype{\pzero}{0}$}
				\DisplayProof 
				\qquad
				\AxiomC{$\Itype{P}{K}$}
				\AxiomC{$\Itype{Q}{K'}$}
				\BinaryInfC{$\Itype{P \parr Q}{K + K'}$}
				\DisplayProof 
				\\ 
				\vvskip 
				\AxiomC{$\Ietype{a}{\Tiserv}$}
				\AxiomC{$\Itype[(\phi,\vect{i})][@][\Gamma, \vect{\var} : \vect{T}]{P}{K}$}
				\BinaryInfC{$\Itype{\pserv}{0}$}
				\DisplayProof 
				\\ 
				\vvskip  			
				\AxiomC{$\Ietype{a}{\Tich}$}
				\AxiomC{$\Itype[@][@][\Gamma, \vect{\var} : \vect{T}]{P}{K}$}
				\BinaryInfC{$\Itype{\pin}{K}$}
				\DisplayProof 
				\qquad
				\AxiomC{$\Ietype{a}{\Toch}$}
				\AxiomC{$\Ietype{\vect{e}}{\vect{T}}$}
				\BinaryInfC{$\Itype{\pout}{0}$}
				\DisplayProof 
				\\ 
				\vvskip 
				\AxiomC{$\Ietype{a}{\Toserv}$}
				\AxiomC{$\Ietype{\vect{e}}{\Isub[\vect{T}][\vect{i}][\vect{J}]}$}
				\BinaryInfC{$\Itype{\pout}{\Isub[K][\vect{i}][\vect{J}]}$}
				\DisplayProof 
				\qquad
				\AxiomC{$\Itype[@][@][\Gamma, a : T]{P}{K}$}
				\UnaryInfC{$\Itype{\pnu P}{K}$}
				\DisplayProof 
				\\ 
				\vvskip 
				\AxiomC{$\Ietype{e}{\Inat}$}
				\AxiomC{$\Itype[@][(\Phi,I \le 0)][@]{P}{K}$}
				\AxiomC{$\Itype[@][(\Phi,J \ge 1)][{\Gamma, x : \Inat[I-1][J-1]}]{Q}{K}$}
				\TrinaryInfC{$\Itype{\pifn}{K}$}
				\DisplayProof 
				\\
				\vvskip   
				\AxiomC{$\Ietype{e}{\Ilis}$}
				\AxiomC{$\Itype[@][(\Phi,I \le 0)][@]{P}{K}$}
				\AxiomC{$\Itype[@][(\Phi,J \ge 1)][{\Gamma, x : \IB, y : \Ilis[I-1][J-1]}]{Q}{K}$}
				\TrinaryInfC{$\Itype{\pifl}{K}$}
				\DisplayProof 
				\\
				\vvskip 
				\AxiomC{$\Ietype{e}{\Ibool}$}
				\AxiomC{$\Itype{P}{K}$}
				\AxiomC{$\Itype{Q}{K}$}
				\TrinaryInfC{$\Itype{\pif}{K}$}
				\DisplayProof 
				\qquad 
				\AxiomC{$\Itype{P}{K}$}
				\UnaryInfC{$\Itype{\tick.P}{K+1}$}
				\DisplayProof 
				\\
				\vvskip 
				\AxiomC{$\Itype[@][@][\Delta]{P}{K}$}
				\AxiomC{$\phi;\Phi \vdash \Gamma \subtype \Delta$}
				\AxiomC{$\phi;\Phi \vDash K \le K'$}
				\TrinaryInfC{$\Itype{P}{K'}$}
				\DisplayProof
			\end{center}   
		\end{framed}
		\caption{Typing Rules for Processes}
		\label{f:worktypeprocess}
	\end{figure}
	
	With this type system, we obtain as before some lemmas such as weakening (Lemma~\ref{l:weakening}), strengthening (Lemma~\ref{l:strengthening}), index substitution (Lemma~\ref{l:indexsubstitution}) and finally substitution (Lemma~\ref{l:substitution}). With those, we can show with a simpler proof than before the non-quantitative subject reduction (Theorem~\ref{t:nonquantsubjectreduction}). Then, we can show the following theorem:
	
	\begin{theorem}[Work Complexity]
		If $P \tickred Q$ and $\Itype{P}{K}$ then we have $\Itype{Q}{K'}$ with $\phi;\Phi \vDash K' + 1 \le K$. 
	\end{theorem}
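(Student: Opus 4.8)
The plan is to proceed by structural induction on the derivation of $P \tickred Q$, following the four rules of Figure~\ref{f:tickreduction}. As in the proof of Lemma~\ref{l:congrtype}, I would use the fact that, by reflexivity and transitivity of subtyping, any typing derivation can be normalised so that it contains exactly one subtyping rule immediately before each syntax-directed rule; this lets me read off, in each case, the shape of the last syntax-directed rule applied to $P$. The crucial feature that makes the work system behave well here — unlike the span system, which required the elaborate quantitative subject reduction of Theorem~\ref{t:quantitativesubjred} — is that the complexity of a parallel composition is the \emph{sum} $K + K'$ of the two components rather than their maximum. This additivity means that a $\tick$ consumed anywhere inside the term lowers the global complexity by exactly the amount it contributes locally, so the bound propagates straightforwardly.

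For the base case $\tick.P \tickred P$, I would start from $\Itype{\tick.P}{K}$, peel off the single bottom subtyping rule to obtain $\Itype[@][@][\Delta]{\tick.P}{K_0}$ with $\phi;\Phi \vdash \Gamma \subtype \Delta$ and $\phi;\Phi \vDash K_0 \le K$, and observe that this can only come from the tick rule, so $K_0 = K_1 + 1$ with $\Itype[@][@][\Delta]{P}{K_1}$. Setting $K' = K_1$ and re-applying subtyping through $\Gamma \subtype \Delta$ gives $\Itype{P}{K'}$, while $K' + 1 = K_0 \le K$ is exactly the required inequality $\phi;\Phi \vDash K' + 1 \le K$.

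For the inductive step on parallel composition, say $P \parr Q \tickred P' \parr Q$ from $P \tickred P'$, I would again expose the parallel rule: $\Itype[@][@][\Delta]{P}{K_P}$ and $\Itype[@][@][\Delta]{Q}{K_Q}$ with $K_0 = K_P + K_Q$. Applying the induction hypothesis to $P$ under context $\Delta$ yields $\Itype[@][@][\Delta]{P'}{K_P'}$ with $\phi;\Phi \vDash K_P' + 1 \le K_P$. Re-forming the parallel composition gives $\Itype[@][@][\Delta]{P' \parr Q}{K_P' + K_Q}$, and taking $K' = K_P' + K_Q$ we get $K' + 1 = K_P' + K_Q + 1 \le K_P + K_Q = K_0 \le K$; a final subtyping step restores the original context and bound. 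The symmetric right-component rule and the restriction rule $\pnu P \tickred \pnu P'$ are handled identically — for $\nu$, the induction hypothesis is invoked under the extended context $\Delta, a : T$, and the restriction rule preserves the complexity index unchanged.

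I do not expect any genuine obstacle here. There is no congruence rule in Figure~\ref{f:tickreduction}, so no analogue of Lemma~\ref{l:timereductioncongr} is needed, and the additivity of work eliminates the timing mismatch that forced the discarding machinery of Definition~\ref{d:discardtime} in the span case. The only point requiring mild care is the uniform use of the normalised-derivation convention, so that each case presents a single, predictable syntax-directed rule to invert before invoking the induction hypothesis.
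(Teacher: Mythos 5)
Your proof is correct and follows essentially the same route as the paper: induction on the derivation of $P \tickred Q$, where the tick rule handles the base case directly, additivity of complexity in the parallel rule propagates the decrement, and the $\nu$ rule is transparent. The paper states this in three sentences; your version merely makes explicit the subtyping-normalisation bookkeeping that the paper leaves implicit.
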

	
	\begin{proof}
		By induction on $P \tickred Q$. All the cases are direct, since the rule for parallel composition is the sum of complexity and the rule for $\nu$ does not change the complexity. Finally, the rule for tick gives directly this propriety.
	\end{proof}
	
	So, as a consequence we obtain quasi immediately that $K$ is indeed a bound on the complexity of $P$ if we have $\Itype{P}{K}$. As we can see, this complexity is far more easier to obtain than the span as the parallelism is not really taken in account. That is why we think the span is a good notion of complexity if we want to focus on parallelism. 
	
	\section{Other Results} 
	
	\subsection{Work for Mergesort in the Number of Comparisons}
	
	Without detailing the typing derivation, we give the work for mergesort. The detailed derivation looks like the one for the span, and so we only give the equations that the complexity must satisfy. 
	
	For merge, if we call $f(i,j)$ the complexity of a call to merge on a list of size smaller than $i$ and a list of size smaller than $j$, we have:
	\[ \forall i,j \ge 1, f(i,j) \ge 1 + f(i-1,j) \text{ and } f(i,j) \ge 1 + f(i,j-1) \]
	So we can take the complexity $f(i,j) = i + j$. 
	
	Then, the decompose has a complexity $0$ since it does not involve any comparison. Finally, if we denote $f(i)$ the complexity of a call to mergesort on a list of size smaller than $2^i$, we have:
	\[ \forall i \ge 1, f(i) \ge f(i-1) + f(i-1) + 2^i \] 
	So, we can take the complexity $f(i) = i * 2^i$, and we obtain as expected a complexity in $n \mathtt{log}(n)$. 
	
	\subsection{Work for Mergesort for the Number of Communication}
	Then, as before, we could also consider a tick after each input, in order to take in account the communication complexity. As before, let $K_c$ be the complexity of a call to compare, we then obtain the following complexities:
	
	For merge, we have:
	\[ f(i,j) \ge 1 \qquad \forall i,j \ge 1, f(i,j) \ge 1 + K_c + 1 + f(i-1,j) + 1 \text{ and } f(i,j) \ge 1 + K_c + 1 + f(i,j-1) + 1 \]
	So, we obtain $f(i,j) = (3 + K_c)(i + j) + 1$.
	
	For decompose, we have, on a list of size $2i$:
	\[ f(i) \ge 1 \qquad \forall i \ge 1, f(i) \ge 1 + f(i-1) + 2 \]
	So, we obtain $f(i) = 3i + 1$.
	
	Finally, for merge sort we obtain, on a list of size $2^i$:
	\[ f(i) \ge 1 + 3*2^{i-1} + 1 + 1 + f(i-1) + 1 + f(i-1) + 2 + (3 + K_c)2^i + 1 + 1 \]
	\[ f(i) \ge 8 + (4.5 + K_c)2^i + 2f(i-1) \]
	So we can take $f(i) = 8(2^{i+1}-1) + (4.5 + K_c)i2^i$. 
	
	\subsection{Another Way to Merge}
	
	\begin{figure}
		\centering
		\begin{framed}
			\begin{lstlisting}[style=ocaml,emph={l,a,x,y,q,r,b,z,c,a',b',c',d},mathescape=true]
			!merge($l_0$,$l_1$,a). match $l_0$ with 
			| [] -> $\overline{a}$<$l_1$>
			| x::q -> match $l_1$ with 
				| [] -> $\overline{a}$<$l_0$>
				| y::r -> (nub)(nu$c_1$)(nu$c_2$)($\mathtt{tick}.\overline{compare}$<x,y,b> | $\overline{merge}$<q,$l_1$,$c_1$> | $\overline{merge}$<$l_0$,r,$c_2$> | 
				b(z). if z then $c_1$($l_2$).$\overline{a}$<x::$l_2$> else $c_2$($l_2$).$\overline{a}$<y::$l_2$>
				)
			
			\end{lstlisting}
		\end{framed}
		\caption{Alternative Merge}
		\label{f:alternativemerge}
	\end{figure}
	
	An alternative version of merge is given in Figure~\ref{f:alternativemerge}. The idea of this alternative version compared to Figure~\ref{f:fusionsort} is to compute both results of the conditional before even receiving the results of the conditional (which can take a long time if the comparison is costly). With this version, we obtain the following typing for the parallel complexity:
	 
	Let us pose the following context:
	\[ \Gamma := merge : \Iserv[0][(i,j)][1][{\Ilis[0][i], \Ilis[0][j],\Ioch[1][{\Ilis[0][i+j]}]}] \]
	
	And we pose $\Gamma_o$ the same context with output server instead of input/output.
	We also pose:
	\[ \Gamma' := \Gamma_o, l_0 : \Ilis[0][i], l_1 : \Ilis[0][j], a : \Ioch[1][{\Ilis[0][i+j]}] \]
	\[ \Gamma'' := \Gamma', x : \IB, q : \Ilis[0][i-1], y: \IB, r : \Ilis[0][j-1] \]
	\[ \Gamma''' := \Gamma'', b: \Ich[1][\Ibool], c_1: \Ich[1][{\Ilis[0][i+j-1]}], c_2: \Ich[1][{\Ilis[0][i+j-1]}] \]
	\begin{prooftree}
		\footnotesize
		\AxiomC{}
		\UnaryInfC{$\Itype[(i,j)][0 \le 0][\Gamma']{\pout[a][l_1]}{1}$}
		\AxiomC{}
		\UnaryInfC{$\Itype[(i,j)][(i \ge 1, j \ge 1)][{\Idecr[\Gamma'''][1]}]{\pout[compare][x,y,b]}{0}$}
		\UnaryInfC{$\Itype[(i,j)][(i \ge 1, j\ge 1)][\Gamma''']{\tick. \pout[compare][x,y,b]}{1}$}
		\AxiomC{$\pi$}
		\AxiomC{$\pi'$}
		\TrinaryInfC{$\Itype[(i,j)][(i \ge 1, j \ge 1)][\Gamma''']{\tick. \pout[compare][x,y,b] \parr \cdots}{1}$}
		\UnaryInfC{$\Itype[(i,j)][(i \ge 1, j \ge 1)][\Gamma'']{\pnu[b,c_1,c_2] (\tick . \pout[compare][x,y,b] \parr \cdots)}{1}$}
		\doubleLine
		\UnaryInfC{$\Itype[(i,j)][(i \ge 1)][{\Gamma', x : \IB, q : \Ilis[0][i-1]}]{\pifl[l_1][{\pout[a][l_0]}][y][r][\cdots]}{1}$}
		\BinaryInfC{$\Itype[(i,j)][\cdot][\Gamma']{\pifl[l_0][{\pout[a][l_1]}][x][q][\cdots]}{1}$}
		\UnaryInfC{$\Itype[\cdot][\cdot][\Gamma]{\pserv[merge][l_0,l_1,a][\cdots]}{0}$}
	\end{prooftree}
	
	where $\pi$ is the proof that the calls to merge have complexity $1$ (direct by the complexity given in $\Gamma$)
	and $\pi'$ is the following proof:
	
	\begin{prooftree}
		\footnotesize
		\AxiomC{}
		\UnaryInfC{$\Itype[(i,j)][(i \ge 1, j \ge 1)][{\Idecr[\Gamma'''][1]}, l_2 : {\Ilis[0][i+j-1]}]{\pout[a][x::l_2]}{0}$} 
		\UnaryInfC{$\Itype[(i,j)][(i \ge 1, j \ge 1)][{\Idecr[\Gamma'''][1]}]{\pin[c_2][l_2][{\pout[a][x::l_2]}]}{0}$}
		\AxiomC{$\cdots$} 
		\BinaryInfC{$\Itype[(i,j)][(i \ge 1, j \ge 1)][{\Idecr[\Gamma'''][1]}]{\pif[z][{\pin[c_1][l_2][{\pout[a][x::l_2]}]}][{\cdots}]}{0}$}
		\UnaryInfC{$\Itype[(i,j)][(i \ge 1, j \ge 1)][\Gamma''']{\pin[b][z][\cdots]}{1}$}
	\end{prooftree}
	And the typing for the other branch of the conditional is similar. So, in the end we got a complexity of one. This is because if you consider the reduction of this term, a lot of comparison (exponential in the size of the input list) are done in the same time in parallel. Another possibility would be to do exhaustively all the comparisons between the two list in parallel without repetition and then merging the list, again with a parallel complexity of one if done correctly. This way of doing things makes sense if the cost of a comparison is really huge and we have a lot of processors. However, if we consider the work complexity, we obtain an exponential complexity, and in the case of doing all comparison we would obtain a square complexity. In practice, depending on the size of the input or the number of processors, one or the other version is better. The thing is that it is important to take in consideration those two notions of complexity.  
	 
	\bibliography{Bibliography}
	\bibliographystyle{plain} 
	\end{document}